%% file: main2.tex
\documentclass[11pt]{article}

\def\bland{0} 

\input{packages}
\input{macros}

\def\anonymous{0} 
\titlespacing{\paragraph}{0pt}{1ex}{1em}

\begin{document}

\title{Interactive Proofs of Proximity for Model Evaluation}

\ifnum\anonymous=1
\author{}
\renewcommand{\nikolas}[1]{}
\renewcommand{\tamara}[1]{}
\renewcommand{\geoffroy}[1]{}
\else
\author[1]{Geoffroy Couteau}
\author[1]{Nikolas Melissaris}
\author[1,2]{Tamara Paris}

\affil[1]{Université Paris Cité, CNRS, IRIF}
\affil[2]{Université Paris Cité, McGill University}

\begingroup
\renewcommand\thefootnote{\arabic{footnote}}
\footnotetext[1]{\email{{couteau, nikolas}@irif.fr}}
\footnotetext[2]{\email{tamara.paris@mail.mcgill.ca}}
\endgroup
\fi

\date{}

\maketitle

\begin{abstract}
\input{abstract}
\end{abstract}

\clearpage
\enlargethispage{\baselineskip}
\tableofcontents
\clearpage

\section{Introduction}
\label{sec:intro}
\input{intro}

\section{Preliminaries}
\label{sec:prelims}
\input{prelims}

\section{Optimal dsIPP for Hamming Weight}
\label{sec:hamming_weight}
\input{hamming_weight}

\section{dsIPPs for Distribution-Weighted Hamming Weight} 
\label{sec:distribution_weighted_hamming_weight}
\input{unified-4-5}

\input{weighted_hamming_weight}

\input{corrupted_queries}

\section{Applications}
\label{sec:applications}
\input{applications}

\section*{AI disclosure}
\addcontentsline{toc}{section}{AI disclosure}
\input{disclosure}

\bibliographystyle{alpha}
\bibliography{abbrev0, crypto, manual_refs}

\appendix

\section{The Banded Protocol for Evaluator Disagreement}
\label{app:banded-protocol}
\input{banded_protocol}

\section{Optimality of the Banded Protocol}
\label{app:optimality-disagreement}
\input{optimality_disagreement}

\end{document}

%% file: macros.tex
\newtheorem{theorem}{Theorem}[section]
\newtheorem{lemma}[theorem]{Lemma}
\newtheorem{proposition}[theorem]{Proposition}
\newtheorem{corollary}[theorem]{Corollary}

\theoremstyle{definition}
\newtheorem{definition}[theorem]{Definition}

\theoremstyle{remark}
\newtheorem{remark}[theorem]{Remark}

\newcommand{\eps}{\varepsilon}
\newcommand{\epsc}{\eps_{\mathrm{c}}}
\newcommand{\epsf}{\eps_{\mathrm{f}}}
\newcommand{\wt}{\mathrm{wt}}
\newcommand{\HW}{\mathsf{HW}}
\newcommand{\dwHW}{\mathsf{dwHW}}
\newcommand{\TV}{\mathsf{TV}}
\newcommand{\dsIPP}{\mathsf{dsIPP}}
\newcommand{\IPP}{\mathsf{IPP}}
\newcommand{\KL}{D_{\mathrm{KL}}}
\newcommand{\Ber}{\mathrm{Ber}}
\newcommand{\Prover}{\mathcal{P}}
\newcommand{\Verifier}{\mathcal{V}}
\newcommand{\Otilde}{\widetilde{O}}
\newcommand{\N}{\mathbb{N}}
\newcommand{\dist}{\Delta}
\newcommand{\PiSV}{\Pi_{\mathrm{SV}}}
\newcommand{\PiWM}{\Pi_{\mathrm{WM}}}
\definecolor{pastelgreen}{rgb}{0.47, 0.87, 0.47}
\definecolor{pastelviolet}{rgb}{0.8, 0.6, 0.79}
\definecolor{emailgreen}{RGB}{11,102,35}
\newcommand{\email}[1]{\href{mailto:#1}{\textcolor{emailgreen}{\texttt{\nolinkurl{#1}}}}}
\newcommand{\nikolas}[1]{%
  \todo[
    inline,
    nolist,
    bordercolor=pastelviolet!120,
    backgroundcolor=pastelgreen!30
  ]{%
    Nikolas:
    #1%
  }%
}
\newcommand{\tamara}[1]{\todo[inline,bordercolor=violet!120,backgroundcolor=pastelviolet!30]{Tamara: #1}}

\newcommand{\geoffroy}[1]{%
  \todo[
    inline,
    nolist,
    bordercolor=pastelviolet!120,
    backgroundcolor=yellow!30
  ]{%
    Geoffroy:
    #1%
  }%
}

\newlist{protodesc}{description}{1}
\setlist[protodesc]{font=\sffamily,wide}

\NewTColorBox{protobox}{ O{} m }{
    enhanced,
    title={\figurename~\thefigure: #2},
    attach boxed title to top left={
        xshift=3mm,
        yshift*=-3mm
    },
    colback=white,
    colframe=black!75,
    fonttitle=\bfseries,
    colbacktitle=black!10!white,
    coltitle=black,
    #1
}

\NewDocumentEnvironment{protofig}{s O{} m m}{
    \IfBooleanTF{#1}{
        \FloatBarrier
        \refstepcounter{figure}
        \label{#4}
        \begin{protobox}[breakable,#2]{#3}
    }{
        \begin{figure}[tbp]
        \refstepcounter{figure}
        \label{#4}
        \begin{protobox}[#2]{#3}
    }
}{
    \IfBooleanTF{#1}{
        \end{protobox}
        \FloatBarrier
    }{
        \end{protobox}
        \end{figure}
    }
}

%% file: abstract.tex
We study \emph{interactive proofs of proximity (IPP) for model evaluation}: a resource-limited verifier interacts with an untrusted prover, typically the model owner, to certify statistical properties of a model under an unknown input distribution. Our formulation is shaped by the constraints of practical evaluation: it separates sampling the input distribution from querying the model and evaluating its output, allowing their costs and access patterns to be treated independently; it distinguishes real audit data (black-box sampling) from generated data (chosen-randomness, or gray-box, access to the sampler); and it allows the prover and the verifier to score outputs with different evaluators, as happens when scores come from human or judge models. We focus on doubly-sublinear $\IPP$s, in which both the verifier and the designated honest prover use sublinear resources, and on (weighted) Hamming weight properties, which capture the expectation of a Boolean evaluation rule under an unknown distribution and, consequently, a broad range of model-evaluation statistics.

As a first step, we give a tolerant $\dsIPP$ for ordinary Hamming weight. For completeness and soundness radii $\epsc <\epsf$ and gap $g=\epsf-\epsc$, its logarithmic-round instantiation uses $\Otilde(1/g)$ verifier queries and $O(1/g^2)$ honest-prover queries, improving the cubic dependence of Amir, Goldreich, and Rothblum [ITCS 2025]. We prove matching query lower bounds up to polylogarithmic factors.

We then study distribution-weighted Hamming weight under several access models. Under black-box sampling, the verifier uses $\Theta(1/g^2)$ samples but only $\Otilde(1/g)$ evaluations, and we show that the quadratic sample complexity is necessary in the interior regime. With chosen-randomness access to a sampler, the problem reduces to ordinary Hamming weight, giving $\Otilde(1/g)$ verifier calls and evaluations. When the two parties' evaluators may disagree arbitrarily on a $\rho$-fraction of the distribution and by up to $\gamma$ elsewhere, we give protocols that remain doubly sublinear whenever $g$ exceeds twice the mean mismatch $\kappa = \rho + (1-\rho)\gamma$. Finally, we show how our technical results can improve the efficiency of model evaluation in natural motivating scenarios by shifting the bulk of the evaluation burden to the model owner while letting any number of auditors verify claims cheaply; we apply them to auditing criteria including accuracy, group fairness, calibration, harmlessness, usefulness, and average-case robustness.




%% file: intro.tex
Interactive Proofs of Proximity ($\IPP$s), first introduced by \cite{fast_approximate_probabilistically_2004} and then further conceptualized by \cite{STOC:RotVadWig13}, enable a verifier to efficiently distinguish inputs in a language from inputs far from the language by interacting with an untrusted, but more powerful, prover. This extends the classical \emph{property testing} framework \cite{10.1145/285055.285060,10.1137/S0097539793255151}, by allowing the verifier to delegate part of the computation and queries to the untrusted prover. More recently, \cite{ITCS:AmiGolRot25} initiated the study of doubly-sublinear Interactive Proofs of Proximity ($\dsIPP$s) in which the prover query complexity is also sublinear and not much larger than testing.

$\IPP$s naturally apply to the field of machine learning, where querying the input domain can be costly and where the empirical nature of the evaluations is particularly well suited to probabilistic proximity guarantees. This has already been highlighted in several previous works, including \cite{ITCS:GRSY21} who introduced the notion of \emph{PAC verification}, which leverages $\IPP$s to verify if a hypothesis (typically a machine learning model) is a good hypothesis among its class, and \cite{FOCS:HerRot24}, who generalized the notion of interactive proof for distribution properties to a large class of properties, including non-label-invariant properties.

Our work builds on these previous efforts by studying \textit{interactive proofs of proximity for model evaluation}, which we conceptualize as tolerant $\dsIPP$s that, given query access to a function $f$ and sample access to a distribution $D$, allow one to verify that the distribution of $f(X)$ for $X \sim D$ is close to a certain distribution property. We propose several protocols across different access models for the distribution-weighted Hamming Weight problem ($\dwHW$), from which many machine learning evaluation measures can be derived.
We do so by first constructing a $\dsIPP$ for the standard Hamming Weight problem that is optimal up to polylogarithmic factors, thereby improving on the upper bounds of \cite{ITCS:AmiGolRot25} (see Sec.~\ref{sec:hamming_weight}). Then, building on our previous result, we study $\dsIPP$s for $\dwHW$ under several access models (see Sec.~\ref{sec:distribution_weighted_hamming_weight}). We first consider black-box sampling and chosen-randomness access to a sampler, and then focus on the setting in which the prover and verifier query two different evaluation functions. Finally, we explain how to build derived $\dsIPP$ protocols for numerous widely used machine learning evaluation measures, and which challenges in machine learning inform the access models studied (see Sec.~\ref{sec:applications}). 

\subsection{Our Focus: dsIPPs for Model Evaluation}

\paragraph{Our Setting.} Let $D$ be a distribution over an input space $\mathcal{X}$. The exact description of $D$ is unknown to the verifier and prover, however, they can both sample from the distribution. The prover and verifier can query a model $M:\mathcal{X} \to \mathcal{R}$, which can represent either a supervised or an unsupervised machine learning model. They can also query a labeling function $\phi: \mathcal{X} \times \mathcal{R} \to \mathcal{Y}$, with $\mathcal{R}$ the output space, and $\mathcal{Y}$ the space of evaluation labels. The labeling function serves as an intermediate step in the machine learning evaluation procedure and may encode different kinds of evaluation information: for example, it can indicate how close the output is to a ground truth label, it can associate a safety score to it, or it can categorize input-output pairs in unordered categories. Even if we represent $\phi$ as a function, $\phi$ may represent non-computational processes, such as human labeling and physical-world measurements in the context of human-computer interaction. On the other hand, the labeling function $\phi$ may also directly return the output if no intermediate labeling is needed for the evaluation. Finally, the evaluation function $f_{M,\phi}: \mathcal{X} \to \mathcal{Y}$ is defined by $f_{M,\phi}(x)=\phi(x, M(x))$. In the rest of the paper, we use $f_{M,\phi}$ and $f$ interchangeably.

We conceptualize the \emph{interactive proofs of proximity for model evaluation} as tolerant $\dsIPP$s that allow the prover to convince the verifier that the distribution $D_f$, defined by $D_f(y)=\Pr_{X\sim D}[f(X)=y], \forall y\in \mathcal{Y}$, is close to a given property $\Pi$. We focus on \emph{doubly-sublinear} proofs because sampling $D$ and querying\footnote{Although $f$ may involve non-computational processes through $\phi$, we refer to executing $f$ as ``querying'' throughout the paper, in keeping with the standard terminology used in the $\IPP$ literature. However, we acknowledge that the process represented by $f$ may be very different from what the term ``querying'' suggests.} $f$ are often also costly for the prover, which is typically the model owner in the machine learning context. Moreover, because these operations can be costly, the prover itself often only has an approximate knowledge of whether $D_f$ satisfies $\Pi$. We therefore focus on \emph{tolerant} proofs, in which completeness holds for all distributions that are $\epsc$-close to $\Pi$ for a proximity parameter $\epsc \geq 0$.

The flexibility of $\phi$ allows us to capture a wide range of evaluation statistics, from standard accuracy measures in supervised learning to quantities obtained through real-world experiments, such as how well a given population understands text generated by the model. The nature of the function $\phi$ affects not only the cost of the function $f_{M,\phi}$, but also its reliability. This leads us to discuss the access models in more detail.

\paragraph{Two Access Models: Sampling and Querying.}

Compared to previous work on $\IPP$s for distribution properties, the main novelty of our approach to $\dsIPP$s lies in the access models. If we consider that the parties have sampling access without the ability to control the seed and that the cost of a query is negligible or included in the sampling cost, then the resulting proof model is analogous to the framework of interactive proofs for general distribution properties introduced in \cite{FOCS:HerRot24}. However, in the context where the prover also uses sublinear samples, \cite{FOCS:HerRot25} demonstrate a negative result for many natural distribution properties, including testing whether the distribution is uniform and specifying its $L_k$ norm: non-trivial $\dsIPP$s for these properties are impossible under this access model. In this work, we pursue an application-specific perspective to motivate new access models that have not been previously considered in the $\IPP$ literature, both to account separately for the cost of querying the model and to consider pragmatic workarounds for the negative results of \cite{FOCS:HerRot25}:
\begin{quote}
    \textit{Which access models reflect the reality of machine learning evaluation? Under which of these access models is it possible to construct a non-trivial $\dsIPP$ for $\dwHW$?}
\end{quote}
In machine learning, collecting a sample on which to evaluate a model (i.e., sampling $D$), and querying the model and labeling its input-output (i.e., querying $f_{M,\phi}$) are distinct actions that can incur very different costs depending on the context. For example, sampling synthetic data is less costly than sampling ``in-the-wild'' data, and evaluating the output is less costly when it only involves computations than when it requires manual labeling by experts. Therefore, a $\dsIPP$ that significantly reduces the query complexity while maintaining a sample complexity close to that of a tester is still useful in settings where querying is significantly more costly than sampling.

\paragraph{Sample Access Models: Black-Box vs. Gray-Box.} Beyond cost, the sample access model itself may also differ from one application context to another. We study two of these possible access models: \emph{black-box} and \emph{gray-box} sampling.

\emph{Black-box sampling} refers to collecting a fresh independent data point $X \sim D$. The caller cannot choose or inspect the randomness used to generate $X$. This access model typically represents the collection of data ``in-the-wild'', such as recording user-chatbot conversations or collecting student records from universities. Of course, real-world data collection conditions are imperfect, making it very difficult to guarantee fresh, independent samples from $D$. In particular, cases have been reported of model owners overfitting to the biased distribution used during the evaluation, ending up with non-generalizable results\footnote{In practice, overfitting to the evaluation dataset can occur even in the absence of a biased evaluation distribution: the model owner only needs to know the dataset itself. However, in this work, we assume that the data collection is performed after the training of the model $M$, and that $M$ is not modified at any point during the process. Thus, the model owner does not have prior knowledge of the data points that will be sampled by the verifier. Under these assumptions, constructing the evaluation dataset from i.i.d. samples from $D$ is sufficient to prevent the model from overfitting to the evaluation samples.}~\cite{leaderboard_illusion_2025}. Nevertheless, this abstraction can still be useful in many practical settings where the data collection is performed directly at the source of the data. For example, if all uses of a model are publicly observable and $D$ represents the distribution of these uses, then it is reasonable to abstract the data collection process using the black-box sampling model.

\emph{Gray-box sampling} refers to using a deterministic sampling oracle $G\colon\{0,1\}^{R}\longrightarrow \mathcal{X}$ such that $G(U_R)\sim D$, where $U_R$ is uniform over $\{0,1\}^{R}$. This access model applies naturally to the generation of synthetic data. The verifier and the prover can use the same algorithm to generate the data with a shared random tape of their choice. The random tape can be selected through a trusted third party or through a \emph{nothing-up-my-sleeve} procedure, which leverages public and hard-to-guess values (e.g., lottery outcome). On the other hand, the purpose of an evaluation is rarely to assess the model on the synthetic distribution itself. We discuss this limitation in more detail in Section~\ref{sec:applications}.

\paragraph{Query Access Models: Evaluator Disagreement.} When the process abstracted by $\phi$ is not entirely
computational, the verifier and the prover share instructions on how to execute the process $\phi$. However, even if these instructions are well-detailed, the prover and verifier may not always agree on the result of the execution. For example, in contexts requiring manual labeling, the verifier and the prover may assign different labels to outputs in ambiguous cases. This is particularly common in evaluation domains in which the existence of a ground truth is debated, including safety, ethics and cultural appropriateness evaluations. 

To address this concern, we consider query access models in which the verifier and the prover query two different functions $f_\Verifier$ and $f_\Prover$ that have bounded $(\rho,\gamma)$-mismatch with respect to $D$, meaning there is a subset $B\subset\mathcal{X}$ such that $D(B)\leq \rho$ and $|f_\Prover(x)-f_\Verifier(x)|\leq \gamma$ for every $x \notin B$. In other words, except for the small proportion of inputs in $B$, the outputs of $f_\Prover$ and $f_\Verifier$ are close. However, no restriction is imposed on the outputs in $B$. This captures the cases where the verifier and the prover agree on most samples but, in rare cases, strongly disagree on the resulting output of $\phi$.

\subsection{Overview of Results}

We describe several protocols which can be used to audit properties of machine learning models in frameworks that capture the ability of auditors to interact with a powerful but untrusted model owner. All our protocols achieve quadratic gains in their resource access for the verifier compared to the resource-cost of auditing the same property without help from the model owner (matching the typical benefits of doubly-sublinear interactive proofs of proximity) in a framework designed to abstract out the core features of real-world model auditing; we complement our study with lower bounds capturing the limits of this framework.

\subsubsection{Hamming Weight} 

We first start with an intermediate result: a construction of a $\dsIPP$ for the standard Hamming Weight problem ($\HW$) which is optimal up to polylogarithmic factors. Beyond providing an answer to the open problem raised by \cite{ITCS:AmiGolRot25}, this intermediate result extends nicely to the distribution-weighted Hamming Weight, which is our ultimate goal. 

The idea of our protocol is simple: given $g$ the tolerance gap (resp. $\eps$, in the non-tolerant setting), the verifier sends to the prover a description of $m=O(1/g^2)$ (resp. $m=O(1/\eps^2)$) positions of the input string $x$, then the prover computes the Hamming weight of the sampled virtual string $y$ composed of these $m$ positions and sends it to the verifier. After verifying that the sent weight divided by $m$ is close to the claimed one divided by $n$, the verifier and prover run as a subprotocol any $\IPP$ protocol for $\HW$, with the substring $y$ as input. This leads to the following theorem:

\begin{theorem}[$\dsIPP$ for $\HW$, informal version of Theorem~\ref{thm:dsIPP_hw}] Let $\epsc$ and $\epsf$ be the proximity parameters such that $\epsf > \epsc \geq 0$, let $g:=\epsf-\epsc$ be the tolerance gap, and let $n$ be the input length for the Hamming Weight problem ($\HW$). Let $\Pi_{\mathrm{HW}}$ be an $r$-round $\IPP$ for $\HW$ with verifier's query complexity $\mathrm{Q_{\Pi_{\mathrm{HW}}}}(n)$ and communication complexity $\mathrm{C_{\Pi_{\mathrm{HW}}}}(n)$. Given $m=O(1/g^2)$, there exists a tolerant $(r+1)$-round $\IPP$ with verifier's query complexity in $\mathrm{Q_{\Pi_{\mathrm{HW}}}}(m)$, prover's query complexity in $m=O(1/g^2)$ and communication complexity in $O(\log n) + \mathrm{C_{\Pi_{\mathrm{HW}}}}(m)$. This $\IPP$ is therefore doubly sublinear when $1/g^2=o(n)$.
\end{theorem}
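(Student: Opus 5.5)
The plan is to analyze the sample-then-delegate protocol sketched just before the statement.

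\textbf{Protocol and claim.} Let the claimed relative weight be $w$, i.e.\ the claim is that $\wt(x)/n$ is $\epsc$-close to $w$. The verifier samples $m=O(1/g^2)$ indices $i_1,\dots,i_m\in[n]$ uniformly and independently. It sends them to the prover; if needed, it instead sends a short seed of a $k$-wise independent family, and we return to this below. The sampled indices define the virtual string $y=(x_{i_1},\dots,x_{i_m})$. The honest prover queries all $m$ positions and sends $\tilde w=\wt(y)$. The verifier rejects unless $|\tilde w/m-w|\le \epsc+g/2$. Otherwise both parties run $\Pi_{\mathrm{HW}}$ on $y$, with claimed weight $\tilde w$ and a non-tolerant proximity parameter of order $g/4$. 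Each query of $\Pi_{\mathrm{HW}}$ to $y_j$ is answered by the verifier querying $x_{i_j}$.

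\textbf{Costs.} This gives $r+1$ rounds. The verifier makes $Q_{\Pi_{\mathrm{HW}}}(m)$ queries and the prover makes $m$. Communication is the index description plus $\log m$ bits for $\tilde w$ plus $C_{\Pi_{\mathrm{HW}}}(m)$. To get the stated $O(\log n)$ overhead, the explicit list of $m\log n$ bits has to be replaced. I would use pairwise-independent indices $i_j=a j+b \bmod p$ over a prime $p\approx n$; a Chebyshev bound suffices, since only a constant error probability is needed. The parties then exchange just $(a,b)$.

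\textbf{Completeness and soundness.} Chebyshev (or Hoeffding in the fully independent version) gives $|\wt(y)/m-\wt(x)/n|\le g/4$ with probability at least $9/10$ for $m=O(1/g^2)$.
\begin{itemize}
\item \emph{Completeness.} If $|\wt(x)/n-w|\le\epsc$, then $\tilde w=\wt(y)$ passes the check. The completeness of $\Pi_{\mathrm{HW}}$ on the true claim $\tilde w$ then finishes the argument.
\item \emph{Soundness.} Suppose $|\wt(x)/n-w|>\epsf=\epsc+g$. On the good sampling event, any $\tilde w$ that passes the check satisfies $|\tilde w/m-\wt(y)/m|>g/2-g/4=g/4$. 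So $y$ is $g/4$-far from having weight $\tilde w$, and the soundness of $\Pi_{\mathrm{HW}}$ rejects with constant probability. The probabilities are combined by a union bound and amplified by repetition if needed.
\end{itemize}

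\textbf{Main obstacle.} The sampling is the easy part. The delicate point is that $\Pi_{\mathrm{HW}}$ is invoked as a black box on $y$ with the non-tolerant, exact-weight notion of distance, with the verifier simulating oracle access to $y$. We must check that this simulation is faithful: a query to $y_j$ must cost exactly one query to $x$, and repeated indices should cause no issue. We must also fix the thresholds so that the gap $g/4$ is a legitimate proximity parameter for $\Pi_{\mathrm{HW}}$. A second point is derandomizing the index choice to fit the $O(\log n)$ communication bound. I would handle this with the pairwise-independent construction above. If that fails, I would accept $O(m\log n)$ communication, or use shared public randomness.
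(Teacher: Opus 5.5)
Your protocol and analysis are essentially the paper's Sample-and-Verify proof ($\PiSV$). In both, the verifier samples $m=\Theta(1/g^2)$ positions pairwise-independently and the prover reports the weight of the virtual string $y$. The verifier runs a consistency check, and the parties then run $\Pi_{\mathrm{HW}}$ on $y$ at proximity $g/4$. The analysis is Chebyshev plus the triangle inequality in both cases. You use threshold $\epsc+g/2$ with concentration at $g/4$, where the paper uses threshold $\epsc+g/4$ with concentration at $g/2$ in the soundness step. This difference is immaterial, and your split is the one the paper uses for its black-box weighted protocol.

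The one step that fails as written is the seed construction. With a prime $p\approx n$, the indices $aj+b \bmod p$ are pairwise independent and uniform over $\mathbb{Z}_p$, not over $[n]$.
\begin{itemize}
\item If $p<n$, coordinates $p+1,\dots,n$ are never sampled. The mean of each $y_j$ can then differ from the relative weight of $x$ by up to $(n-p)/n$. Nothing bounds this by $O(g)$ under the sole assumption $1/g^2=o(n)$.
\item If $p>n$, indices fall outside $[n]$, and reducing them modulo $n$ destroys uniformity.
\end{itemize}

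The paper avoids this by citing an explicit construction of exactly uniform $2$-independent functions $h\colon[m]\to[n]$ with $O(\log n+\log m)$-bit descriptions \cite{STOC:HarSah24}. Within your approach, a simple repair works:
\begin{itemize}
\item Take a prime $p\geq 16n/g$, set $T_j:=aj+b \bmod p$, and set $i_j:=\lfloor T_j n/p\rfloor+1$.
\item Pairwise independence survives this coordinatewise map.
\item Each index has probability within $1/p$ of $1/n$. Hence each $y_j$ has mean within $n/(2p)\leq g/32$ of the relative weight of $x$, which fits in your $g/4$ slack with $m$ still $O(1/g^2)$.
\item The seed costs $O(\log(n/g))=O(\log n)$ bits, because $1/g^2\leq n$.
\end{itemize}

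Your two fallbacks do not meet the stated bound. An explicit index list costs $\Theta(m\log n)$ bits. Shared public randomness does not help either, because the paper's model counts any communicated public-coin seed toward communication.
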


\noindent In particular, by applying the multi-round $\HW$ protocol in \cite{STOC:RotVadWig13}, we obtain a tolerant $O(\log (1/g))$-round $\dsIPP$ such that the verifier's query complexity is $O(\mathrm{polylog}(1/g^2)/g)$, the prover's query complexity is $O(1/g^2)$ and the communication complexity is $O(\mathrm{polylog}(1/g^2)/g + \log n)$. We also describe a one-round protocol (see Figure~\ref{fig:one_round_dsIPP_hamming_weight}), which combines the first message of our general protocol with the first message of the one-round protocol of \cite{STOC:RotVadWig13}, instead of just instantiating it as a subprotocol. This one-round protocol achieves verifier's query complexity in $\Otilde(1/g^{4/3})$, prover's query complexity in $O(1/g^2)$ and communication complexity $O(\log n) + \Otilde(1/g^{4/3})$.

We complement our results with matching query lower bounds up to polylogarithmic factors: let $Q_\Verifier$ and $Q_\Prover$ be the verifier's and prover's query complexity respectively, then every tolerant $\IPP$ for the Hamming weight problem satisfies $Q_\Verifier=\Omega(1/g)$ and $Q_\Verifier+Q_\Prover=\Omega(1/g^2)$ (see Proposition~\ref{prop:hw-query-lower-bounds}). In addition, we prove that there is no $\dsIPP$ with perfect completeness (see Proposition~\ref{prop:no-perfect}).

\subsubsection{Distribution-Weighted Hamming Weight}

Let $D$ be a distribution over a domain $\mathcal{X}$ and $f\colon\mathcal{X}\to\{0,1\}$ an evaluation function. We want to verify a claim $\sigma$ about \(\wt_D(f):=\mathbb{E}_{X\sim D}\left[f(X)\right]\). Equivalently, if $D_f$ denotes the distribution of $f(X)$ for $X\sim D$, then $\wt_D(f)$ is the mean of $D_f$. Unlike in the standard Hamming Weight problem, the distribution $D$ is unknown. Therefore, we distinguish the cost of sampling from $D$ from the cost of querying $f$.

The access available to the verifier determines how closely this problem reduces to ordinary Hamming Weight. We first consider ordinary black-box sampling, then chosen-randomness access to the sampler, and finally the case where the prover and verifier evaluate sampled points differently.

\paragraph{Black-Box Sampling.}
We first consider the setting where the verifier can only obtain fresh independent samples from $D$. For tolerance gap
$g=\epsf-\epsc$, our protocol uses $O(1/g^2)$ samples from $D$, but the verifier evaluates $f$ on only $\Otilde(1/g)$ of them. The honest prover evaluates $f$ on at most $O(1/g^2)$ sampled points.

The protocol first draws an empirical sample $X_1,\ldots,X_m$ of size $m=O(1/g^2)$ and defines the virtual string 
\( Y_i:=f(X_i)\). The prover computes the Hamming weight of $Y$, while the verifier checks the prover's claim using the Hamming Weight protocol from Section~\ref{sec:hamming_weight}. Concentration of the empirical mean ensures that the relative Hamming weight of $Y$ is close to $\wt_D(f)$. Thus the interaction reduces the number of evaluations of $f$ performed by the verifier, even though the verifier still has to draw the full sample.

The quadratic sample complexity cannot in general be improved. In the interior regime, where the relevant means remain bounded away from $0$ and $1$, we prove that every such interactive proof requires $\Omega(1/g^2)$ black-box samples, independently of its number of rounds, communication, or number of queries to $f$. Thus interaction can reduce the number of model evaluations performed by the verifier, but not the number of fresh samples needed to identify the underlying mean.

\paragraph{Gray-Box Sampling.}
Things are different when the parties have chosen-randomness access to a sampler. Suppose that 
\(G\colon \{0,1\}^R\to\mathcal{X}\) satisfies \(G(U_R)\sim D\), and that the parties may evaluate $G$ on a random tape of their choice. Defining \(z(r) := f\left(G(r)\right)\) reduces $\dwHW$ exactly to the ordinary Hamming Weight problem on the length-$2^R$ string $z$. Applying our result from Section~\ref{sec:hamming_weight} therefore gives a logarithmic-round protocol in which the verifier makes $\Otilde(1/g)$ chosen-randomness calls to $G$ and $\Otilde(1/g)$ queries to $f$, while the honest prover makes $O(1/g^2)$ calls and queries. The communication complexity is $O(R)+\Otilde(1/g)$.

This shows the distinction between the two sampling models because with black-box access, $\Theta(1/g^2)$ verifier samples are necessary in the interior regime, while chosen-randomness access allows the verifier to reduce its use of the sampler to $\Otilde(1/g)$.

\paragraph{Evaluator Disagreement.}
Finally, we consider settings in which the prover and verifier may not obtain exactly the same evaluation on a sampled point. We model their evaluations by two fixed functions $f_{\Prover}$ and $f_{\Verifier}$. We say that they have bounded $(\rho,\gamma)$-mismatch if their scores may differ arbitrarily on a set of probability at most $\rho$, but differ by at most $\gamma$ everywhere else. Writing \(\kappa := \rho+(1-\rho)\gamma\), their expected disagreement is
at most $\kappa$, and therefore the difference between their means is also at most $\kappa$.

The main difficulty is that the verifier can no longer directly check the values used by the prover. We therefore use a binary-splitting protocol to reduce a claimed sample sum to a randomly selected leaf. At the leaf, instead of checking equality between the prover's and verifier's evaluations, the verifier performs a randomized comparison whose rejection probability is exactly their absolute difference. As a result, an honest prover is charged only for the disagreement already present between the two evaluation functions, whereas a false claim about the mean still forces a noticeable rejection probability.

This disagreement effectively reduces the tolerance gap from $g$ to \(\eta:=g-2\kappa\). When $\eta>0$, we obtain a protocol using $O(1/\eta^2)$ honest-prover evaluations and $O(g/\eta^2)$ verifier evaluations for constant error. In particular, if $\kappa\leq cg$ for any fixed $c<1/2$, we recover the same $O(1/g^2)$ honest-prover and $O(1/g)$ verifier evaluation complexities as in the common-evaluator setting. The protocol also admits a gray-box implementation with the corresponding bounds on chosen-randomness calls.

\subsection{Applications}

\paragraph{Auditing machine learning metrics.} Our focus on verifying a distribution-weighted Hamming weight may look like a modest goal, but it is intentional: a wide variety of useful model properties can easily be \emph{reduced} to a small number of distribution-weighted Hamming weight tests. Therefore, a conceptually simple protocol provides a unified framework to audit multiple distinct properties.

In Section~\ref{sec:audit-criteria}, we formalize this observation as follows: we consider an audit record composed of an input $X$, the model output $M(X)$, and possibly some metadata (e.g., a ground truth label or a protected attribute), together with a Boolean rule $\phi$ that represents a decision about the record (e.g., ``the prediction is wrong''). The probability that a record sampled from $D$ satisfies $\phi$ is exactly $\wt_D(f)$ for $f(x)=\phi(x,M(x))$, so any property which can be framed as the rate of records accepted by a decision rule $\phi$ reduces to one $\dwHW$ claim about the ``model'' $f = \phi(\cdot,M(\cdot))$. Furthermore, properties that can be framed as \emph{conditional rates} w.r.t. a decision rule $\phi$ are ratios of two such weights, and comparing two conditional rates, which is what most group fairness criteria do, amounts to certifying four weights and then checking a simple inequality between them.

This is enough to cover a large fraction of the model evaluation tasks that have been considered in the literature. Accuracy (does the model often produce correct answers to the questions?) is one weight; demographic parity and equal opportunity~\cite{3157382.3157469}, two standard fairness metrics, compare two conditional rates, hence use four weights, and equalized odds (another common fairness metric) uses eight~\cite{3157382.3157469}; the rate of harmful (resp. useful) outputs is one weight for a rule that is typically evaluated by humans (does the model often produce answers labeled as potentially harmful/useful?), which is precisely the situation motivating our evaluator disagreement model. Bounded scores, such as an average rating in $\{0,1/K,\ldots,1\}$, are handled by a simple trick: the average of a score over $D$ is the weight of a Boolean rule over $D$ combined with a uniform threshold in $[K]$, so it is again a $\dwHW$ claim on a slightly larger domain. Calibration at a given score value, or on a public bin,
reduces to two weights and a comparison, and average-case robustness to a perturbation is the weight of the rule ``the predictions on $X$ and on its perturbation $\widetilde X$ differ''. In all these cases, the tolerance of the audit has to be adapted to the probability of the conditioning events: certifying something about a rare group is more expensive, unless the sampler can directly produce records from that group.

\paragraph{Application of the access models.} Our motivating scenario is the following (Section~\ref{sec:generated-audits}):
a model owner wants to certify a public statistical claim about a fixed model, and producing one evaluated audit record is expensive, since it may involve generating an input, running the model, and having a human judge the output. Our protocols let the model owner perform the $O(1/g^2)$ evaluations supporting the claim, while the auditor only checks $\Otilde(1/g)$ of them. How exactly this plays out depends on the access model.

When the audit data is synthetic, i.e., produced by a public generator $G$, the gray-box protocol applies directly: the auditor makes $\Otilde(1/g)$ calls to $G$ and $\Otilde(1/g)$ evaluations, and the audit set is described by a short seed rather than transmitted. Better, the owner's work can be done once and for all (Section~\ref{sec:public-audit-set}). If the random seed for the audit set is derived from a public and unbiasable value fixed after the model (e.g., a lottery outcome, or random string periodically produced by an already-deployed cryptographic random beacon protocol, such as the League of Entropy~\cite{LoE20,gailly2023tlock}) then the audit set is public and reproducible: the owner evaluates the model on it a single time, at a cost of $O(\log(k/\delta)/g^2)$ evaluations, and any number of auditors can then verify claims about any of $k$ fixed criteria with $\Otilde(1/g)$ evaluations each. When the audit data comes from a real population and can only be sampled, the situation is less favorable (Section~\ref{sec:audit-access-comparison}): the auditor must collect $\Theta(1/g^2)$ fresh samples itself to send to the owner, and amortizing the work does not seem feasible. In Section~\ref{sec:audit-certificate-scope}, we discuss some known limitations of the use of synthetic data as a proxy to measure a model behavior on a real population~\cite{3618408.3619856}: the validity of the proxy is only guaranteed when the total variation distance between the two distributions (synthetic and in-the-wild) is small, which our protocols do not certify, and conditional criteria on small groups are particularly sensitive to this distance.

\subsection{Discussion and Perspectives}
\label{sec:discussion}

This paper sits between two lines of work: the theory of interactive proofs of proximity fixes a proof model and studies which properties admit efficient protocols in it; the literature on auditing machine learning models asks what should be measured, on which data, and by whom. Our
contribution is neither aimed at advancing the theory of interactive proofs of proximity nor at proposing an end-to-end usable auditing system. Rather, it is a framework that connects the two, together with concrete protocols and lower bounds that characterize the limits of what can be achieved. For a contribution of this kind, the conceptual message matters as much as the theorems, so we state it explicitly.

\paragraph{Access models are chosen for the application, not for their novelty.}
Some of our models are, formally, restatements of existing ones. The clearest case is gray-box sampling: write the input as the pair $(G,f)$ (the table of the generator $G$ on its $2^R$ random tapes followed by the table of the evaluation function $f$). A chosen-randomness call to $G$ and a query to $f$ are then ordinary queries to coordinates of this string, so a gray-box protocol is, formally, a standard $\dsIPP$ for an unusual property: the certified quantity, $2^{-R}\sum_r f(G(r))$, is an average over the $f$-part of the input with weights determined by the $G$-part. One chunk of the input lists which values matter, the other chunk supplies them. Viewed this way, we are merely studying a traditional dsIPP for an unconventional property. What motivates our reframing is that it matches much more closely the intended application (auditing a Hamming-like property of a model weighted by a synthetic data distribution). Furthermore, auditing a model on synthetic data means running a public generator on chosen randomness and then evaluating the model on what it produces: these are two operations with different costs, and our model keeps them separate. The same holds, to a lesser extent, for black-box sampling, which is close to the model of interactive proofs for distribution properties~\cite{FOCS:HerRot24}. What our formulation adds is the separation between sampling the population and evaluating the model, which is exactly where the asymmetry between prover and verifier lies.

\paragraph{One property, several proof models.}
Work on proofs of proximity usually fixes a proof model and explores the properties it can handle. We proceed in the opposite direction. Section~\ref{sec:audit-criteria} shows that a single property, the distribution-weighted Hamming weight, covers a large part of what one wants to certify about a model: accuracy, group fairness criteria,
calibration, harmlessness and usefulness rates, and average-case robustness are all averages of a Boolean or bounded evaluation rule over the input distribution, or comparisons between a few such averages. We therefore fix this one property and vary the proof model along the dimensions that distinguish auditing scenarios: whether the audit data is collected from a real population or generated (black-box
versus gray-box sampling), and whether the two parties evaluate outputs identically or through different evaluators (the common-evaluator and evaluator-disagreement settings). Each variation changes the achievable complexity, and in each case we prove lower bounds showing that the protocols we obtain are essentially the best possible.

\paragraph{A quantified trade-off for synthetic audit data.}
Synthetic data is known to be a weaker basis for evaluation than data collected from the deployed population: generators may misrepresent that population, in particular in low-density
regions~\cite{3618408.3619856}, and a certificate obtained on generated data transfers to the real population only under an additional fidelity assumption (Section~\ref{sec:audit-certificate-scope}). Our results quantify the other side of this trade-off. With data from a real population, the verifier needs $\Theta(1/g^2)$ fresh samples, and no
interaction with a prover can reduce this number
(Theorem~\ref{thm:black-box-sample-lower-bound}). With a generator, the verifier needs only $\Otilde(1/g)$ calls to it and $\Otilde(1/g)$ bits of communication beyond the description of a seed (Theorem~\ref{thm:gray-box-whw}), while its number of model evaluations is $\Otilde(1/g)$ in both cases. Choosing synthetic data for an audit thus buys a quadratically cheaper verifier (in settings where a prover-assisted audit is feasible) at the price of a weaker guarantee about deployment. We believe that conveying this trade-off precisely is useful to make an informed decision about how to best audit a model in a given setting.

\paragraph{Assuming a powerful prover is a well-motivated model for auditing.} Interactive proofs of proximity assume a powerful prover and explore how it allows going beyond the efficiency limitations of property testing. A central thesis of our work is that this assumption is not just a technical convenience: it is reasonable and well-motivated in the setting of model evaluation. This rests on three observations: first, the party that benefits from a favorable audit is the model owner, who may need a certificate (of successfully passing an established auditing process) to deploy a model or to sell access to it. It is reasonable that this party bears the cost of producing the evidence, and soundness guarantees that the auditor need not trust it; one can imagine the owner initiating the audit and presenting the resulting certificate to regulators or customers. Second, the owner typically has far more computational and evaluation resources than any single auditor (e.g., in the case where it is a major AI company). Third, under certain (acceptable) assumptions, the owner's work can be done once and reused across any number of audits by independent auditors. Concretely, if the audit uses synthetic data and if the random tapes of the audit set are
derived from a public random beacon, such as a lottery draw or any publicly-verifiable random value fixed after the model and the audit criteria (many practical instantiations of such a trusted setup already exist), then the owner evaluates the model on the $O(1/g^2)$ generated inputs a single time, and any number of auditors can subsequently verify any number of
claims about it with $\Otilde(1/g)$ evaluations each
(Section~\ref{sec:public-audit-set}). Independent audits would each pay the full evaluation cost; the beacon, a mild assumption, factors this cost across auditors and across criteria. For audits using in-the-wild data, the prover work can be factored similarly, but only under the assumption that a trusted auditor gathers the audit set and makes it publicly available, which is a much stronger (and much less reasonable) trust assumption. We stress that this result is not meant to advocate for the systematic use of synthetic inputs in model audits (whether their use is acceptable or not is heavily context-dependent) but rather to characterize precisely how much we can leverage interactions with the model owner once the auditing methodology is defined. 

\paragraph{Perspectives.} We end this discussion with some perspectives for future works. First, we show that model auditing can be made much more efficient in a gray-box setting, but this setting implicitly assumes that the synthetic data accurately mimics the distribution of real-world data. We believe that this motivates further work on the statistical problem of establishing the fidelity of an input-generator to the ``natural'' input distribution for a deployed population (see Section~\ref{sec:audit-certificate-scope}). Second, our soundness guarantees are information-theoretic against a prover who knows the instance, but the model itself is \emph{assumed fixed before the audit}: an owner who could change the model after learning the audit set falls outside the framework. The question of mitigating this assumption has been considered at length in the model auditing literature, using either cryptographic proofs to reduce the freedom of a cheating prover, or by studying the resilience of audits to adaptive model manipulations~\cite{yan2022active,godinot2024under}. Connecting these questions with our approach is one of the directions we consider most important.

\subsection{Further Related Work} 

\paragraph{Interactive Proofs for distribution properties.}
Proof systems for distribution properties were introduced by Chiesa and Gur \cite{ITCS:ChiGur18} and subsequently developed by Herman and Rothblum \cite{STOC:HerRot22, FOCS:HerRot23, FOCS:HerRot24, FOCS:HerRot25}. In this model, the object being verified is an unknown distribution $D$. Given a property $\Pi$ of distributions, the verifier must distinguish distributions $\epsc$-close to $\Pi$ (in total variation distance) from those $\epsf$-far from $\Pi$ using black-box sample access to $D$ and interaction with an untrusted prover. While only label-invariant properties were first studied in \cite{ITCS:ChiGur18, STOC:HerRot22, FOCS:HerRot23}, the proof model was then extended to non-label-invariant properties in \cite{FOCS:HerRot24, FOCS:HerRot25}, thereby making this line of work more relevant to machine learning applications.

As we described above, their proof model differs from ours mainly in the access models. Our protocol also aims to verify the proximity to a general distribution property, with the pushforward distribution $D_f$ as the input distribution. Alternatively, for a fixed public function $f$, our proof model can be viewed as testing a particular property of $D$. However, we do not provide sample access to $D_f$ but a separate access to $D$ and to $f$ with different and non-negligible costs.
The two models nevertheless intersect on the hard instances used to derive the sample lower bound in the black-box model (see Theorem~\ref{thm:black-box-sample-lower-bound}). 


\paragraph{Distribution-free IPPs.} Distribution-free IPPs \cite{aaronson_et_al:LIPIcs.CCC.2024.24} are based on a similar access model to ours: the verifier may query a function $f$ and sample from an unknown distribution $D$. However, the property tested has a different form. For a property $\Pi$ of functions, the verifier distinguishes $f\in\Pi$ from $\Delta(f,\Pi):=\inf_{h\in\Pi}\Pr_{X\sim D}[f(X)\neq h(X)]>\eps.$ Unlike our proof model, membership in $\Pi$ is independent of $D$; the distribution determines only how distance from $\Pi$ is measured.

In our model, changing $D$ while keeping $f$ fixed can turn a completeness instance into a soundness instance. This is exactly what we leverage in the proof of Theorem~\ref{thm:black-box-sample-lower-bound} to derive the black-box sampling lower bound. By contrast, in the distribution-free model, if a fixed function $f$ is a completeness instance, then $f\in\Pi$ and remains a completeness instance for every choice of $D$. Hence our sampling lower bound does not apply to distribution-free properties. 


\paragraph{PAC verification.} Our proof model also resembles the PAC-verification framework of Goldwasser, Rothblum, Shafer, and Yehudayoff~\cite{ITCS:GRSY21} (perhaps the closest work to ours in terms of goals), which is also designed for machine learning applications. In one of their settings, the verifier receives random labeled examples while the honest prover has the stronger ability to make membership queries. Their interactive
protocol delegates such queries to the prover while arranging that some of the answers can be checked against examples already known to the verifier. Our protocol uses the same general principle where the prover performs the larger set of evaluations, while the verifier uses independently obtained data to bind the prover's claims. However, the verification goals are different: PAC verification concerns a hypothesis class $\mathcal{H}$ and asks for a
hypothesis $h$ satisfying a guarantee of the form
\[
    L_D(h)
    \leq
    \alpha\inf_{h'\in\mathcal{H}}L_D(h')
    +
    \eps,
\]
where $L_D$ denotes the loss function. By contrast, our protocols verify the proximity to any property claimed, including properties that indicate evaluation results far from the optimal results attainable by the relevant model class. 




\paragraph{Proofs of Proximity for Hamming Weight.} Verifying the Hamming weight of a string is one of the canonical problems considered in multiple works about proofs of proximity. In their paper introducing $\IPP$, \cite{STOC:RotVadWig13} proposed a dedicated protocol for this problem, with verifier's query and communication complexity of $\Otilde(1/\eps)$ but superlinear honest prover's query complexity. The problem was then studied in several proof models, such as non-interactive proofs of proximity (MAPs)~\cite{ITCS:GurRot15}, distribution-free proofs of proximity~\cite{aaronson_et_al:LIPIcs.CCC.2024.24} and MAPs, PCP and IOP with one-sided error~\cite{TCC:ArnBenYog24}. \cite{ITCS:AmiGolRot25} was the first and only work to our knowledge to provide a doubly-sublinear $\IPP$ for the Hamming Weight problem. Their $r$-round protocol achieves a verifier's query complexity in $O(1/\eps)$, an honest prover's query and runtime complexity in $\Otilde(n^{1/r}\cdot r^3/\eps^3)$, and a verifier's runtime and communication complexity in $O(n^{1/r}\cdot r \cdot \Otilde (1/\eps))$. 

\paragraph{Proof Systems in Machine Learning.} 

Proof systems have recently received a lot of attention in machine learning, particularly zero-knowledge proofs~\cite{Peng_Zhao_Wang_Liao_Lin_Liu_Cao_Shi_Yang_Zhang_2026}. One of their main applications is model evaluation, including to provide guarantees of accuracy~\cite{10.1145/3460120.3485379}, fairness~\cite{Zhang_Dong_Kose_Shen_Zhang_2025}, privacy~\cite{confidential_dp_2024}, or model explanations~\cite{Yadav_Laufer_Boneh_Chaudhuri_2025}. However, so far, the field of verifiable machine learning has primarily focused on proving the computations performed during evaluation, while providing very few guarantees about the evaluation results themselves. In particular, unlike our work, existing work on verifiable machine learning does not guarantee closeness to a statistical property of the model. In fact, proofs about the computations of an evaluation are very difficult to generalize to statistical guarantees about the model: the resulting evaluations can be manipulated through the training and evaluation data, the choice of metrics, and the choice of hyperparameters~\cite{Luck_Franzese_Masserova_Takahashi_Polychroniadou_2025, dont_trust_the_process_2026}.

%% file: prelims.tex
\subsection{Notation and Definitions}
\label{ssec:notation}
For an integer $n\ge 1$, we use $[n]$ to denote the set $\{1,\ldots,n\}$. Given $\mathcal{I}=(\mathcal{I}_n)_{n\in\mathbb{N}}$ a family of
input domains where inputs in $\mathcal{I}_n$ have size parameter $n$, $\Pi=(\Pi_n)_{n\in\mathbb{N}}$ is a property if $\Pi_n\subseteq\mathcal{I}_n, \forall n \in \mathbb{N}$. We write $\mathscr{P}=(\mathscr{P}_n)_{n\in\mathbb{N}}$ the set of all properties $\Pi$. Unless another distance is specified, the distance between a string $x\in\{0,1\}^n$ and a property $\Pi_n$ is the smallest relative Hamming distance between $x$ and any string $y\in\Pi_n$, i.e., $\dist(x,\Pi_n)=\min_{y\in\Pi_n} 1/n\bigl|\{i\in[n]:x_i\neq y_i\}\bigr|$, and the distance between a distribution $D$ and a property $\Pi_n$ is the smallest total variation distance between $D$ and any distribution in $\Pi_n$, i.e., $\dist(D,\Pi_n)=\min_{D'\in\Pi_n} 1/2\sum_{x\in\mathcal{X}}|D(x) - D'(x)|$.

For a bit string $x\in\{0,1\}^n$, we denote respectively its absolute and relative Hamming weights by:
\[
    \lVert x\rVert_1=\sum_{i=1}^n x_i
    \qquad\text{and}\qquad
    \wt(x)=\frac{\lVert x\rVert_1}{n}
\]
For the distribution $D$ over $\mathcal{X}$ and the function $f:\mathcal{X} \to \mathcal{Y}$, we denote $D_f$ the pushforward distribution of $D$ under $f$, i.e., $D_f(y)=\Pr_{X\sim D}[f(X)=y], \forall y\in \mathcal{Y}$. Given a distribution $D$ and a function $f:\mathcal{X}\to\{0,1\}$, we write 
\[
    \wt_D(f) := \Pr_{X\sim D}[f(X)=1] = \mathbb{E}_{X\sim D}[f(X)] =\mathbb{E}_{Y\sim D_f}[Y]
\]
for the $D$-weighted Hamming Weight of $f$.


We define a \emph{$t$-independent hash function} as in~\cite{STOC:HarSah24}:

\begin{definition}
\label{def:t-independent}
($t$-independent function \cite{STOC:HarSah24}) For any $t\leq n$, the function $h\colon[m] \to [n]$ is \emph{$t$-independent} if $\Pr\left[ h(a_1)=\alpha_1 \land \ldots \land h(a_t)=\alpha_t \right] = 1/n^t$ for all distinct $a_1,\ldots,a_t\in[m]$, and for all $\alpha_1,\ldots,\alpha_t\in[n]$.
\end{definition}

All logarithms are to base two unless stated otherwise. The notation
$\Otilde(\cdot)$ suppresses polylogarithmic factors.

\subsection{Interactive Proofs of Proximity}
\label{ssec:ipp-model}

An \emph{interactive proof of proximity} involves a probabilistic verifier $\Verifier$ and a prover $\Prover$. The instance size and all problem parameters are given explicitly to both parties. The instance is accessed through the oracle or sampling interface specified by the problem.

In the standard string setting, an input $x\in\{0,1\}^n$ is given
through oracle access, meaning that a query $i\in[n]$ returns $x_i$.
Section~\ref{sec:distribution_weighted_hamming_weight} additionally considers instances consisting of an unknown
distribution $D$ together with a function $f$, where the parties have
sampling access to $D$ and query access to $f$.

We use the designated-honest-prover model of~\cite{ITCS:AmiGolRot25}. In this model, the protocol specifies a particular honest strategy $P_h$ that, like the verifier, receives only the access to the instance specified by the problem. Its query and sample complexity is treated as a resource of the protocol. Soundness, however, is required against every cheating prover, including one that is computationally unbounded and is given the entire input explicitly. 

\begin{definition}[Tolerant Interactive Proof of Proximity]
\label{def:tolerant-ipp}
Let $\mathcal{I}=(\mathcal{I}_n)_{n\in\mathbb{N}}$ be a family of
input domains where inputs in $\mathcal{I}_n$ have size parameter $n$, $\Pi=(\Pi_n)_{n\in\mathbb{N}}\in \mathscr{P}$ a property, $\Delta:\mathcal{I}_n \times \mathscr{P}_n \to \mathbb{R}_{\geq 0}$ be a problem-specific distance function,
and $0\leq\epsc<\epsf\leq 1$. An $(\epsc,\epsf)$-interactive proof of
proximity with respect to $\Delta$ consists of a probabilistic oracle
verifier $\Verifier$ and a designated probabilistic oracle prover $\Prover_h$
satisfying the following conditions for every $n$, every
$I\in\mathcal{I}_n$.
\begin{itemize}
    \item \emph{Completeness.} If $\Delta(I,\Pi_n)\leq\epsc$, then
    \[
    \Pr \left[
        \langle P_h^I,V^I\rangle(\Pi_n)\text{ accepts}
    \right]
    \geq \frac{2}{3},
\]
where the probability is over the randomness of both honest parties
and any randomness in the specified oracle-access model.
\item \emph{Soundness.} If $\Delta(I,\Pi_n)>\epsf$ then for every prover strategy $P^*$,
\[
    \Pr \left[
        \langle P^*,V^I\rangle(\Pi_n)\text{ accepts}
    \right]
    \leq \frac{1}{3}.
\]
The cheating prover may be computationally unbounded and nonuniform,
and may have complete knowledge of the instance.
\end{itemize}
No guarantee is required when $\epsc<\Delta(I,\Pi_n)\leq\epsf$. The quantity $g=\epsf-\epsc$ is called the tolerance gap.

The case $\epsc=0$ is called non-tolerant; in this case we write
$\eps=\epsf$. Completeness is perfect if the designated honest prover
causes the verifier to accept every pair $(I,\Pi_n)$ satisfying
$\Delta(I,\Pi_n)\leq\epsc$ with probability one.
\end{definition}


A \emph{property tester} is the special case of an IPP with no prover. An
IPP is \emph{doubly sublinear} in a given parameter regime if both the
verifier and the designated honest prover make $o(n)$ oracle queries in
that regime. The query bound $Q_P$ of the designated honest prover is required only on completeness instances. There is no query bound on a cheating prover. 

We count the verifier's and designated honest prover's oracle queries separately. Verifier bounds are worst-case over promise inputs, verifier coins, and prover strategies; designated-honest-prover bounds are worst-case over completeness instances and honest-party coins, unless stated otherwise. Communication is the total number of bits sent in both directions, including any communicated public-coin seed.

%% file: hamming_weight.tex
In this section, we give a tolerant $\dsIPP$ for the Hamming Weight problem ($\HW$) whose honest prover has the optimal quadratic dependence on the tolerance gap, up to logarithmic factors in the other resources. In the non-tolerant setting,
this improves the cubic dependence on $1/\eps$ of the honest-prover complexity in~\cite{ITCS:AmiGolRot25}. We first present the protocol and then prove matching query lower bounds.


\subsection{The Sample-And-Verify Protocol}
\label{subsec:sample-and-verify}

Our protocol is simple: the verifier first uses a pairwise-independent function to define a sample of $m=\Theta(1/g^2)$ bits of the input string. The prover computes the Hamming weight of the resulting virtual string
$y\in\{0,1\}^m$, and the parties then use an IPP for exact Hamming-weight verification to check the prover's claim about $y$. We denote this subprotocol by $\Pi_{\mathrm{HW}}$. Figure~\ref{fig:dsIPP_hamming_weight} describes the protocol, which we denote $\PiSV$.

\begin{protofig}
    {Sample-And-Verify Protocol}
    {fig:dsIPP_hamming_weight}
\emph{Query Access Input:} $x\in\{0,1\}^n$.\\
\emph{Deviation Parameters:} $\epsc\geq0$, $\epsf>\epsc$ and $g=\epsf-\epsc$.\\
\emph{Other Parameters:} Claimed (absolute) weight $\sigma \in \N$, number of samples $m=\Theta(1/g^2)$.

\begin{enumerate}[itemsep=2pt,topsep=3pt]
\item $\Verifier \to \Prover$: send a description of a $2$-independent function $h\colon[m] \to [n]$. 

Define $y\in\{0,1\}^m$ such that $y_i=x_{h(i)}, \forall i\in[m]$.
\item $\Prover \to \Verifier$: send a claimed absolute weight
$\sigma_y\in\{0,\ldots,m\}$. The honest prover sets $\sigma_y=\lVert y\rVert_1$.
\item $\Verifier$: receive $\sigma_y$, reject if $\bigl|\sigma_y/m-\sigma/n\bigr|>g/4+\epsc$ or if $\sigma_y\notin\{0,\ldots,m\}$.
\item $\Prover$ and $\Verifier$: run $\Pi_{\mathrm{HW}}$ on input $y$, with claimed absolute weight $\sigma_y$ and proximity parameter $g/4$. The verifier outputs the subprotocol's verdict.
\end{enumerate}
\end{protofig}

\begin{theorem}[Doubly-sublinear Interactive Proof of Proximity for Hamming Weight]
\label{thm:dsIPP_hw}
Let $0\leq\epsc<\epsf\leq1$ and let $g=\epsf-\epsc$.
Let $\Pi_{\mathrm{HW}}$ be an $\IPP$ for exact Hamming-weight verification with proximity parameter $g/4$. On inputs of length $N$, denote its verifier query complexity by $\mathrm{Q_\Verifier^{\Pi_{\mathrm{HW}}}}(N)$, verifier time complexity by $\mathrm{T_\Verifier^{\Pi_{\mathrm{HW}}}}(N)$, communication complexity by $\mathrm{C^{\Pi_{\mathrm{HW}}}}(N)$, prover time complexity by $\mathrm{T_\Prover^{\Pi_{\mathrm{HW}}}}(N)$, and number of rounds by $r^{\Pi_{\mathrm{HW}}}(N)$. For $m=\Theta(1/g^2)$, the Sample-And-Verify protocol is a $(1+r^{\Pi_{\mathrm{HW}}}(m))$-round $(\epsc,\epsf)$-$\IPP$ for Hamming-weight verification with:
\begin{itemize}
    \item Verifier query complexity $\mathrm{Q_\Verifier^{\Pi_{\mathrm{HW}}}}(m)$. 
    \item Honest-prover query complexity at most $m$;
    \item Communication complexity
    $O(\log n)+\mathrm{C^{\Pi_{\mathrm{HW}}}}(m)$;
\end{itemize}
For every fixed $\lambda>0$, the verifier's expected time complexity is
\(O(n^\lambda)+\mathrm{T_\Verifier^{\Pi_{\mathrm{HW}}}}(m)\),
and, assuming the Generalized Riemann Hypothesis, it is
\(O(\log n)+\mathrm{T_\Verifier^{\Pi_{\mathrm{HW}}}}(m)\).
The honest prover's time complexity is
\(O(m+\log n)+\mathrm{T_\Prover^{\Pi_{\mathrm{HW}}}}(m)\).
In the regime $1/g^2=o(n)$, the protocol is doubly sublinear.
\end{theorem}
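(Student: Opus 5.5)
The proof analyses the two steps of $\PiSV$ separately. Pairwise independence makes $\wt(y)$ an accurate estimate of $\wt(x)$ (Chebyshev). The subprotocol $\Pi_{\mathrm{HW}}$ then handles the prover's claim $\sigma_y$ about $y$. We choose the constant in $m=\Theta(1/g^2)$ so that the estimation error is at most $g/8$ with probability at least $1-\delta$, for a small constant $\delta$ (say $1/24$). We also assume, by constant-factor repetition of $\Pi_{\mathrm{HW}}$ if needed, that its completeness and soundness errors are at most $1/3-\delta$.

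\textbf{Estimation step.} Since $h$ is $2$-independent, each $y_i=x_{h(i)}$ is a $\Ber(\wt(x))$ variable, and the $y_i$ are pairwise independent. So $\mathbb{E}[\wt(y)]=\wt(x)$ and $\mathrm{Var}(\wt(y))\le 1/(4m)$. Chebyshev gives
\[
\Pr\bigl[|\wt(y)-\wt(x)|>g/8\bigr]\le \frac{16}{m g^2},
\]
which is at most $\delta$ for a suitable $m=\Theta(1/g^2)$.

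Here the distance to the property is $\bigl|\wt(x)-\sigma/n\bigr|$, since flipping bits moves the weight one unit per flip.

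\textbf{Completeness.} Suppose $|\wt(x)-\sigma/n|\le\epsc$. With probability at least $1-\delta$, the honest value $\sigma_y=\lVert y\rVert_1$ satisfies
\[
|\sigma_y/m-\sigma/n|\le \epsc+g/8\le \epsc+g/4,
\]
so step~3 passes. The claim $\sigma_y$ is exactly correct for $y$, so the completeness of $\Pi_{\mathrm{HW}}$ makes the verifier accept except with its small error. A union bound gives acceptance probability at least $2/3$.

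\textbf{Soundness.} Suppose $|\wt(x)-\sigma/n|>\epsf=\epsc+g$. Condition on the good sampling event, which fails with probability at most $\delta$. Any $\sigma_y$ that passes step~3 satisfies
\[
|\wt(y)-\sigma_y/m|\ \ge\ g-g/8-g/4\ >\ g/4.
\]
So $y$ is $g/4$-far from having weight $\sigma_y$. The soundness of $\Pi_{\mathrm{HW}}$ with proximity parameter $g/4$ bounds acceptance by its soundness error, and adding $\delta$ keeps the total at most $1/3$.

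The cheating prover sees $h$ before choosing $\sigma_y$, but this is harmless. The good event depends only on $h$ and $x$, and the bound holds for every $\sigma_y$ that passes step~3.

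\textbf{Complexities.}
\begin{itemize}
\item \emph{Queries.} Every subprotocol query to $y_i$ is answered by one query to $x_{h(i)}$, so the verifier makes $\mathrm{Q_\Verifier^{\Pi_{\mathrm{HW}}}}(m)$ queries. The honest prover reads all of $y$, which is $m$ queries, and then simulates the subprotocol prover on the known $y$ without further queries.
\item \emph{Communication.} A $2$-independent family $h(i)=ai+b$ over a field of size about $n$ has a description of $O(\log n)$ bits. Sending $\sigma_y$ costs $O(\log m)$ bits. The total is therefore $O(\log n)+\mathrm{C^{\Pi_{\mathrm{HW}}}}(m)$.
\item \emph{Rounds.} There is one round for $h$ and $\sigma_y$, plus the rounds of $\Pi_{\mathrm{HW}}$.
\end{itemize}

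\textbf{Main obstacle: exact $2$-independence into $[n]$.} Definition~\ref{def:t-independent} requires exactly uniform pairs over $[n]$, and $n$ need not be a prime power. The natural approach is to work over a prime field $\mathbb{F}_p$ with $p$ close to $n$ and handle the mismatch. For example, take $p\ge n$ with $p=O(n)$, map into $[n]$, and absorb the resulting bias of $O(n/p)$ into the estimation error; alternatively, rejection-sample, or use a family of the form of~\cite{STOC:HarSah24}. Finding such a prime is what costs time. Randomized prime search costs expected $O(n^\lambda)$ using unconditional results on primes in short intervals, or $O(\log n)$ under GRH, where deterministic primality testing and a short gap between consecutive primes suffice.

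The honest prover's time is $O(m)$ to evaluate $h$ and sum the bits, plus $O(\log n)$ to parse $h$, plus $\mathrm{T_\Prover^{\Pi_{\mathrm{HW}}}}(m)$. Double sublinearity follows because $m=O(1/g^2)=o(n)$ and $\mathrm{Q_\Verifier^{\Pi_{\mathrm{HW}}}}(m)\le m$.
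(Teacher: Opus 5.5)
Your completeness and soundness analysis is the paper's argument: Chebyshev on the pairwise-independent sample, the triangle inequality at step~3, then the subprotocol's errors. You use a single good event at accuracy $g/8$ where the paper uses $g/4$ for completeness and $g/2$ for soundness. That part is fine, including your remark that the cheating prover's adaptive choice of $\sigma_y$ is harmless.

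The gap is in how you obtain $h$. The description length and the time bounds in the statement depend on this. Your primary suggestion fails. If $n\le p=O(n)$, reducing $\mathbb{F}_p$ to $[n]$ gives each element $\lfloor p/n\rfloor$ or $\lceil p/n\rceil$ preimages. The marginal of $h(i)$ is then at total variation distance $r(n-r)/(pn)$ from uniform, where $r=p\bmod n$; this is about $1/6$ for $p\approx 3n/2$. Since $n/p=\Theta(1)$ here, this bias cannot be absorbed into a $g/8$ estimation error. For suitable inputs $x$ (ones placed on the over-weighted residues), $\mathbb{E}[\wt(y)]$ differs from $\wt(x)$ by a constant. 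That breaks step~3 for the honest prover and also breaks the soundness calculation. Making the bias $O(g)$ requires $p=\Omega(n/g)$, which still costs only $O(\log n)$ bits when $1/g^2\le n$. But then $h$ is pairwise independent with non-uniform marginals, not $2$-independent in the sense of Definition~\ref{def:t-independent}. You would be analysing a modified protocol with an extra bias term, not $\PiSV$.

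Your justification of the time bounds does not work either. A prime in $[n,2n]$ can be found unconditionally in expected $\mathrm{polylog}(n)$ time, by random sampling plus a polylogarithmic-time primality test. So prime search near $n$ is not where an $O(n^\lambda)$ cost comes from, and primes in short intervals are irrelevant. In the other direction, GRH is not known to bound gaps between consecutive primes anywhere near what an $O(\log n)$-time search would need; even under RH the known bound is $O(\sqrt{n}\log n)$.

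The paper gets the $O(\log n+\log m)$-bit description, the expected $O(n^\lambda)$ time, and the $O(\log n)$ time under GRH in one step. It invokes the \textsc{Constructor} of \cite{STOC:HarSah24} with parameters $(n,m,2)$, which outputs an exactly $2$-independent $h\colon[m]\to[n]$. You list this family as one option. Commit to it and drop both the prime-field route and the prime-search explanation; the rest of your proof then goes through unchanged.
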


\begin{proof}
If necessary, we apply constant amplification to $\Pi_{\mathrm{HW}}$ so that its completeness and soundness errors are sufficiently small constants.

\textbf{Completeness.} The $m$ samples defined by the $2$-independent function are pairwise independent and uniformly distributed (see Definition~\ref{def:t-independent}). Completeness follows from the observation that sampling $O(1/g^2)$ pairwise-independent and uniformly distributed bits of $x$ is sufficient to ensure its Hamming weight is close to the claimed weight w.h.p. To demonstrate this, we first note that, by triangle inequality:
\begin{align*}
\left| \wt(y)-\sigma/n\right| &\leq \left| \wt(y)-\wt(x)\right| + \left| \sigma/n-\wt(x)\right|\\
&\leq \left| \wt(y)-\wt(x)\right| + \epsc
\end{align*}
Therefore
\[
\Pr\left[ \left| \wt(y)-\sigma/n\right|>g/4+\epsc\right]\leq\Pr\left[ \left| \wt(y)-\wt(x)\right|>g/4\right]
\]
Since the sampled bits are pairwise independent and uniform,
\[
    \mathbb{E}[\wt(y)]=\wt(x)
    \quad\text{and}\quad
    \mathrm{Var}[\wt(y)]
    =
    \frac{1}{m^2}\sum_{j=1}^m \mathrm{Var}[y_j]
    \leq \frac{1}{4m}.
\]
By Chebyshev's inequality,
\[
    \Pr\left[
        |\wt(y)-\wt(x)|>\frac{g}{4}
    \right]
    \leq \frac{4}{mg^2}.
\]
Therefore, $m=4/(\delta g^2)$ samples are sufficient to ensure that $\Pr\left[ \left| \wt(y)-\sigma/n\right|\leq g/4+\epsc\right]>1-\delta$ (i.e., that $\Verifier$ does not reject in step 3) if the claimed weight $\sigma$ is $\epsc$-close to the actual weight. In step 4, the honest prover always provides a YES instance to the subprotocol $\Pi_{\mathrm{HW}}$. Hence the overall completeness error is at most $\delta$ plus the completeness error of $\Pi_{\mathrm{HW}}$, which is at most $1/3$ by taking both to be sufficiently small constants.

\textbf{Soundness.} Let $m=4/(\delta g^2)$ for some $\delta>0$. For any input $x$ such that $|\wt(x)-\sigma/n| > \epsf$ and any prover $\Prover^*$, the verifier accepts only if $|\sigma_y/m-\sigma/n|\leq g/4 + \epsc$ (step 3) and if the verifier in the subprotocol accepts (step 4). In the case where $|\wt(y)-\sigma_y/m| > g/4$, the probability that the verifier of the subprotocol accepts is at most $\delta_{\Pi_{\mathrm{HW}}}$ (i.e., the soundness error of the subprotocol). 
Therefore, for any NO instance $x$ and any $\Prover^*$:
\[
\Pr \left[ \Verifier \ \mathrm{accepts} \ x \right] \leq \Pr \left[ |\wt(y)-\sigma_y/m| \leq g/4\right] + \delta_{\Pi_{\mathrm{HW}}}
\]
We first note that, by the triangle inequality:
\[
|\wt(y)-\sigma_y/m| \geq |\wt(x)-\sigma/n| - |\wt(x)-\wt(y)| - |\sigma_y/m-\sigma/n| 
\]
Moreover, independently of $\Prover^*$, by Chebyshev's inequality:
\[
\Pr\left[ \left| \wt(x)-\wt(y)\right|>g/2\right] \leq \delta/4
\]
By combining these two results:
\begin{align*}
    \Pr [ |\wt(y)-\sigma_y/m| \leq g/4] &\leq \Pr [ |\wt(x)-\sigma/n| - |\wt(x)-\wt(y)| - |\sigma_y/m-\sigma/n| \leq g/4]\\
    &\leq \Pr [ \epsf - |\wt(x)-\wt(y)| - g/4 - \epsc \leq g/4]\\
    &\leq \Pr [|\wt(x)-\wt(y)| \geq g/2]\\
    &\leq \delta/4
\end{align*}
Therefore, the soundness error is at most \(\frac{\delta}{4}+\delta_{\Pi_{\mathrm{HW}}}\), which is at most $1/3$ by taking $\delta$ and the soundness error of $\Pi_{\mathrm{HW}}$ to be sufficiently small constants.

\textbf{Complexity.}
The communication complexity follows from the fact that, for all $m$ and $n$, there exists a $2$-independent function $h\colon[m]\to[n]$ that can be represented using $O(\log n+\log m)$ bits, as shown in~\cite{STOC:HarSah24}. In the regime considered here, $m<n$, so this is $O(\log n)$ bits.

Using the \textsc{Constructor} of~\cite{STOC:HarSah24} with parameters $(n,m,2)$, the verifier can construct such a function in expected time $O(n^\lambda)$ for every fixed $\lambda>0$. Assuming the Generalized Riemann Hypothesis, the construction time is $O(\log n)$.

Regarding the query complexity, the prover does not need to query further elements of the input in the subprotocol $\Pi_{\mathrm{HW}}$ given that it has already read the entire string $y$ in step 2 to compute the exact Hamming weight of $y$. The verifier does not need to query any bit of the input prior to executing the subprotocol $\Pi_{\mathrm{HW}}$ in step 4.
\end{proof}

We obtain the optimal quadratic dependence on the tolerance gap by instantiating Theorem~\ref{thm:dsIPP_hw} with the logarithmic-round Hamming-weight IPP of~\cite{STOC:RotVadWig13}. Table~\ref{tab:hw-comparison} compares our two instantiations with \cite{ITCS:AmiGolRot25}.

\begin{corollary}[Logarithmic-round $\dsIPP$ for Hamming weight]
\label{cor:log-round-hw}
Let $0\leq\epsc<\epsf\leq1$ and let $g=\epsf-\epsc$. There exists a $\dsIPP$ for the Hamming Weight problem with $O(\log(1/g))$ rounds, verifier query complexity $\ O(\mathrm{polylog}(1/g^2)/g)$, honest-prover query complexity $\ O(1/g^2)$, and communication complexity $\ O(\mathrm{polylog}(1/g^2)/g + \log n)$.
\end{corollary}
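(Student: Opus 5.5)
The plan is to instantiate Theorem~\ref{thm:dsIPP_hw} with the logarithmic-round Hamming-weight IPP $\Pi_{\mathrm{HW}}$ of~\cite{STOC:RotVadWig13}, run on the virtual string $y\in\{0,1\}^m$ with $m=\Theta(1/g^2)$ and proximity parameter $g/4$. The first step is to recall the guarantees of that protocol. On inputs of length $N$ with proximity parameter $\eps'$, it has $O(\log N)$ rounds in its logarithmic-round setting (or $O(\log(1/\eps'))$ after recursing only to depth where blocks have size $\mathrm{poly}(1/\eps')$). Its verifier query complexity and communication are both $\mathrm{polylog}(N)/\eps'$. Its honest prover needs to know the whole input, and it is exact on YES instances: the claimed weight is the true weight. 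Completeness and soundness errors are constants, and they can be reduced by $O(1)$-fold repetition, as in the proof of Theorem~\ref{thm:dsIPP_hw}.

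Next I substitute $N=m=\Theta(1/g^2)$ and $\eps'=g/4$. This gives the following:
\begin{itemize}
\item The verifier's queries are $\mathrm{Q_\Verifier^{\Pi_{\mathrm{HW}}}}(m)=\mathrm{polylog}(1/g^2)\cdot O(1/g)$.
\item The communication is $O(\log n)+\mathrm{polylog}(1/g^2)/g$.
\item The number of rounds is $1+O(\log m)=O(\log(1/g))$.
\item The honest prover's queries are at most $m=O(1/g^2)$, taken directly from Theorem~\ref{thm:dsIPP_hw}.
\end{itemize}
The prover's count holds because the prover of~\cite{STOC:RotVadWig13} only needs $y$, which it has already read in step~2. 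The verifier's queries to $y$ are simulated by queries to $x$ at positions $h(i)$, one query each. Double sublinearity in the regime $1/g^2=o(n)$ follows from the last clause of the theorem.

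The main obstacle is matching the exact form of~\cite{STOC:RotVadWig13}'s result to the interface that Theorem~\ref{thm:dsIPP_hw} assumes. That interface is exact Hamming-weight verification with proximity parameter $g/4$ and absolute claimed weight $\sigma_y$. I need to make sure of two things here. First, the round count scales as $\log N$ and not as a parameter that might be larger. Second, the polylog factors depend only on $N=m$ and not on $n$.

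If the original protocol is stated for a relative-weight claim, I convert it via $\sigma_y/m$. If it is stated as a general $\tilde O(\sqrt{N})$-type tradeoff, I pick the round parameter $r=\log m$. That choice yields query complexity $\mathrm{polylog}(m)/\eps'$, and I would verify this against their theorem.
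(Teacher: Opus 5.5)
Your proposal is correct and matches the paper's argument: the corollary comes from instantiating Theorem~\ref{thm:dsIPP_hw} with the logarithmic-round Hamming-weight IPP of~\cite{STOC:RotVadWig13}, run on the virtual string of length $m=\Theta(1/g^2)$ at proximity $g/4$. This gives $O(\log(1/g))$ rounds, $\mathrm{polylog}(1/g^2)/g$ verifier queries and $O(\log n)+\mathrm{polylog}(1/g^2)/g$ communication, and the honest prover's $m=O(1/g^2)$ queries are inherited from the theorem. The paper also takes the cited protocol's parameters as given, so your caveat about matching its exact interface is the same bookkeeping the paper leaves implicit.
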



\begin{remark}
\cite{ITCS:AmiGolRot25} already takes the approach of approximating the Hamming weight to make the proof doubly sublinear. However, unlike our solution discussed in this section, the prover in \cite{ITCS:AmiGolRot25}'s protocol approximates the Hamming weight of each row in the matrix representation of $X$ rather than the Hamming weight of the entire input string $X$. To make their IPP doubly sublinear, they apply their standard protocol recursively. On the other hand, our Protocol 1 combined with their standard protocol is sufficient to obtain a $\dsIPP$, without applying the recursion. 
\end{remark}

\subsubsection{One-Round Protocol}

The Sample-And-Verify transformation can be instantiated with the one-round Hamming-weight protocol of~\cite{STOC:RotVadWig13}, however the naive instantiation leads to a $2$-round protocol. To reduce the protocol to one round, the verifier sends the description of the sampled virtual string together with the first message of the underlying Hamming-weight protocol. This preserves a single round of interaction, at the cost of a larger verifier query complexity.

Concretely, the one-round protocol of~\cite{STOC:RotVadWig13} views its input as a $k\times\ell$ matrix, with $k=m^{2/3}$ and $\ell=m^{1/3}$. After a verifier-chosen pseudorandom permutation of the coordinates, the prover reports the Hamming weight of every row, and the verifier checks a suitable random subset of these claims. We apply this protocol to the sampled string $y$ produced by our Sample-And-Verify transformation. Figure~\ref{fig:one_round_dsIPP_hamming_weight} describes the resulting one-round protocol.

\begin{protofig}
    {One-Round Sample-And-Verify Protocol}
    {fig:one_round_dsIPP_hamming_weight}

\emph{Query Access Input:} $x\in\{0,1\}^n$.\\
\emph{Deviation Parameters:} $0\leq\epsc<\epsf\leq1$ and
$g=\epsf-\epsc$.\\
\emph{Other Parameters:} Claimed absolute weight
$\sigma\in\{0,\ldots,n\}$ and number of samples
$m=\Theta(1/g^2)$.

\begin{enumerate}[itemsep=2pt,topsep=3pt]

\item $\Verifier\to\Prover$:
sample a $2$-independent function $h\colon[m]\to[n]$ and the verifier's
first-message randomness for the one-round Hamming-weight protocol
of~\cite{STOC:RotVadWig13}, instantiated on inputs of length $m$ with
proximity parameter $g/4$. Send both descriptions to the prover.

Define the virtual string $y\in\{0,1\}^m$ by \(y_i=x_{h(i)}\), \( i\in[m]\).

\item $\Prover\to\Verifier$:
send a claimed absolute weight $\sigma_y\in\{0,\ldots,m\}$ together with
the prover message of the one-round Hamming-weight protocol
of~\cite{STOC:RotVadWig13} for the claim that $y$ has Hamming weight
$\sigma_y$. The honest prover sets
\[
    \sigma_y=\|y\|_1.
\]

\item $\Verifier$:
reject if
\[
    \left|\frac{\sigma_y}{m}-\frac{\sigma}{n}\right|
    >
    \epsc+\frac{g}{4}.
\]
Otherwise, execute the verifier checks of the one-round Hamming-weight
protocol on $y$. Whenever that verifier queries position $i$ of $y$,
query $x_{h(i)}$. Output the resulting verdict.

\end{enumerate}
\end{protofig}

The following theorem gives the resulting complexities. As in Theorem~\ref{thm:dsIPP_hw}, the honest prover reads the entire sampled string $y$, while the verifier accesses only the coordinates queried by the underlying Hamming-weight protocol.

\begin{theorem}[One-round $\dsIPP$ for Hamming weight]
\label{thm:one-round-hw}
There exists a one-round $(\epsc,\epsf)$-IPP for Hamming-weight verification with verifier query complexity \(\Otilde(1/g^{4/3})\), honest-prover query complexity \(O(1/g^2)\), and communication complexity \(O(\log n)+\Otilde(1/g^{4/3})\). In the regime $1/g^2=o(n)$, the protocol is doubly sublinear.
\end{theorem}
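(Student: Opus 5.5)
The plan is to reuse the completeness and soundness analysis of Theorem~\ref{thm:dsIPP_hw} almost verbatim, treating the one-round Hamming-weight protocol of~\cite{STOC:RotVadWig13} as the subprotocol $\Pi_{\mathrm{HW}}$. The only change is the scheduling of messages. The analysis of Theorem~\ref{thm:dsIPP_hw} never uses the fact that $h$ is revealed before the subprotocol's first verifier message. It only uses two facts: (i) $y$ is a pairwise-independent uniform sample of $x$, so Chebyshev bounds $|\wt(y)-\wt(x)|$ independently of the prover; and (ii) conditioned on $h$, the subprotocol on $y$ is sound for any claimed weight $\sigma_y$ with $|\wt(y)-\sigma_y/m|>g/4$. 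I will argue that both facts continue to hold when the two verifier messages are merged.

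\textbf{Step 1: the merged first message does not change the distribution.} The verifier's first-message randomness for the~\cite{STOC:RotVadWig13} protocol, namely the pseudorandom permutation of $[m]$ and the matrix shape $k=m^{2/3}$, $\ell=m^{1/3}$, is sampled independently of $h$. Conditioned on any fixed $h$, and hence on a fixed virtual string $y$, that randomness is therefore distributed exactly as in a standalone run on input $y$. The prover's single reply, consisting of $\sigma_y$ and the row weights, is then an arbitrary function of $(h,\text{first message})$. Since the soundness guarantee of~\cite{STOC:RotVadWig13} holds against every prover strategy and for every claimed weight, including one chosen adaptively after seeing the first message, soundness conditioned on $h$ holds for the adaptively chosen $\sigma_y$.

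The one subtle point, and where I expect the main care to be needed, is that the claim $\sigma_y$ is now chosen after the subprotocol's randomness is revealed. To handle this, I will check that the~\cite{STOC:RotVadWig13} one-round soundness is stated as: for each fixed $y$, with high probability over the first message, no prover reply causes acceptance unless the row claims are consistent with a string close to $y$. Their verifier checks the reported rows against $y$ itself and checks that the row sums add up to the claimed total. This quantifies over all replies after the coins are fixed, so an adaptively chosen $\sigma_y$ is covered. If their statement is only given for a fixed claim, I will union-bound over the $m+1$ possible values of $\sigma_y$ after amplifying the soundness error to $O(1/m)$ by parallel repetition. This costs only a $\log m$ factor, which the $\Otilde$ absorbs, and keeps a single round.

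\textbf{Step 2: completeness and soundness.} With Step~1 in place, the triangle-inequality and Chebyshev computations of Theorem~\ref{thm:dsIPP_hw} carry over unchanged with $m=4/(\delta g^2)$. They give completeness error at most $\delta$ plus the subprotocol's completeness error. They give soundness error at most $\delta/4$ plus the subprotocol's soundness error.

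\textbf{Step 3: complexities.} These are obtained by substituting $N=m=\Theta(1/g^2)$ into the~\cite{STOC:RotVadWig13} one-round bounds, verifier queries and communication $\Otilde(N^{2/3})$ at constant proximity $g/4$, which gives $\Otilde(1/g^{4/3})$. Each subprotocol query to $y_i$ is simulated by one query to $x_{h(i)}$. The honest prover reads all of $y$, for $m=O(1/g^2)$ queries. Sending $h$ adds $O(\log n)$ bits of communication. Double sublinearity in the regime $1/g^2=o(n)$ follows as before.
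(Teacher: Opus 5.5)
Your route is the paper's: send the description of $h$ together with the first verifier message of the one-round protocol of~\cite{STOC:RotVadWig13}, have the prover reply with $\sigma_y$ and the row weights, and reuse the Chebyshev and triangle-inequality analysis of Theorem~\ref{thm:dsIPP_hw}. Your Step~1 is more careful than the paper. The paper only notes that the subprotocol's first message does not depend on the prover and so can be merged. It does not remark that $\sigma_y$ is now chosen after the subprotocol's coins are revealed. Your fallback handles this correctly: amplify the subprotocol to soundness error $\delta/(m+1)$ with $O(\log m)$ repetitions, then union-bound over the values of $\sigma_y$ with $|\wt(y)-\sigma_y/m|>g/4$. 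This costs only an $O(\log(1/g))$ factor, which $\Otilde$ absorbs.

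The complexity computation in Step~3 is wrong as written. The proximity parameter $g/4$ is not constant. Moreover, the one-round bound of~\cite{STOC:RotVadWig13} is not $\Otilde(N^{2/3})$ at constant proximity. In terms of the absolute distance $d$ it is $\Otilde(N/d^{2/3})$, i.e.\ $\Otilde(N^{1/3}/\eps^{2/3})$ at relative proximity $\eps$, which is only $\Otilde(N^{1/3})$ for constant $\eps$. If you took an ``$N^{2/3}$ at constant proximity'' bound literally, you would still owe the extra dependence on $\eps=g/4\to 0$. The correct substitution, as in the paper, is $d=gm/4$. This gives $\Otilde\bigl(m/(gm)^{2/3}\bigr)=\Otilde(m^{1/3}/g^{2/3})=\Otilde(1/g^{4/3})$. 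Equivalently, $d=\Theta(\sqrt{m})$ here, and the $k=m^{2/3}$, $\ell=m^{1/3}$ matrix shape is the instantiation for this regime, not for constant proximity. Your final figure is right only because $\eps=\Theta(m^{-1/2})$ makes $N^{1/3}\eps^{-2/3}=\Theta(N^{2/3})$. Replace that line with the $d$-dependent computation.
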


\begin{proof}
The completeness and soundness arguments are the same as in Theorem~\ref{thm:dsIPP_hw}; the only difference is the choice of the exact-Hamming-weight subprotocol.

Let $d$ denote the absolute proximity parameter of the Hamming-weight subprotocol. Since its input $y$ has length $m$ and its relative proximity parameter is $g/4$, \(d=gm/4\). The one-round Hamming-weight protocol of~\cite{STOC:RotVadWig13} has query and communication complexities \(\Otilde\left(m/(d^{2/3})\right)\) on an $m$-bit input with absolute proximity parameter $d$. Substituting $d=gm/4$ gives
\[
    \Otilde\left(\frac{m}{(gm)^{2/3}}\right)
    =
    \Otilde\left(\frac{m^{1/3}}{g^{2/3}}\right).
\]
Since $m=\Theta(1/g^2)$,
\[
    \frac{m^{1/3}}{g^{2/3}}
    =
    \Theta(1/g^{4/3}),
\]
which proves the claimed verifier query and subprotocol communication bounds.

The verifier's first message in the one-round protocol of~\cite{STOC:RotVadWig13} is independent of the prover's message and can therefore be sent together with the description of $h$. The prover can then send $\sigma_y$ together with its response in the underlying one-round Hamming-weight protocol. Hence the composed protocol still has one verifier message followed by one prover message.

The additional communication required to describe $h$ is $O(\log n)$, as in Theorem~\ref{thm:dsIPP_hw}. Thus the total communication complexity
is
\[
    O(\log n)+\Otilde(1/g^{4/3}).
\]

Finally, the honest prover constructs $y$ by querying at most its $m$ underlying coordinates and stores these values. It can therefore perform the computations required by the Hamming-weight subprotocol without making any additional queries to the original input. Its query complexity is consequently at most
\[
    m=O(1/g^2).
\]

\end{proof}

\begin{table}[tbp]
    \centering
    \begin{tabular}{l @{\hspace{1.2\tabcolsep}} c @{\hspace{1.2\tabcolsep}} c @{\hspace{1.2\tabcolsep}} c @{\hspace{1.2\tabcolsep}} c}
        \toprule
        &Verifier's Queries&Prover's Queries&Communication&Rounds\\
        \midrule
        \cite{ITCS:AmiGolRot25}&$O\left(1/\eps\right)$&$O\left(\sqrt[r]{n}\cdot r^3/\eps^3 \right) $&$O\left( \sqrt[r]{n}\cdot r \cdot \log(r/\eps)/\eps\right)$&$r$\\
        Corollary~\ref{cor:log-round-hw}&$\Otilde(1/\eps)$&$O(1/\eps^2)$&$\Otilde(1/\eps) + O( \log n)$&$O(\log (1/\eps))$\\
        Theorem~\ref{thm:one-round-hw}&$\Otilde(1/\eps^{4/3})$&$O(1/\eps^2)$&$\Otilde(1/\eps^{4/3})+O(\log n)$&$1$\\
        \bottomrule
    \end{tabular}
    \caption{Comparison of the non-tolerant version of our logarithmic-round and one-round $\dsIPP$s with the $\dsIPP$ of \cite{ITCS:AmiGolRot25}. $\eps$ denotes the proximity parameter.}
    \label{tab:hw-comparison}
\end{table}


\subsection{Optimality}

In this subsection, we demonstrate the optimality of the logarithmic-round instantiation of our protocol (see Corollary~\ref{cor:log-round-hw}), by proving the matching query lower bound up to polylogarithmic factors. We then prove that there exists no $\dsIPP$ for the Hamming weight problem that achieves perfect completeness. For $\sigma$ with $\sigma n\in\{0,\ldots,n\}$, write $\HW_n(\sigma):=\{x\in\{0,1\}^n:\wt(x)=\sigma\}$.


\begin{proposition}[Query complexity lower bounds]
\label{prop:hw-query-lower-bounds}
Let \(0\leq\epsc<\epsf\leq\frac14\), \(g:=\epsf-\epsc\) and suppose that $1/g^2\leq n$.
Every $(\epsc,\epsf)$-IPP for Hamming-weight verification satisfies
\(Q_\Verifier=\Omega(1/g)\) and \(Q_\Verifier+Q_\Prover=\Omega(1/g^2)\).
In particular, if $Q_\Verifier=o(1/g^2)$, then $Q_\Prover=\Omega(1/g^2)$.
\end{proposition}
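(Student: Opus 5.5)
We prove the two bounds separately, both by indistinguishability over randomly permuted instances. Fix a claimed relative weight $\sigma/n=1/2$; since $\epsf\leq 1/4$, every weight used below lies in $[1/4,3/4]$. For a string $x$, let $\pi(x)$ denote $x$ with its coordinates permuted by a uniformly random $\pi$. Against such instances, every query of an adaptive algorithm to a not-yet-read coordinate returns a fresh draw \emph{without replacement} from the multiset of bits of $x$. Hence the algorithm's view depends only on the hypergeometric sequence of answers. By Yao's principle, it suffices to bound the distinguishing advantage against these distributions.

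\textbf{Step 1: $Q_\Verifier=\Omega(1/g)$.}
\begin{itemize}
\item Let $x$ have relative weight $1/2+\epsc$, so $x$ is a completeness instance.
\item Let $x'$ be obtained from $x$ by flipping a uniformly random set $S$ of $\lceil gn\rceil+1$ zeros of $x$ to ones. Then $|\wt(x')-1/2|>\epsf$, so $x'$ is a soundness instance.
\item Consider the cheating prover $\Prover^*$ which, knowing $x'$ and $S$, reconstructs $x$ and answers exactly as the designated honest prover $\Prover_h$ would on $x$. This is legal: the cheating prover is unbounded and knows the instance, and $\Prover_h$ only uses oracle access to its input.
\item The views of $\Verifier$ in $\langle\Prover_h,\Verifier^{x}\rangle$ and in $\langle\Prover^*,\Verifier^{x'}\rangle$ coincide unless $\Verifier$ queries a coordinate of $S$.
\item Conditioned on the history, $S$ is still a uniform subset of the zeros not yet queried. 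Since $x$ has at least $n/4$ zeros, each query lands in $S$ with probability $O(g)$, as long as $Q_\Verifier\leq n/8$.
\item A union bound then bounds the probability of ever hitting $S$ by $O(gQ_\Verifier)$.
\end{itemize}
Completeness forces acceptance with probability $\geq 2/3$ on $x$, and soundness forces acceptance with probability $\leq 1/3$ on $x'$. Therefore $O(gQ_\Verifier)\geq 1/3$, which gives $Q_\Verifier=\Omega(1/g)$. If instead $Q_\Verifier>n/8$, the bound holds trivially because $n\geq 1/g^2\geq 1/g$.

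\textbf{Step 2: $Q_\Verifier+Q_\Prover=\Omega(1/g^2)$.} We turn the protocol into a tester.
\begin{itemize}
\item Consider the algorithm $T$ that internally runs $\Verifier$ against $\Prover_h$, both with oracle access to the input, and outputs $\Verifier$'s verdict. Then $T$ makes at most $Q_\Verifier+Q_\Prover$ queries.
\item On completeness instances, $T$ accepts with probability at least $2/3$.
\item On soundness instances, $T$ accepts with probability at most $1/3$, because soundness holds against every prover strategy, in particular against $\Prover_h$. Since $T$ is just a fixed strategy applied to the input, the query bound of $\Prover_h$ on soundness instances is not needed for this argument; we may truncate $T$ after $Q_\Verifier+Q_\Prover$ queries and reject, which only weakens soundness toward rejection.
\end{itemize}
Thus $T$ is an $(\epsc,\epsf)$-tolerant tester for Hamming weight. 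It remains to show that such a tester needs $\Omega(1/g^2)$ queries. Take $x_0$ of weight $1/2+\epsc$ and $x_1$ of weight $1/2+\epsf+1/n$, both randomly permuted. A $q$-query tester then sees $q$ draws without replacement from an urn of bias $p_0$ or $p_1$, with $p_1-p_0=\Theta(g)$ and both in $[1/4,3/4]$.

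\textbf{Main obstacle.} The key estimate is that the total variation distance between the two hypergeometric laws of the answer sequence is $O(g\sqrt q)$. We would prove this through the chain rule for KL divergence. At step $t$, the conditional next-bit laws are Bernoulli with parameters
\[
\frac{p_bn-k_t}{n-t},\qquad b\in\{0,1\},
\]
where $k_t$ is the number of ones seen so far. These two parameters differ by $(p_1-p_0)\,n/(n-t)$.

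The difficulty is that $q$ may be comparable to $n$, so this difference is not uniformly $O(g)$. We handle this as follows.
\begin{itemize}
\item If $q\geq n/2$, there is nothing to prove, because $n\geq 1/g^2$ already gives $q=\Omega(1/g^2)$.
\item If $q<n/2$, each conditional parameter stays in $[1/8,7/8]$ with high probability, for example by a Chernoff bound on the partial sums, and the per-step KL divergence is then $O(g^2)$.
\item Summing over steps gives total KL divergence $O(qg^2)$. By Pinsker's inequality, the total variation distance is $O(g\sqrt q)$.
\end{itemize}
Since the tester must achieve advantage at least $1/3$, we get $q=\Omega(1/g^2)$. The final claim of the proposition follows immediately: if $Q_\Verifier=o(1/g^2)$, then $Q_\Prover=\Omega(1/g^2)$.
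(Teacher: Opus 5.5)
Your plan matches the paper's proof. For $Q_\Verifier=\Omega(1/g)$, you plant flips among the zeros of a completeness string and let a cheating prover replay $\Prover_h$ on the unflipped string. For $Q_\Verifier+Q_\Prover=\Omega(1/g^2)$, you simulate $\Verifier$ against a budget-truncated honest prover to get a tester, then bound the without-replacement answer sequence with the KL chain rule and Pinsker. Step~1 and the tester reduction are fine; rejecting when the budget is exceeded is a valid alternative to the paper's truncated prover. The step that does not go through as written is the final estimate. There, your cutoff $q<n/2$ lets you control the Bernoulli parameters only ``with high probability''.

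The chain rule writes the divergence as $\sum_t\mathbb{E}\bigl[\KL(\Ber(\cdot)\,\|\,\Ber(\cdot))\bigr]$, an expectation over \emph{all} histories. An $O(g^2)$ bound on a high-probability set of histories therefore does not bound the sum unless you also control the remaining histories. With $q<n/2$ this genuinely fails in one direction. The string $x_1$ contains only $(1/2-\epsf)n-1$ zeros. A tester making $q\ge(1/2-\epsf)n$ distinct queries is allowed, since $\epsf>0$. In the completeness world such a tester sees, with positive probability, an answer sequence that is impossible in the soundness world, so $\KL(\mathrm{view}_0\,\|\,\mathrm{view}_1)=+\infty$, where $\mathrm{view}_b$ is the answer sequence on a random permutation of $x_b$.

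The reverse direction can be rescued. Let $u_t=(p_0n-k_t)/(n-t)$. On every history reachable in the soundness world, $u_t\ge 1/n$ and $1-u_t\ge g$, so each per-step term is $O(g^2n)$. Histories leaving a fixed constant interval have probability $e^{-\Omega(n)}$ by Hoeffding's bound for sampling without replacement.

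The clean fix, however, is the paper's: split at a small constant fraction, e.g.\ $q<n/16$. The case $q\ge n/16$ is trivial because $n\ge 1/g^2$. Every history then has $k_t\le t<n/16$, so both parameters lie deterministically in a fixed interval such as $[7/16,13/15]$. Every per-step term is then $O(g^2)$, and the total divergence is $O(qg^2)$ with no probabilistic caveat.
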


\begin{proof}
We ignore rounding throughout; rounding the relevant weights changes them by $O(1/n)$, which does not affect the asymptotic bounds in the regime $1/g^2\leq n$.

We first prove the verifier lower bound. Fix the claimed absolute weight $\sigma=n/2$ and a string $x$ such that
\[
    \wt(x)=\frac12+\epsc.
\]
Thus $x$ is a completeness instance.

Choose uniformly at random a set $I$ of $3gn/2$ coordinates among the zero positions of $x$, and define
\(x^I:=x\oplus\mathbf{1}_I\).
Then
\[
    \wt(x^I)-\frac12
    =
    \epsc+\frac{3g}{2}
    =
    \epsf+\frac{g}{2}
    >
    \epsf,
\]
so $x^I$ is a soundness instance.

On input $x^I$, let $\Prover_I^*$ be the cheating prover that emulates the designated honest prover on the fixed input $x$. Couple this execution with the honest execution on $x$ using the same verifier and prover randomness. Condition on these coins, and let $Q\subseteq[n]$ be the coordinates queried by the verifier in the honest execution.

Since $\epsc\leq1/4$, the string $x$ has at least $n/4$ zero positions.
Therefore
\[
    \Pr_I[I\cap Q\neq\emptyset]
    \leq
    |Q|\frac{3gn/2}{n/4}
    =
    6g|Q|.
\]
Whenever $I\cap Q=\emptyset$, the verifier receives exactly the same
oracle answers and prover messages in the two executions. Hence
\[
    \mathbb{E}_I \left[
        \Pr \left[
            \Verifier^{x^I}\leftrightarrow\Prover_I^*
            \text{ accepts}
        \right]
    \right]
    \geq
    \frac23-6gQ_\Verifier.
\]
By soundness, the left-hand side is at most $1/3$. Thus
\[
    Q_\Verifier=\Omega(1/g).
\]

For the combined-query lower bound, let $\widehat{\Prover}$ be the designated honest prover truncated after $Q_\Prover$ oracle queries: if its simulation attempts to make another query, it stops the simulation
and behaves arbitrarily thereafter. On every completeness instance, $\widehat{\Prover}$ behaves identically to the designated honest prover, while on every input it makes at most $Q_\Prover$ queries.

An ordinary randomized tester can simulate the interaction between $\widehat{\Prover}$ and $\Verifier$, answering every oracle query of either party using its own oracle access. The tester therefore makes at most
\[
    q:=Q_\Verifier+Q_\Prover
\]
queries. Completeness and soundness of the IPP imply that it distinguishes strings of relative weights
\[
    p_Y:=\frac12+\epsc
    \qquad\text{and}\qquad
    p_N:=\frac12+\epsc+\frac{3g}{2}
\]
with constant advantage.

We show that this requires $q=\Omega(1/g^2)$. If $q\geq n/16$, the claim already follows from $1/g^2\leq n$. Otherwise, consider a uniformly random string of each of the two prescribed weights and assume that the tester
never repeats a query. After $t-1$ queries have revealed $s$ ones, the conditional probabilities that the next answer is one are
\[
    u_t=\frac{p_Yn-s}{n-t+1},
    \qquad
    v_t=\frac{p_Nn-s}{n-t+1}.
\]
Since $q<n/16$ and
\[
    \frac12\leq p_Y<p_N\leq\frac78,
\]
both probabilities remain bounded away from $0$ and $1$ by universal
constants, while
\[
    |u_t-v_t|=O(g).
\]
Hence
\[
    D_{\mathrm{KL}}\bigl(\Ber(u_t)\,\|\,\Ber(v_t)\bigr)=O(g^2).
\]
By the chain rule for relative entropy, the KL divergence between the two complete tester views is $O(qg^2)$. Pinsker's inequality therefore bounds their total variation distance by $O(g\sqrt q)$.

The acceptance probabilities in the two cases differ by at least $1/3$, so their total variation distance is at least $1/3$. Consequently
\[
    q=\Omega(1/g^2).
\]
\end{proof}


\begin{proposition}[No perfect completeness for dsIPPs]
\label{prop:no-perfect}
Let $\sigma=\sigma(n)\in(0,1)$ satisfy \(\min\{\sigma,1-\sigma\}=\Omega(1)\) and \(\sigma n\in\{0,\dots,n\}\).
Fix constant parameters
\(0\leq\epsc<\epsf\leq1\)
such that
\( \epsf<\max\{\sigma,1-\sigma\}\).
Then every $(\epsc,\epsf)$-IPP for $\HW_n(\sigma)$ with perfect completeness and constant soundness error satisfies
\[
    Q_\Verifier+Q_\Prover=\Omega(n).
\]
In particular, no such IPP is doubly sublinear.
\end{proposition}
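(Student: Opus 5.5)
Here is the plan. As in the second half of the proof of Proposition~\ref{prop:hw-query-lower-bounds}, I would collapse the prover into the verifier and obtain a tester with $q:=Q_\Verifier+Q_\Prover$ queries. The new ingredient is that perfect completeness turns this into a \emph{one-sided} tester. Concretely, let $\widehat{\Prover}$ be the designated honest prover truncated after $Q_\Prover$ queries. The tester $T$ simulates the interaction $\langle\widehat{\Prover},\Verifier\rangle$ on its oracle, making at most $q$ queries. Two properties transfer:
\begin{itemize}
    \item By perfect completeness, $T$ accepts every $x$ with $|\wt(x)-\sigma|\leq\epsc$ with probability one, for every choice of coins.
    \item By soundness (which holds against every prover, in particular $\widehat{\Prover}$), $T$ rejects every $x$ with $|\wt(x)-\sigma|>\epsf$ with probability at least a constant.
\end{itemize}

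The core step is a rejection-certificate argument. Fix coins of $T$ on which it rejects a far input $z$, and let $S$ be the set of at most $q$ positions queried in that run, with answers $z|_S$. Because $T$ is deterministic once coins and oracle answers are fixed, its run is identical on every string $x$ with $x|_S=z|_S$. Hence $T$ would reject any such $x$ on those same coins. By perfect completeness, no completeness instance can agree with $z$ on $S$.

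So it suffices to exhibit a far instance $z$ and to show that whenever $q$ is small, every partial assignment $z|_S$ with $|S|\leq q$ extends to a string of weight exactly $\sigma$. Such a string is at distance $0\leq\epsc$ from $\HW_n(\sigma)$, contradicting the previous paragraph.

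Choice of $z$: since $\epsf<\max\{\sigma,1-\sigma\}$, take $z=0^n$ if $\sigma>\epsf$, and otherwise take $z=1^n$, in which case $1-\sigma>\epsf$. Either way $|\wt(z)-\sigma|>\epsf$, so $z$ is a soundness instance, and $T$ rejects it on some coins; any one rejecting run suffices.

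Extension step: in the case $z=0^n$, the restriction $z|_S$ fixes at most $q$ zeros. We need to place $\sigma n$ ones among the remaining $n-q$ positions, which is possible whenever $q\leq(1-\sigma)n$. Symmetrically, in the case $z=1^n$ we need $q\leq\sigma n$. Since $\min\{\sigma,1-\sigma\}=\Omega(1)$, the existence of the rejecting run forces $q>\min\{\sigma,1-\sigma\}\,n=\Omega(n)$, which is the claimed bound.

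I expect the main obstacle to be justifying that the truncated-prover simulation preserves perfect completeness. On completeness instances $\widehat{\Prover}$ coincides with $\Prover_h$ exactly, provided $Q_\Prover$ is a worst-case bound over coins, as stated in the preliminaries. The preliminaries also place any oracle randomness inside the coins, so determinism given the coins and the oracle answers is legitimate. With those two points in place, the certificate argument itself is routine and needs no distributional analysis.
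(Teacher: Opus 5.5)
Your proposal is correct and follows essentially the same route as the paper: truncate the honest prover, collapse the interaction into a one-sided tester, fix a rejecting run on $0^n$ or $1^n$, and extend the queried restriction to a string of weight exactly $\sigma$. The only extra care the paper takes is in choosing the rejecting coins. It picks them inside a probability-one set $\mathcal{G}$ of tapes on which every string of weight $\sigma$ is accepted, which exists because there are finitely many such strings. This makes your ``for every choice of coins'' step rigorous even if individual coin sequences could have probability zero.
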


\begin{proof} Suppose toward a contradiction that there is an IPP for $\HW_n(\sigma)$ with perfect completeness, constant soundness error, and 
\[ q(n):=Q_\Verifier(n)+Q_\Prover(n)=o(n). \] 
Let $\Prover_{\mathrm{hon}}$ be the prescribed honest prover. To avoid relying on its behaviour or query complexity outside the yes-instances, define a truncated prover $\widehat{\Prover}$ as follows: it simulates $\Prover_{\mathrm{hon}}$, but if the simulation attempts to make more than $Q_\Prover(n)$ oracle queries, it stops the simulation and behaves arbitrarily thereafter. On every yes-instance, $\widehat{\Prover}$ behaves identically to $\Prover_{\mathrm{hon}}$, while on every input it makes at most $Q_\Prover(n)$ queries. 

Define a prover-free randomized tester $T$ that internally simulates the interaction between $\widehat{\Prover}$ and $\Verifier$. Whenever either simulated party queries a coordinate of the oracle input, $T$ makes the same query and returns the answer to that party. Thus, $T$ makes at most \[ Q_\Verifier(n)+Q_\Prover(n)=q(n) \] 
queries. If $x\in\HW_n(\sigma)$, then $\widehat{\Prover}$ behaves exactly as the honest prover, so perfect completeness implies that $T$ accepts with probability $1$. If $\dist(x,\HW_n(\sigma))>\epsf$, then $\widehat{\Prover}$ is one particular legal prover strategy. Soundness therefore implies that $T$ rejects with constant probability. Hence $T$ is a one-sided-error tester for $\HW_n(\sigma)$ using $q(n)=o(n)$ queries. 

Since there are only finitely many strings of relative Hamming weight $\sigma$, perfect completeness also implies that there is a probability-one set $\mathcal{G}$ of random tapes such that, for every $\rho\in\mathcal{G}$, the tester $T$ accepts every string $x\in\HW_n(\sigma)$ when run with random tape $\rho$.

We now derive a contradiction directly. Since
\[
    \epsf<\max\{\sigma,1-\sigma\},
\]
at least one of the constant strings $0^n$ and $1^n$ is a soundness instance. We consider the two cases separately. 

\noindent \emph{Case 1: $\sigma>\epsf$.}
Then 
\[
\dist\bigl(0^n,\HW_n(\sigma)\bigr)=\sigma>\epsf.
\]
By soundness, $T$ rejects $0^n$ with positive constant probability. Since $\mathcal{G}$ has probability one, we may fix
$\rho\in\mathcal{G}$ for which $T$ rejects $0^n$, and let $Q\subseteq[n]$ be the set of coordinates queried in this execution. Since $|Q|\le q(n)=o(n)$ and $1-\sigma=\Omega(1)$, for all sufficiently large $n$, \[ |Q|<(1-\sigma)n. \] 
Consequently, \[ |[n]\setminus Q|>\sigma n. \] 
Choose a set \[ R\subseteq[n]\setminus Q \qquad\text{such that}\qquad |R|=\sigma n, \] and define $x'\in\{0,1\}^n$ by 
\[ x'_i= \begin{cases} 1 & \text{if } i\in R,\\ 0 & \text{otherwise}. \end{cases} \] 
Then \[ \wt(x')=\frac{|R|}{n}=\sigma, \] so $x'\in\HW_n(\sigma)$. Moreover, on the fixed random coins $\rho$, the tester has exactly the same execution on $x'$ as on $0^n$. Indeed, every queried coordinate lies in $Q$, and $x'_i=0$ for every $i\in Q$. By induction over the adaptive queries, the tester receives the same answers, makes the same subsequent queries, and eventually rejects. Thus $T$ rejects $x'$ on the random tape $\rho$. But
$\rho\in\mathcal{G}$ and $x'\in\HW_n(\sigma)$, contradicting the definition of $\mathcal{G}$.

\noindent \emph{Case 2: $1-\sigma>\epsf$.}
Then 
\[
\dist\bigl(1^n,\HW_n(\sigma)\bigr)=1-\sigma>\epsf.
\]
By the same argument, fix $\rho\in\mathcal{G}$ for which $T$ rejects $1^n$, and let $Q\subseteq[n]$ be the queried coordinates. Since $|Q|\le q(n)=o(n)$ and $\sigma=\Omega(1)$, for all sufficiently large $n$, 
\[ |Q|<\sigma n. \] 
Choose a set \[ R\subseteq[n]\setminus Q \qquad\text{such that}\qquad |R|=\sigma n-|Q|, \] 
and define $x'\in\{0,1\}^n$ by 
\[ x'_i= \begin{cases} 1 & \text{if } i\in Q\cup R,\\ 0 & \text{otherwise}. \end{cases} \] 
Then \[ \|x'\|=|Q|+|R|=\sigma n, \] and hence $x'\in\HW_n(\sigma)$. Every coordinate queried in the execution on $1^n$ belongs to $Q$ and still has value $1$ in $x'$. Therefore, on the fixed coins $\rho$, the tester has the same adaptive execution on $x'$ as on $1^n$ and rejects. This contradicts $\rho\in\mathcal{G}$. In either case we obtain a contradiction. Therefore, 
\[ Q_\Verifier+Q_\Prover=\Omega(n). \] \end{proof} 

\begin{remark}[The completeness error is intrinsic]
\label{rem:perfect-completeness}
Proposition~\ref{prop:no-perfect} shows that, for an interior claimed Hamming weight and any non-vacuous constant soundness radius, perfect completeness is incompatible with
\[
    Q_\Verifier+Q_\Prover=o(n).
\]
Thus, the completeness error of \(\PiSV\) is not an artifact of its concentration argument: any perfectly complete protocol in this regime must have linear total honest-party query complexity.

This does not contradict the perfectly complete Hamming-weight proximity proofs of~\cite{TCC:ArnBenYog24}. Their verifiers are sublinear, but their honest provers read the entire input, so their total honest-party query complexity is linear.
\end{remark}



%% file: unified-4-5.tex
We now study the \emph{distribution-weighted Hamming weight problem}. Let $D$ be a distribution over a finite domain $\mathcal{X}$ and let $f\colon\mathcal{X}\to\{0,1\}$ be an evaluation function. We define the \emph{$D$-weighted Hamming weight of $f$} by
\[
    \wt_D(f)
    :=
    \Pr_{X\sim D}[f(X)=1]
    =
    \mathbb{E}_{X\sim D}[f(X)].
\]
When $\mathcal{X}=[n]$ and $D$ is uniform, $\wt_D(f)$ is precisely the relative Hamming weight of the truth table of $f$.

\begin{remark}[Distance for weighted Hamming weight]
\label{rem:whw-distance}
For distributional weighted Hamming-weight verification, we instantiate Definition~\ref{def:tolerant-ipp} with $\Delta$ the total variation distance. In the case of the distributional weighted Hamming-weight, this is equivalent to the distance between the relative weight and the claimed one:
\[
    \Delta_{\dwHW}\bigl((D,f),\sigma\bigr)
    :=
    \left|\wt_D(f)-\sigma\right|.
\]

\end{remark}

We first consider the setting in which the prover and verifier query the same evaluation function, under black-box and gray-box access to the distribution. We then consider the setting in which they use different evaluation functions that satisfy a bounded disagreement promise. For this second setting, we extend the formulation to finite-precision scores
in $[0,1]$, with Boolean evaluations as a special case.

\paragraph{Access Models.}
Since the distribution $D$ is unknown, query access to the evaluation function alone does not determine the distribution-weighted Hamming weight. We therefore additionally give the parties access to $D$. We consider the following two sampling models, which will be used in both settings below.

\begin{definition}[Black-box sampling] A black-box sampling call returns a fresh independent element $X\sim D$. The caller cannot choose or inspect the randomness used to generate $X$.
\end{definition}

\begin{definition}[Gray-box sampling]
\label{def:gray-box-sampling}
A gray-box sampling call consists in the use of a deterministic sampling oracle
\[
    G\colon\{0,1\}^{R}\longrightarrow\mathcal{X}
    \qquad\text{such that}\qquad
    G(U_R)\sim D,
\]
where $U_R$ is uniform over $\{0,1\}^{R}$. The parties do not need a
description of $G$, but may invoke it on a random tape $r$ of their choice. Each invocation of $G$ counts as one sample-generation call. Evaluating the relevant evaluation function at $G(r)$ therefore costs one call to $G$ and one query to the corresponding evaluation oracle.
\end{definition}

%% file: weighted_hamming_weight.tex
\subsection{Common-Evaluator Setting}
\label{sec:weighted_hamming_weight}

We begin with the setting in which the prover and verifier have query access to the same evaluation function $f$. We first give protocols for black-box and gray-box sampling. Then, we show how a public sample can be reused across several weighted Hamming-weight claims.




\subsubsection{DsIPP with Black-Box Sampling and Query Access}
\label{subsec:whw_black_box}

Fix an error parameter $\delta\in(0,1/4)$ and set
\[
    m
    =
    \left\lceil
        \frac{8\ln(2/\delta)}{g^2}
    \right\rceil.
\]
We use the exact-Hamming-weight subprotocol from
Section~\ref{sec:hamming_weight}, amplified so that its completeness and
soundness errors are at most $\delta$. In the invocation below, its input has length $m$, its claimed absolute weight is an integer
$W\in\{0,\ldots,m\}$, and its proximity parameter is $g/4$.

\begin{protofig}
    {Protocol $\PiWM^{\mathrm{BB}}$}
    {fig:protocol-piwm-bb}

\begin{enumerate}
    \item The verifier $\Verifier$ draws
          $X_1,\ldots,X_m$ independently from $D$ and sends the ordered sample
          $S=(X_1,\ldots,X_m)$
          to the prover $\Prover$. Define the virtual string
          $Y\in\{0,1\}^{m}$ by $Y_j=f(X_j)$.

    \item The honest prover queries $f$ once on each distinct sampled point,
          caches the answers, computes $W=\lVert Y\rVert_1$, and sends $W$
          to the verifier.

    \item The verifier rejects if
          $W\notin\{0,\ldots,m\}$ or
          \[
              \left|
                  \frac{W}{m}-\sigma
              \right|
              >
              \epsc+\frac{g}{2}.
          \]

    \item The parties run the exact-Hamming-weight subprotocol from
          Section~\ref{sec:hamming_weight} on $Y$, with claimed absolute
          weight $W$ and proximity parameter $g/4$.

          Whenever the verifier of that subprotocol queries position $j$ of
          $Y$, the present verifier queries $f(X_j)$. The verifier outputs the
          subprotocol's decision.
\end{enumerate}

\end{protofig}

\begin{theorem}[Black-box upper bound]
\label{thm:black-box-whw}
Protocol $\PiWM^{\mathrm{BB}}$ in
Figure~\ref{fig:protocol-piwm-bb} has completeness and soundness errors
at most $2\delta$ for distributional weighted Hamming-weight verification.
In particular, when $\delta\leq 1/6$, it is an
$(\epsc,\epsf)$-interactive proof with respect to
$\Delta_{\dwHW}$ in the sense of
Definition~\ref{def:tolerant-ipp}.

Using the logarithmic-round exact-Hamming-weight protocol from
Section~\ref{sec:hamming_weight}, its resource bounds are as follows:
\begin{itemize}
    \item the verifier draws exactly
          $m=O(\ln(1/\delta)/g^2)$ samples from $D$ and, for every fixed
          $\delta$, makes
          $\Otilde(1/g)$ queries to $f$;

    \item the honest prover draws no samples from $D$ and makes at most
          $\min\{m,|\mathcal{X}|\}$ distinct queries to $f$;

    \item the communication is
          \[
              O(m\ell)+\Otilde(1/g)
              =
              O\left(
                  \frac{\ell\ln(1/\delta)}{g^2}
              \right)
              +\Otilde(1/g)
          \]
          bits for every fixed $\delta$, and the number of rounds is
          $O(\log(1/g))$ up to a constant additive term.
\end{itemize}
Here $\ell$ denotes the number of bits used to encode an element of
$\mathcal{X}$.
\end{theorem}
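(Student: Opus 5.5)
The plan mirrors the proof of Theorem~\ref{thm:dsIPP_hw}, replacing Chebyshev with Hoeffding, since the sample points $X_1,\ldots,X_m$ are now fully independent. The key observation is that the $Y_j=f(X_j)$ are i.i.d.\ Bernoulli with mean $\wt_D(f)$. Hoeffding's inequality with $m=\lceil 8\ln(2/\delta)/g^2\rceil$ gives
\[
    \Pr\bigl[|\wt(Y)-\wt_D(f)|>g/4\bigr]\leq 2e^{-2m(g/4)^2}\leq\delta .
\]
I will call the complementary event $E$ ("good sample"). The remaining analysis conditions on $E$ and charges $\delta$ for its failure, plus $\delta$ for the amplified subprotocol's error. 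This yields the total $2\delta$.

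\textbf{Completeness.} Suppose $|\wt_D(f)-\sigma|\leq\epsc$. The honest prover sends $W=\lVert Y\rVert_1$. On $E$, the triangle inequality gives $|W/m-\sigma|\leq\epsc+g/4\leq\epsc+g/2$, so step 3 passes. In step 4 the subprotocol receives a YES instance with $\wt(Y)=W/m$ exactly, so it accepts except with probability $\delta$. By a union bound, the completeness error is at most $2\delta$.

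\textbf{Soundness.} Suppose $|\wt_D(f)-\sigma|>\epsf$ and fix any prover strategy. If step 3 passes, then $|W/m-\sigma|\leq\epsc+g/2$. On $E$ we have
\[
    |\wt(Y)-W/m|\geq|\wt_D(f)-\sigma|-|\wt(Y)-\wt_D(f)|-|W/m-\sigma|>\epsf-g/4-\epsc-g/2=g/4 .
\]
So $Y$ is $g/4$-far from having absolute weight $W$, and the subprotocol accepts with probability at most $\delta$. The one subtlety, which I expect to be the main point needing care, is that the prover chooses $W$ after seeing $S$. This is harmless: $E$ depends only on $S$, and the subprotocol's soundness holds for every input $Y$ and every claimed weight $W$, including adaptively chosen ones. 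Conditioning on $S$ handles this, and the soundness error is at most $2\delta$. Taking $\delta\leq1/6$ gives errors at most $1/3$, as Definition~\ref{def:tolerant-ipp} requires.

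\textbf{Resources.} I instantiate step 4 with the Sample-And-Verify protocol of Theorem~\ref{thm:dsIPP_hw} and Corollary~\ref{cor:log-round-hw}, run on an input of length $m$. Since $Y$ is $m=O(\ln(1/\delta)/g^2)$ bits long and the proximity parameter is $g/4$, Corollary~\ref{cor:log-round-hw} gives $\Otilde(1/g)$ verifier queries to $Y$. Amplifying to error $\delta$ costs a factor $O(\log(1/\delta))$, which is constant for fixed $\delta$. Each query to $Y_j$ costs one query $f(X_j)$. The verifier draws exactly $m$ samples. The honest prover draws no samples, and its queries inside the subprotocol are on coordinates of $Y$, all of which it has cached. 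It therefore makes at most $\min\{m,|\mathcal{X}|\}$ distinct queries to $f$.

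Communication consists of the sample $S$ ($m\ell$ bits), plus $O(\log m)$ bits for $W$, plus the subprotocol's $O(\log m)+\Otilde(1/g)$ bits; here the $O(\log n)$ term of Corollary~\ref{cor:log-round-hw} becomes $O(\log m)=O(\log(1/g))$. The total matches the statement. The number of rounds is that of the subprotocol, $O(\log(1/g))$, plus a constant number of rounds for sending $S$ and $W$.
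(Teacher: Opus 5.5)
Your proposal is correct and follows the paper's proof essentially step for step: the same Hoeffding event, which depends only on the verifier's sample; the same triangle-inequality arguments for passing the consistency check and for showing that $Y$ is $g/4$-far from weight $W$; conditioning on the sample and $W$ to invoke the subprotocol's soundness against an adaptive prover; and caching of $Y$ for the honest prover's query bound. Instantiating step 4 with the non-tolerant Sample-And-Verify protocol at proximity $g/4$, rather than directly with the logarithmic-round $\Pi_{\mathrm{HW}}$, is harmless: the prover already has all of $Y$ cached, and the verifier-query, communication and round bounds are unchanged.
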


\begin{proof}
Write $\mu=\wt_D(f)$.
Since the $X_j$ are independent, the bits
$Y_j=f(X_j)$ are independent Bernoulli random variables with mean $\mu$.
Hoeffding's inequality gives
\[
\begin{split}
    \Pr\left[
        \left|
            \frac{\lVert Y\rVert_1}{m}-\mu
        \right|
        >
        \frac{g}{4}
    \right]
    &\leq
    2\exp\left(
        -2m\left(\frac{g}{4}\right)^2
    \right) \\
    &=
    2\exp\left(-\frac{mg^2}{8}\right) \\
    &\leq
    \delta.
\end{split}
\]
Let $\mathcal{E}$ denote the complementary event. The event
$\mathcal{E}$ depends only on $f$ and the verifier's sample, and not on any
message sent by the prover.

Suppose first that $|\mu-\sigma|\leq\epsc$ and the prover is honest. On
$\mathcal{E}$, its message
$W=\lVert Y\rVert_1$ satisfies
\[
\begin{split}
    \left|
        \frac{W}{m}-\sigma
    \right|
    &\leq
    \left|
        \frac{\lVert Y\rVert_1}{m}-\mu
    \right|
    +
    |\mu-\sigma| \\
    &\leq
    \frac{g}{4}+\epsc.
\end{split}
\]
The consistency check therefore passes. The weight supplied to the
exact-Hamming-weight subprotocol is correct, so the subprotocol rejects with
probability at most $\delta$. Including the probability that $\mathcal{E}$
fails, the total rejection probability is at most $2\delta$.

Now suppose that $|\mu-\sigma|>\epsf$ and fix an arbitrary cheating
prover. Condition on any sample for which
$\mathcal{E}$ holds and any message $W$ that passes the consistency check.
Then
\[
\begin{split}
    \left|
        \frac{\lVert Y\rVert_1}{m}-\frac{W}{m}
    \right|
    &\geq
    |\mu-\sigma|
    -
    \left|
        \mu-\frac{\lVert Y\rVert_1}{m}
    \right|
    -
    \left|
        \frac{W}{m}-\sigma
    \right| \\
    &>
    \epsf
    -
    \frac{g}{4}
    -
    \left(
        \epsc+\frac{g}{2}
    \right) \\
    &=
    \frac{g}{4}.
\end{split}
\]
Thus $Y$ is more than $g/4$-far from every string of absolute Hamming
weight $W$. After the sample and $W$ have been fixed, the residual cheating strategy is a valid cheating prover for the exact-Hamming-weight subprotocol on the fixed input $Y$. Its acceptance probability is at most $\delta$. Including the probability that $\mathcal{E}$ fails, the total acceptance probability is at most $2\delta$.

The verifier draws exactly $m$ samples. The honest prover queries $f$ once
on each distinct point occurring in the sample and caches the answers.
It therefore makes at most $\min\{m,|\mathcal{X}|\}$ distinct queries to
$f$, and later queries made during the exact-Hamming-weight subprotocol
cause no additional honest-prover queries. The verifier's queries and the
remaining communication are those of that subprotocol at proximity
parameter $g/4$. Sending the ordered sample costs $m\ell$ bits. The stated
bounds follow.
\end{proof}

The next theorem shows that the quadratic sample dependence on $1/g$ is
necessary. The lower bound applies whenever the two Bernoulli parameters
used in the proof remain at least $\alpha$ away from $0$ and $1$.

\begin{theorem}[Black-box sample lower bound]
\label{thm:black-box-sample-lower-bound}
Fix $\delta\in(0,1/4)$ and $\alpha\in(0,1/2)$. Suppose there exists
$\xi\in\{-1,1\}$ such that
$p_{\mathrm{Y}}:=\sigma+\xi\epsc$ and
$p_{\mathrm{N}}:=\sigma+\xi(\epsc+2g)$
both belong to $[\alpha,1-\alpha]$.
Every $(\epsc,\epsf)$-interactive proof with respect to
$\Delta_{\dwHW}$ in the black-box model, with completeness and
soundness errors at most $2\delta$, requires the verifier to draw at
least
\[
    \frac{
        \alpha(1-\alpha)(1-4\delta)^2
    }{
        2g^2
    }
\]
black-box samples in the worst case.

This holds regardless of the number of rounds, the communication, the
number of queries to $f$, and the sample access given to the honest prover.
\end{theorem}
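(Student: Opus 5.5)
We reduce the interactive proof to a two-point hypothesis test between Bernoulli sample distributions.

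\textbf{Hard instances.} Take a two-point domain $\mathcal{X}=\{0,1\}$ and let $f$ be the identity, so that $\wt_D(f)=D(1)$. Define two instances with the same $f$:
\begin{itemize}
    \item $D_{\mathrm{Y}}=\Ber(p_{\mathrm{Y}})$, which has $\Delta_{\dwHW}=\epsc$ and is therefore a completeness instance;
    \item $D_{\mathrm{N}}=\Ber(p_{\mathrm{N}})$, which has $\Delta_{\dwHW}=\epsc+2g>\epsf$ and is therefore a soundness instance.
\end{itemize}
Because $f$ is fixed and public, queries to $f$ carry no information beyond the sampled points.

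\textbf{Simulation by the verifier.} In the honest interaction on $D_{\mathrm{Y}}$, the verifier's view is some randomized function of its samples, its coins, and the prover's messages. Let the cheating prover on $D_{\mathrm{N}}$ emulate the designated honest prover as if the distribution were $D_{\mathrm{Y}}$. This is legitimate because a cheating prover is unbounded and knows the instance. The emulated prover needs samples from $D_{\mathrm{Y}}$, which it can generate itself since $p_{\mathrm{Y}}$ is known. If the honest prover is allowed to look at the verifier's samples, these are passed along in the same way in both worlds.

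\textbf{Coupling argument.} Couple the two worlds on everything except the verifier's own samples. The acceptance probabilities then differ by at most the total variation distance between the verifier's sample vectors, $\TV(\Ber(p_{\mathrm{Y}})^{\otimes s},\Ber(p_{\mathrm{N}})^{\otimes s})$, where $s$ is the number of samples the verifier draws. We need some care with adaptive sample counts: pad the verifier so that it always draws exactly $s$, the worst-case count. Data processing over the whole transcript gives the bound above. Completeness forces acceptance with probability at least $1-2\delta$ on $D_{\mathrm{Y}}$, and soundness forces acceptance with probability at most $2\delta$ on $D_{\mathrm{N}}$. Hence
\[
\TV\bigl(\Ber(p_{\mathrm{Y}})^{\otimes s},\Ber(p_{\mathrm{N}})^{\otimes s}\bigr)\geq 1-4\delta .
\]

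\textbf{Upper bound on the distance.} Combine Pinsker's inequality with the $\chi^2$ bound on KL divergence:
\[
\KL\bigl(\Ber(p_{\mathrm{Y}})\,\|\,\Ber(p_{\mathrm{N}})\bigr)\leq \frac{(p_{\mathrm{Y}}-p_{\mathrm{N}})^2}{p_{\mathrm{N}}(1-p_{\mathrm{N}})}\leq \frac{4g^2}{\alpha(1-\alpha)},
\]
where the last step uses $p_{\mathrm{N}}\in[\alpha,1-\alpha]$. By tensorization the divergence over $s$ samples is at most $4sg^2/(\alpha(1-\alpha))$, so
\[
1-4\delta\leq \sqrt{\tfrac{2sg^2}{\alpha(1-\alpha)}}.
\]
Rearranging gives $s\geq \alpha(1-\alpha)(1-4\delta)^2/(2g^2)$, which is exactly the stated constant.

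\textbf{Main obstacle.} The delicate step is justifying that the prover's messages can be simulated identically in both worlds. In particular, the honest prover's sample access in the $D_{\mathrm{Y}}$ world must not leak information about the true $D$. The cheating prover handles this by drawing its own samples from $D_{\mathrm{Y}}$. This is why the bound is independent of rounds, communication, and queries.
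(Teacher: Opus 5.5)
Your proposal is correct and follows essentially the same route as the paper. Both use the same Bernoulli instances on $\{0,1\}$ with $f$ the identity, and a cheating prover that emulates the designated honest prover using self-generated samples from $\Ber(p_{\mathrm{Y}})$. Both then reduce via coupling and data processing to $\TV\bigl(\Ber(p_{\mathrm{Y}})^{\otimes s},\Ber(p_{\mathrm{N}})^{\otimes s}\bigr)\geq 1-4\delta$, and finish with the $\chi^2$ bound on KL, tensorization, and Pinsker. Your padding of the verifier to exactly $s$ samples does the same job as the paper's trick of pre-drawing $s$ samples and revealing them on demand to handle adaptive sample requests.
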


\begin{proof}
Consider the domain $\mathcal{X}=\{0,1\}$ and the fixed public function
$f(a)=a$.
For $p\in[0,1]$, let $D_p=\Ber(p)$. We compare the two instances
$(D_{p_{\mathrm{Y}}},f)$ and $(D_{p_{\mathrm{N}}},f)$. Because
$\wt_{D_p}(f)=p$, the first instance satisfies
$|\wt_{D_{p_{\mathrm{Y}}}}(f)-\sigma|=\epsc$ and is therefore a
completeness instance.

For the second instance,
\[
\begin{split}
    \left|
        \wt_{D_{p_{\mathrm{N}}}}(f)-\sigma
    \right|
    &=
    \epsc+2g \\
    &=
    \epsf+g \\
    &>
    \epsf.
\end{split}
\]
It is therefore a soundness instance. Notice also that
$|p_{\mathrm{N}}-p_{\mathrm{Y}}|=2g$.

Let $\Prover_{\mathrm{Y}}$ be the designated honest-prover strategy on
the first instance. On the second instance, define a cheating prover
$\Prover^*$ that runs $\Prover_{\mathrm{Y}}$ online against the verifier's actual messages. It uses fresh coins distributed as the honest prover's coins and, whenever the simulated honest prover requests a sample, generates an independent sample from $D_{p_{\mathrm{Y}}}$. Thus, conditioned on the transcript so far, $\Prover^*$ has exactly the same response distribution as the honest prover in the completeness execution. This is a valid cheating strategy because soundness quantifies over arbitrary prover strategies.

Suppose the verifier draws at most $s$ black-box samples. For each
execution, draw $s$ independent verifier samples in advance and reveal them in order whenever the verifier requests a sample. This represents the same distribution even when the number and timing of the requests are adaptive. Couple the verifier's coins, the honest prover's coins in the completeness execution and $\Prover^*$'s coins in the soundness execution, and all samples used internally by these two prover strategies. The function oracle is the same in both instances. Consequently, the verifier's complete view in the two executions is obtained by applying the same randomized post-processing to either $D_{p_{\mathrm{Y}}}^{\otimes s}$ or $D_{p_{\mathrm{N}}}^{\otimes s}$. By the data-processing inequality,
\[
    \TV\left(
        \mathsf{View}_{\mathrm{Y}},
        \mathsf{View}_{\mathrm{N}}
    \right)
    \leq
    \TV\left(
        D_{p_{\mathrm{Y}}}^{\otimes s},
        D_{p_{\mathrm{N}}}^{\otimes s}
    \right).
\]

For Bernoulli distributions,
\[
\begin{split}
    \KL\left(
        D_{p_{\mathrm{Y}}}
        \,\middle\|\,
        D_{p_{\mathrm{N}}}
    \right)
    &\leq
    \frac{
        (p_{\mathrm{Y}}-p_{\mathrm{N}})^2
    }{
        p_{\mathrm{N}}(1-p_{\mathrm{N}})
    } \\
    &\leq
    \frac{4g^2}{\alpha(1-\alpha)}.
\end{split}
\]

Tensorization of relative entropy and Pinsker's inequality give
\[
\begin{split}
    \TV\left(
        D_{p_{\mathrm{Y}}}^{\otimes s},
        D_{p_{\mathrm{N}}}^{\otimes s}
    \right)
    &\leq
    \sqrt{
        \frac{1}{2}
        \KL\left(
            D_{p_{\mathrm{Y}}}^{\otimes s}
            \,\middle\|\,
            D_{p_{\mathrm{N}}}^{\otimes s}
        \right)
    } \\
    &\leq
    g
    \sqrt{
        \frac{2s}{\alpha(1-\alpha)}
    }.
\end{split}
\]

Completeness implies that the verifier accepts in the first execution with probability at least $1-2\delta$. Soundness implies that it accepts in the second execution with probability at most $2\delta$. Consequently,
\[
    1-4\delta
    \leq
    \TV\left(
        \mathsf{View}_{\mathrm{Y}},
        \mathsf{View}_{\mathrm{N}}
    \right).
\]

Combining the preceding inequalities gives
\[
    1-4\delta
    \leq
    g
    \sqrt{
        \frac{2s}{\alpha(1-\alpha)}
    }.
\]

Rearranging,
\[
    s
    \geq
    \frac{
        \alpha(1-\alpha)(1-4\delta)^2
    }{
        2g^2
    }.
\]
\end{proof}

\begin{remark}[Boundary regimes]
\label{rem:black-box-boundary}
The interior condition in
Theorem~\ref{thm:black-box-sample-lower-bound} is substantive. For
example, suppose that $\sigma=\epsc=0$, so that $g=\epsf$. A verifier can
draw $s$ independent samples, query $f$ on each sample, and accept if and
only if every answer is zero. This test has perfect completeness. If
$\wt_D(f)>g$, its acceptance probability is at most
\((1-g)^s\leq \exp(-gs)\).

Thus $s=\lceil\ln(1/\delta)/g\rceil$ samples suffice for soundness error at most $\delta$. The quadratic lower bound is therefore not universal near the boundary of $[0,1]$; the optimality result above concerns parameter regimes in which suitable completeness and soundness means remain bounded away from $0$ and $1$.
\end{remark}

\begin{remark}[Dependence on the error probability]
\label{rem:black-box-error-dependence}
Theorem~\ref{thm:black-box-sample-lower-bound} is stated to make the
constant-error dependence explicit. If $\delta$ is allowed to approach
zero, the same pair of instances and the Bretagnolle-Huber inequality give
\[
    s
    \geq
    \frac{\alpha(1-\alpha)}{4g^2}
    \ln\left(\frac{1}{8\delta}\right)
\]
whenever $\delta<1/8$. Thus the $O(\ln(1/\delta)/g^2)$ sample dependence in Theorem~\ref{thm:black-box-whw} is also asymptotically optimal as $\delta$ tends to zero.
\end{remark}

\begin{corollary}
\label{cor:black-box-sample-optimal}
For every fixed $\delta\in(0,1/6]$ and every fixed $\alpha\in(0,1/2)$, the $O(1/g^2)$ verifier-sample complexity of Theorem~\ref{thm:black-box-whw} is optimal up to a constant factor on parameter settings satisfying the interior condition of Theorem~\ref{thm:black-box-sample-lower-bound}.
\end{corollary}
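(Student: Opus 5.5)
This corollary follows by pairing the upper bound of Theorem~\ref{thm:black-box-whw} with the lower bound of Theorem~\ref{thm:black-box-sample-lower-bound}. Nothing new has to be constructed.

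\textbf{Upper bound.} Fix $\delta\in(0,1/6]$. Protocol $\PiWM^{\mathrm{BB}}$ run with parameter $\delta$ has completeness and soundness errors at most $2\delta\leq1/3$. It is therefore a valid $(\epsc,\epsf)$-interactive proof in the sense of Definition~\ref{def:tolerant-ipp}. Its verifier draws exactly
\[
m=\left\lceil 8\ln(2/\delta)/g^2\right\rceil
\]
samples. Since $\delta$ is fixed, $m\leq C_\delta/g^2$ for a constant $C_\delta$ that depends only on $\delta$.

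\textbf{Lower bound.} On any parameter setting $(\sigma,\epsc,g)$ satisfying the interior condition for the fixed $\alpha$, Theorem~\ref{thm:black-box-sample-lower-bound} applies to every protocol whose errors are at most $2\delta$. In particular it applies to $\PiWM^{\mathrm{BB}}$ with the same $\delta$, and to any competing protocol in the same error class. Every such protocol must draw at least
\[
c_{\alpha,\delta}/g^2, \qquad c_{\alpha,\delta}=\alpha(1-\alpha)(1-4\delta)^2/2,
\]
samples in the worst case. This constant is strictly positive because $\delta\leq1/6<1/4$ and $\alpha\in(0,1/2)$.

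\textbf{Conclusion.} The ratio between the upper and lower bounds is $C_\delta/c_{\alpha,\delta}$. It is independent of $g$, $\sigma$, $\epsc$ and $n$, so the sample complexity is optimal up to a constant factor. No single step is hard. The one point that needs care is matching the error classes: the lower bound must be invoked with the same $2\delta$ error guarantee that Theorem~\ref{thm:black-box-whw} certifies.

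If one instead benchmarks against arbitrary protocols with error $1/3$, that error corresponds to $2\delta=1/3$, i.e.\ $\delta=1/6$. This still satisfies $\delta<1/4$, so the lower bound still applies with a positive constant.
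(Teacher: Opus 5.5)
Your proposal is correct and is exactly the argument the paper intends; the corollary is stated without a separate proof. You instantiate Theorem~\ref{thm:black-box-whw} with $\delta\le 1/6$, which gives $m=\lceil 8\ln(2/\delta)/g^2\rceil=O_\delta(1/g^2)$ samples at error $2\delta\le 1/3$, and compare it with the bound $\alpha(1-\alpha)(1-4\delta)^2/(2g^2)$ of Theorem~\ref{thm:black-box-sample-lower-bound} at the same error level $2\delta$, whose constant is positive since $\delta<1/4$; matching the error classes is the only point needing care, and you handle it correctly.
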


\begin{remark}[Communication in the black-box model]
\label{rem:black-box-communication}
Protocol $\PiWM^{\mathrm{BB}}$ sends the entire realized sample to the
prover and therefore uses $O(m\ell)$ bits for this step. Theorem
\ref{thm:black-box-sample-lower-bound} is only a sample lower bound and
does not imply a corresponding communication lower bound: a different
protocol might avoid revealing the sample explicitly. We leave open whether one can simultaneously obtain $O(1/g^2)$ black-box verifier samples, $\Otilde(1/g)$ verifier queries to $f$, and $o(1/g^2)$ communication when the cost of representing domain elements is accounted for separately.
\end{remark}

\subsubsection{DsIPP with Gray-Box Sampling and Query Access}
\label{subsec:whw_gray_box}

Fix the same error parameter $\delta\in(0,1/4)$ as in the preceding
subsection. The ability to choose the sampling oracle's random tape turns the distributional problem into an ordinary Hamming-weight problem over the sampler's random tapes. Throughout this subsection, the function $f$ and sampling oracle $G$ are fixed before the protocol begins; neither may depend on the verifier's subsequent randomness or queries.

\begin{lemma}[Pullback to the sampler's random tapes]
\label{lem:gray-box-pullback}
Let $G\colon\{0,1\}^{R}\to\mathcal{X}$ satisfy $G(U_R)\sim D$. Define
the length-$2^R$ Boolean string
$z\colon\{0,1\}^{R}\to\{0,1\}$ by $z(r)=f(G(r))$ for every
$r\in\{0,1\}^{R}$. Then $\wt(z)=\wt_D(f)$.
Moreover, one query to $z(r)$ can be simulated using one chosen-randomness call to $G$ and one query to $f$.
\end{lemma}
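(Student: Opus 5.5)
The plan is a direct unfolding of definitions. First, compute the relative Hamming weight of $z$ as an average over the $2^R$ random tapes:
\[
    \wt(z)=\frac{1}{2^R}\sum_{r\in\{0,1\}^R} f\bigl(G(r)\bigr)=\mathbb{E}_{r\sim U_R}\bigl[f(G(r))\bigr].
\]
Next, use the hypothesis $G(U_R)\sim D$. The random variable $G(U_R)$ has the same law as $X\sim D$, so the expectation of any function of it is unchanged when $G(U_R)$ is replaced by $X$. In particular,
\[
    \mathbb{E}_{r\sim U_R}\bigl[f(G(r))\bigr]=\mathbb{E}_{X\sim D}[f(X)]=\wt_D(f).
\]
This is a change of variables along the pushforward. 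Since $\mathcal{X}$ is finite, it can also be checked by grouping tapes according to $G(r)=x$ and using $|G^{-1}(x)|/2^R=D(x)$.

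For the simulation claim, I will describe the procedure explicitly. To answer a query to position $r$ of $z$, invoke $G$ on the chosen tape $r$, which Definition~\ref{def:gray-box-sampling} allows at cost one sample-generation call. This returns $x=G(r)$, and one query to $f$ at $x$ then returns $z(r)=f(G(r))$.

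The only subtlety is determinism. Because $G$ is a fixed deterministic oracle, and $f$ and $G$ are fixed before the protocol, repeated queries to the same $r$ return consistent answers. Hence $z$ is a well-defined fixed string to which the Section~\ref{sec:hamming_weight} machinery can later be applied. I expect no real obstacle here; the step that needs care is stating this fixedness explicitly, so that $z$ is a legitimate oracle input.
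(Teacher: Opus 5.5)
Your proposal is correct and matches the paper's proof: the paper writes $\wt(z)=2^{-R}\sum_{r} f(G(r))=\mathbb{E}_{r\leftarrow U_R}[f(G(r))]$, uses $G(U_R)\sim D$ to replace this with $\mathbb{E}_{X\sim D}[f(X)]=\wt_D(f)$, and says the query simulation follows directly from $z(r)=f(G(r))$. Your fiber-counting check $|G^{-1}(x)|/2^R=D(x)$ and your remark that $f$ and $G$ are fixed are consistent with the paper; the paper states that fixedness at the start of the subsection rather than inside the proof.
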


\begin{proof}
By the definition of relative Hamming weight and the assumption that $G$
pushes the uniform distribution on its random tapes forward to $D$,
\[
\begin{split}
    \wt(z)
    &=
    2^{-R}
    \sum_{r\in\{0,1\}^{R}} z(r) \\
    &=
    2^{-R}
    \sum_{r\in\{0,1\}^{R}} f(G(r)) \\
    &=
    \mathbb{E}_{r\leftarrow U_R}[f(G(r))] \\
    &=
    \mathbb{E}_{X\sim D}[f(X)] \\
    &=
    \wt_D(f).
\end{split}
\]
The query simulation follows directly from the definition
$z(r)=f(G(r))$.
\end{proof}

\begin{protofig}
    {Protocol $\PiWM^{\mathrm{GB}}$}
    {fig:protocol-piwm-gb}

\begin{enumerate}
    \item Define the virtual string
          $z\colon\{0,1\}^{R}\to\{0,1\}$ by $z(r)=f(G(r))$.

    \item The parties run the tolerant logarithmic-round Hamming-weight
          protocol from Section~\ref{sec:hamming_weight}, sequentially
          amplified so that its completeness and soundness errors are at
          most $2\delta$, on oracle access to $z$, with claimed weight
          $\sigma$, completeness radius $\epsc$, and soundness radius
          $\epsf$.

          Whenever either party queries a coordinate $z(r)$, it computes
          that coordinate by invoking $G(r)$ and then querying
          $f(G(r))$.
\end{enumerate}
\end{protofig}

\begin{theorem}[Gray-box upper bound]
\label{thm:gray-box-whw}
Protocol $\PiWM^{\mathrm{GB}}$ in
Figure~\ref{fig:protocol-piwm-gb} has completeness and soundness errors
at most $2\delta$ for distributional weighted Hamming-weight verification
in the gray-box model. In particular, when $\delta\leq 1/6$, it is an
$(\epsc,\epsf)$-interactive proof with respect to
$\Delta_{\dwHW}$ in the sense of
Definition~\ref{def:tolerant-ipp}.

For every fixed $\delta$, using the logarithmic-round protocol from
Section~\ref{sec:hamming_weight}, its resource bounds are as follows:
\begin{itemize}
    \item the verifier makes $\Otilde(1/g)$ chosen-randomness calls to $G$ and
          $\Otilde(1/g)$ queries to $f$;

    \item the honest prover makes $O(1/g^2)$ chosen-randomness calls to $G$ and
          $O(1/g^2)$ queries to $f$;

    \item the communication is $O(R)+\Otilde(1/g)$ bits, and the number of
          rounds is $O(\log(1/g))$.
\end{itemize}
Any other round-query trade-off established in
Section~\ref{sec:hamming_weight} transfers in the same way.
\end{theorem}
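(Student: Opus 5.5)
The plan is to reduce the gray-box problem to ordinary Hamming weight through Lemma~\ref{lem:gray-box-pullback}, and then read off every property from Corollary~\ref{cor:log-round-hw}.

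\textbf{Correctness.} Set $N=2^R$ and view $z$ as an input in $\{0,1\}^N$. By Lemma~\ref{lem:gray-box-pullback}, $\wt(z)=\wt_D(f)$. Hence
\[
    \Delta_{\dwHW}\bigl((D,f),\sigma\bigr)=\bigl|\wt(z)-\sigma\bigr|,
\]
which is exactly the distance used for the Hamming-weight problem with claimed relative weight $\sigma$, i.e.\ claimed absolute weight $\sigma N$. So completeness instances of $\dwHW$ map to completeness instances of $\HW$ on $z$, and soundness instances map to soundness instances.

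Both parties can simulate oracle access to $z$ faithfully, because each query $z(r)$ is answered by calling $G(r)$ and then querying $f$. The pair $(f,G)$ is fixed before the protocol starts, so $z$ is a fixed string independent of the verifier's coins.

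Soundness then needs one observation. Any cheating prover against $\PiWM^{\mathrm{GB}}$ is also a cheating prover against the Hamming-weight verifier on the fixed input $z$. Such a prover may know $z$ entirely, which Definition~\ref{def:tolerant-ipp} already allows.

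For the error bound, take the protocol of Corollary~\ref{cor:log-round-hw} with constant errors at most $1/3$. Run it $O(\log(1/\delta))$ times sequentially and take a majority vote, which brings both completeness and soundness errors down to $2\delta$. Since $\delta$ is fixed, this costs only a constant factor.

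\textbf{Resources.} Each query to $z$ costs exactly one call to $G$ and one query to $f$. The query bounds therefore transfer directly:
\begin{itemize}
    \item the verifier makes $\Otilde(1/g)$ calls and $\Otilde(1/g)$ queries;
    \item the honest prover makes $O(1/g^2)$ of each.
\end{itemize}
The honest prover's count comes from reading $y_i=z(h(i))$ for $i\in[m]$. It caches these values, so the subprotocol causes no further calls.

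For communication, Theorem~\ref{thm:dsIPP_hw} gives $O(\log N)+\Otilde(1/g)$, and $\log N=R$, so the bound is $O(R)+\Otilde(1/g)$. The round count $O(\log(1/g))$ is inherited from Corollary~\ref{cor:log-round-hw}.

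\textbf{Main obstacle: the regime $1/g^2>2^R$.} Here the premise $m<n$ used in Theorem~\ref{thm:dsIPP_hw} fails. The fix is that the prover can simply read all of $z$ with $2^R\le O(1/g^2)$ calls. This is still within the bound, and the description of $h$ costs $O(\log m+R)$ bits, which is absorbed into $\Otilde(1/g)$.

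The same pullback argument transfers any other round--query trade-off from Section~\ref{sec:hamming_weight}, for example Theorem~\ref{thm:one-round-hw}.
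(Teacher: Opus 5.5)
Your proposal is correct and follows the paper's proof. Like the paper, you pull back to the string $z(r)=f(G(r))$ via Lemma~\ref{lem:gray-box-pullback}, answer each coordinate query with one call to $G$ and one query to $f$, and inherit the errors, query counts, rounds and the $O(\log 2^R)+\Otilde(1/g)=O(R)+\Otilde(1/g)$ communication from the Section~\ref{sec:hamming_weight} protocol. Your additional remarks are valid details the paper leaves implicit: the sequential majority-vote amplification, and the regime $1/g^2>2^R$, where the description of $h$ costs $O(R+\log m)$ bits and is absorbed into $\Otilde(1/g)$.
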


\begin{proof}
Let $z$ be the string defined in
Lemma~\ref{lem:gray-box-pullback}. Since $\wt(z)=\wt_D(f)$, the
distributional promise for $(D,f)$ is exactly the tolerant Hamming-weight
promise for $z$.

Protocol $\PiWM^{\mathrm{GB}}$ simulates the Section~\ref{sec:hamming_weight} protocol on $z$. Whenever that protocol
queries the coordinate indexed by $r\in\{0,1\}^{R}$, the querying party
computes the same answer by invoking $G(r)$ and querying $f(G(r))=z(r)$.
Thus the simulated transcript has exactly the same distribution as a direct execution of the Hamming-weight protocol on $z$. Its completeness and soundness errors are therefore at most $2\delta$.

The virtual string $z$ has length $n=2^R$. The logarithmic-round protocol from Section~\ref{sec:hamming_weight}, with gap $g=\epsf-\epsc$, makes $\Otilde(1/g)$ verifier queries and $O(1/g^2)$ honest-prover queries. Each such query becomes one call to $G$ and one query to $f$.

Using the compressed-sample version of that protocol, its communication is $O(\log n)+\Otilde(1/g)$. Since $\log n=R$, this becomes $O(R)+\Otilde(1/g)$. The number of rounds is unchanged and equals $O(\log(1/g))$.
\end{proof}

\subsubsection{Reusable Public Samples}
\label{subsec:reusable-public-samples}

The gray-box model also allows the parties to select a public sample once and reuse it for several later weighted Hamming-weight claims. The following proposition gives the statistical guarantee of this reuse.

\begin{proposition}[Reusable public audit set] 
\label{prop:reusable-audit-set} 
Fix the gray-box sampler $G$ and Boolean functions $f_1,\ldots,f_k\colon\mathcal{X}\to\{0,1\}$ before the public sample is selected, and write $\mu_a:=\wt_D(f_a)$ for $a\in[k]$. Let $r_1,\ldots,r_m$ be independent uniform elements of $\{0,1\}^{R}$, where $\delta_{\rm samp}\in(0,1)$ and 
\[ m \geq \left\lceil \frac{8\ln(2k/\delta_{\rm samp})}{g^2} \right\rceil, \]
and define $Y_i^{(a)}:=f_a(G(r_i))$ for each $a\in[k]$ and $i\in[m]$. Except with probability at most $\delta_{\rm samp}$ over the one-time selection of the public sample, all $k$ strings simultaneously satisfy 
\[ \left| \frac{\lVert Y^{(a)}\rVert_1}{m} - \mu_a \right| \leq \frac{g}{4}. \] 

Conditioned on this event, for any $a\in[k]$ and claimed value $\sigma_a$, the consistency check and exact-Hamming-weight subprotocol used in Protocol~$\PiWM^{\mathrm{BB}}$ can be run on $Y^{(a)}$ without transmitting the sample itself. If that subprotocol is run with proximity parameter $g/4$ and has completeness and soundness errors at most $\delta_{\rm ipp}$, then the resulting execution has completeness and soundness errors at most $\delta_{\rm ipp}$, conditioned on the public sample. Unconditionally, each execution has error at most $\delta_{\rm samp}+\delta_{\rm ipp}$.
\end{proposition}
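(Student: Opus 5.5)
The plan is to follow the proof of Theorem~\ref{thm:black-box-whw}, adding a union bound over the $k$ criteria and separating the one-time sampling randomness from the randomness of each later execution.

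\textbf{Step 1: the concentration event.} Fix $a\in[k]$. Since $G$, $f_1,\ldots,f_k$ are fixed before the $r_i$ are drawn and the $r_i$ are independent and uniform, the bits $Y^{(a)}_i=f_a(G(r_i))$ are i.i.d.\ Bernoulli with mean $\wt_D(f_a)=\mu_a$. This is exactly the pushforward identity of Lemma~\ref{lem:gray-box-pullback}. Hoeffding's inequality at deviation $g/4$ gives a failure probability of at most $2\exp(-mg^2/8)$. With $m\geq 8\ln(2k/\delta_{\rm samp})/g^2$ this is at most $\delta_{\rm samp}/k$. A union bound over the $k$ criteria then gives the simultaneous event $\mathcal{E}$ with probability at least $1-\delta_{\rm samp}$.

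\textbf{Step 2: running the protocol on the public sample.} The sample is described publicly by $(r_1,\ldots,r_m)$, so both parties know it and nothing needs to be sent in Step~1 of $\PiWM^{\mathrm{BB}}$. A verifier query to position $i$ of $Y^{(a)}$ is simulated by one call to $G(r_i)$ followed by one query to $f_a$.

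\textbf{Step 3: correctness conditioned on $\mathcal{E}$.} Here I reuse the argument of Theorem~\ref{thm:black-box-whw} verbatim, with $\mathcal{E}$ in place of that proof's event.

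\emph{Completeness.} The honest weight $W=\lVert Y^{(a)}\rVert_1$ satisfies $|W/m-\sigma_a|\leq g/4+\epsc$, so the consistency check passes. The only remaining error is that of the subprotocol, which is at most $\delta_{\rm ipp}$.

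\emph{Soundness.} Any $W$ passing the check satisfies $|\lVert Y^{(a)}\rVert_1/m - W/m|>g/4$ by the same triangle inequality as before. The residual cheating strategy is therefore a cheating prover for exact Hamming weight on a fixed string that is $g/4$-far from every string of weight $W$, and it is accepted with probability at most $\delta_{\rm ipp}$.

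\textbf{Step 4: unconditional bound.} Adding the probability of $\neg\mathcal{E}$ gives an error of at most $\delta_{\rm samp}+\delta_{\rm ipp}$ for each execution.

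The main subtlety is justifying the conditioning in Step~3. The soundness argument needs $\mathcal{E}$ to be independent of everything the prover does, including adaptive choices of which criterion $a$ to claim about and which $\sigma_a$ to claim, even after the prover has seen the sample. This holds because $\mathcal{E}$ is defined simultaneously for all $k$ fixed functions and depends only on $G$, the $f_a$, and the public tapes, and these are fixed before the public sample is selected. The verifier's subprotocol coins are fresh in each execution, so conditioning on any public sample in $\mathcal{E}$ leaves the subprotocol's guarantees intact.
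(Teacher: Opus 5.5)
Your proposal is correct and follows the paper's proof essentially step for step. Both apply Hoeffding at deviation $g/4$ with a union bound over the $k$ criteria, then rerun the completeness and soundness arguments of Theorem~\ref{thm:black-box-whw} conditioned on the simultaneous concentration event, and finally add $\delta_{\rm samp}$ for the unconditional bound. Your remark that the simultaneous event also covers a criterion $a$ and claim $\sigma_a$ chosen after the sample is public is a helpful clarification that the paper leaves implicit.
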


\begin{proof} 
For each fixed $a\in[k]$, the variables $Y^{(a)}_1,\ldots,Y^{(a)}_m$ are independent Bernoulli random variables with mean $\mu_a$. Hence Hoeffding's inequality gives 
\[ \Pr\left[ \left| \frac{\lVert Y^{(a)}\rVert_1}{m} - \mu_a \right| > \frac{g}{4} \right] \leq 2\exp\left(-\frac{mg^2}{8}\right) \leq \frac{\delta_{\rm samp}}{k}. \] 
A union bound over $a\in[k]$ proves the simultaneous claim. 

Condition on a public sample for which these inequalities hold and fix $a\in[k]$. The completeness and soundness arguments are then exactly the same as in Theorem~\ref{thm:black-box-whw}. For completeness, the honest prover reports $W_a=\lVert Y^{(a)}\rVert_1$, and the concentration bound implies that the consistency check passes. For soundness, any claimed weight $W_a$ that passes the consistency check must differ from the true weight of $Y^{(a)}$ by more than the proximity threshold $g/4$ whenever $|\mu_a-\sigma_a|>\epsf$. Soundness of the exact-Hamming-weight subprotocol therefore applies. Since the public random tapes are already known to both parties, they do not need to be transmitted. Soundness of the exact-Hamming-weight subprotocol holds for every fixed string $Y^{(a)}$ and every cheating prover, even when the public sample is known. Adding the probability that the simultaneous concentration event fails gives the unconditional bound $\delta_{\rm samp}+\delta_{\rm ipp}$. 
\end{proof}

%% file: corrupted_queries.tex
\subsection{Evaluator Disagreement}
\label{sec:corrupted_queries}
The protocols above assume that the prover and the verifier obtain the same evaluation on every sampled point. We now relax this assumption. When an evaluation is supplied by a human, a model owner and an auditor may use
different evaluators whose scores are close on most outputs but unrelated on a small exceptional part of the population. We ask whether the verifier can nevertheless certify the mean computed using the prover's scoring rule.

We model the two scoring rules by fixed evaluation functions. The designated
honest prover queries one function, while the verifier queries the other. Both
functions are fixed before the protocol begins, so the model excludes an
online adversary that chooses an answer after seeing the verifier's query. The
protocol certifies the prover-side mean; the verifier-side function is used
only to check the prover's claim.

The binary-splitting protocol of Aaronson, Gur, Rajgopal, and
Rothblum~\cite{aaronson_et_al:LIPIcs.CCC.2024.24} checks Boolean leaf values.
We extend its leaf check to bounded scores. If the prover claims a score $a$
and the verifier obtains $b$, the verifier records a mismatch with probability
$|a-b|$. This randomized comparison preserves the discrepancy between a false
root claim and the verifier-side sample mean, while charging an honest prover
only for the disagreement already present between the two evaluation
functions.

\subsubsection{Evaluator disagreement model}
\label{subsec:corrupted-query-model}

We state the protocol for finite-precision scores. Fix a public integer
$K\geq 1$, and let $\mathcal{S}_K$ consist of the multiples of $1/K$ in
$[0,1]$. The case $K=1$ gives Boolean evaluations. Let $D$ be a distribution
over a finite domain $\mathcal{X}$, and let
$f_{\Prover},f_{\Verifier}\colon\mathcal{X}\to\mathcal{S}_K$ be the
prover-side and verifier-side evaluation functions. We denote their means by
$\mu_{\Prover}:=\mathbb{E}_{X\sim D}[f_{\Prover}(X)]$ and
$\mu_{\Verifier}:=\mathbb{E}_{X\sim D}[f_{\Verifier}(X)]$.

\begin{definition}[Bounded $(\rho,\gamma)$-mismatch]
\label{def:fixed-score-mismatch}
Let $\rho,\gamma\in[0,1]$. The pair
$(f_{\Prover},f_{\Verifier})$ has bounded $(\rho,\gamma)$-mismatch with
respect to $D$ if there is a set $B\subseteq\mathcal{X}$ such that
$D(B)\leq\rho$ and
$|f_{\Prover}(x)-f_{\Verifier}(x)|\leq\gamma$ for every $x\notin B$.
No restriction is imposed on the two scores on $B$.
\end{definition}

An instance in this section consists of
$(D,f_{\Prover},f_{\Verifier})$ satisfying
Definition~\ref{def:fixed-score-mismatch}. In the black-box model, the
verifier has sampling access to $D$ and query access to $f_{\Verifier}$,
while the designated honest prover has query access to $f_{\Prover}$. A
cheating prover remains unrestricted and may know $D$ and both evaluation
functions completely. The mismatch bound is a promise on the instance; the
protocol does not certify that this promise holds.

For a claim $\sigma\in[0,1]$, define the problem-specific distance by
$\Delta_{\mathrm{mis}}((D,f_{\Prover},f_{\Verifier}),\sigma)
:=|\mu_{\Prover}-\sigma|$. Thus the protocol certifies the prover-side mean,
although the verifier cannot query $f_{\Prover}$.

The mismatch promise yields the following bound.

\begin{lemma}[Mean discrepancy]
\label{lem:fixed-mismatch-mean}
Suppose that $(f_{\Prover},f_{\Verifier})$ has bounded
$(\rho,\gamma)$-mismatch with respect to $D$, and define
$\kappa:=\rho+(1-\rho)\gamma$. Then
\[
    \mathbb{E}_{X\sim D}
    \bigl[|f_{\Prover}(X)-f_{\Verifier}(X)|\bigr]
    \leq \kappa
    \qquad\text{and}\qquad
    |\mu_{\Prover}-\mu_{\Verifier}|\leq\kappa.
\]
\end{lemma}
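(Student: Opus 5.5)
The plan is to split the expectation according to the exceptional set $B$ from Definition~\ref{def:fixed-score-mismatch}, and then to derive the second inequality from the first by Jensen's inequality.

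Fix a set $B$ witnessing the bounded $(\rho,\gamma)$-mismatch, so $D(B)\leq\rho$ and $|f_{\Prover}(x)-f_{\Verifier}(x)|\leq\gamma$ off $B$. Since both functions take values in $\mathcal{S}_K\subseteq[0,1]$, we have $|f_{\Prover}(x)-f_{\Verifier}(x)|\leq 1$ for every $x$. Writing $\beta:=D(B)\leq\rho$, we split the expectation over $B$ and its complement:
\[
    \mathbb{E}_{X\sim D}\bigl[|f_{\Prover}(X)-f_{\Verifier}(X)|\bigr]
    \leq \beta\cdot 1+(1-\beta)\gamma
    = \gamma+\beta(1-\gamma).
\]

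The right-hand side is nondecreasing in $\beta$ because $1-\gamma\geq 0$. Substituting $\beta\leq\rho$ therefore gives
\[
    \gamma+\rho(1-\gamma)=\rho+(1-\rho)\gamma=\kappa,
\]
which proves the first inequality. This monotonicity is the only point needing care: we cannot simply bound $1-\beta$ by $1-\rho$, since $1-\beta\geq 1-\rho$. Rewriting the bound as $\gamma+\beta(1-\gamma)$ sidesteps this.

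For the second inequality, linearity of expectation gives $\mu_{\Prover}-\mu_{\Verifier}=\mathbb{E}[f_{\Prover}(X)-f_{\Verifier}(X)]$. Then $|\mathbb{E}[Z]|\leq\mathbb{E}[|Z|]$, applied to $Z=f_{\Prover}(X)-f_{\Verifier}(X)$, bounds $|\mu_{\Prover}-\mu_{\Verifier}|$ by the quantity just shown to be at most $\kappa$. The domain is finite, so every expectation is a finite sum and no integrability issues arise.
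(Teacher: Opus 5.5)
Your proof is correct and follows essentially the same route as the paper: split the expectation on the exceptional set $B$ to get $D(B)+(1-D(B))\gamma$, bound this by $\kappa$ using $D(B)\leq\rho$, and deduce the mean bound from $|\mathbb{E}[Z]|\leq\mathbb{E}[|Z|]$. The one addition is that you explicitly justify the step $D(B)+(1-D(B))\gamma\leq\rho+(1-\rho)\gamma$ via monotonicity in $D(B)$ (using $1-\gamma\geq 0$); the paper leaves this step implicit.
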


\begin{proof}
Let $B$ be the exceptional set from
Definition~\ref{def:fixed-score-mismatch}. The pointwise discrepancy is at
most $1$ on $B$ and at most $\gamma$ outside $B$. Therefore
\[
    \mathbb{E}_{X\sim D}
    \bigl[|f_{\Prover}(X)-f_{\Verifier}(X)|\bigr]
    \leq D(B)+(1-D(B))\gamma
    \leq \rho+(1-\rho)\gamma=\kappa.
\]
The mean bound follows from
$|\mathbb{E}[Z]|\leq\mathbb{E}[|Z|]$ applied to
$Z=f_{\Prover}(X)-f_{\Verifier}(X)$.
\end{proof}

Let $0\leq\epsc<\epsf\leq1$ and write $g:=\epsf-\epsc$. In a completeness
instance, the verifier-side mean may be as far as $\epsc+\kappa$ from
$\sigma$. In a soundness instance, it is more than $\epsf-\kappa$ from
$\sigma$. The mean-level separation visible through $f_{\Verifier}$ is
therefore
\(\eta:=(\epsf-\kappa)-(\epsc+\kappa)=g-2\kappa\).

We assume from now on that $\eta>0$. This is precisely the regime in which
Lemma~\ref{lem:fixed-mismatch-mean} leaves a positive separation between the
completeness and soundness ranges visible through $f_{\Verifier}$.

\paragraph{Randomized comparison of two scores.}
For $a,b\in\mathcal{S}_K$, draw $U$ uniformly from $[K]$ and define
\(
    C_K(a,b;U)
    :=
    \mathbf{1}
    \bigl[
        \mathbf{1}[U\leq Ka]
        \neq
        \mathbf{1}[U\leq Kb]
    \bigr].
\)
The two threshold indicators differ for exactly $K|a-b|$ choices of $U$,
and hence
\begin{equation}
\label{eq:randomized-score-comparison}
    \Pr_{U\leftarrow[K]}[C_K(a,b;U)=1]=|a-b|.
\end{equation}
This is a randomized threshold comparison.
For arbitrary real scores in $[0,1]$, the same identity holds with
$U\sim\mathrm{Unif}[0,1]$. We use the finite grid only to implement the
comparison exactly and to count communication bits.

\subsubsection{A robust splitting check}
\label{subsec:robust-splitting-check}

The verifier receives a claimed sum for an $m$-point sample but cannot check
every summand. It instead checks one path in a balanced binary tree of partial
sums. At each node, the prover commits to both child sums before learning
which child the verifier will inspect.

Let $\mathcal{T}_m$ be the balanced binary interval tree over $[m]$. Its root
is $[m]$, and every nonsingleton interval $I$ has consecutive children
$I_0,I_1$ whose sizes differ by at most one. At the beginning of a descent,
the verifier privately samples $J\leftarrow[m]$ and follows the unique path
from the root to $\{J\}$. Conditional on reaching an interval $I$, the next
child is $I_b$ with probability $|I_b|/|I|$.

For an interval $I$, let
$\mathcal{W}_K(I):=\{z/K:z\in\{0,\ldots,K|I|\}\}$. Starting from a root
claim $W\in\mathcal{W}_K([m])$, the prover sends claims for both children
before the verifier reveals which child contains $J$. The verifier checks
that both claims lie in their prescribed ranges and sum to the current claim.
At a leaf $\{J\}$, the remaining claim belongs to $\mathcal{S}_K$.

The following lemma records the property of this check that we need. The flag
$Z=1$ assigned to an inconsistent split defines a relaxed analytical
experiment. The protocol itself will reject immediately when such a split
occurs.

\begin{lemma}[One robust descent]
\label{lem:one-robust-descent}
Fix $v=(v_1,\ldots,v_m)\in\mathcal{S}_K^m$ and a root claim
$W\in\mathcal{W}_K([m])$. Run one descent against an arbitrary prover. If a
range or additivity check fails, set $Z=1$ and end the descent. Otherwise,
upon reaching a leaf $\{J\}$ with claim $a$, draw a fresh $U\leftarrow[K]$
after receiving $a$ and set $Z=C_K(a,v_J;U)$. Conditional on any transcript
fixed before the descent begins,
\begin{equation}
\label{eq:robust-descent-lower-bound}
    \mathbb{E}[Z]
    \geq
    \left|
        \frac{W}{m}-\frac{1}{m}\sum_{j=1}^m v_j
    \right|.
\end{equation}
If the prover fixes $p=(p_1,\ldots,p_m)\in\mathcal{S}_K^m$, sets
$W=\sum_jp_j$, and always sends the true partial sums of $p$, then
\[
    \Pr[Z=1]=\frac{1}{m}\sum_{j=1}^m|p_j-v_j|.
\]
\end{lemma}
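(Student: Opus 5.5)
The natural route is induction on the tree $\mathcal{T}_m$, from the leaves up. We prove a stronger statement for every node. For an interval $I$, write $v(I):=\sum_{j\in I}v_j$. Consider the sub-descent that starts at $I$ with a current claim $c\in\mathcal{W}_K(I)$ and chooses children with probabilities $|I_b|/|I|$. The invariant is that, against any prover continuation, the resulting flag $Z$ satisfies
\[
    \mathbb{E}[Z]\geq \frac{|c-v(I)|}{|I|}.
\]
Applying this at $I=[m]$ with $c=W$ gives \eqref{eq:robust-descent-lower-bound}. Conditioning on a transcript fixed before the descent does no harm. The prover's later strategy is then just some (possibly randomized) map from the history to messages. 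The verifier's private $J$, revealed only as the next child, is independent of that history. So the bound holds for each fixed deterministic continuation, and by averaging it holds for randomized ones too.

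\emph{Base case.} For a leaf $\{j\}$, the claim $a$ either fails the range check, giving $Z=1\geq|a-v_j|$ because both values lie in $[0,1]$, or satisfies $a\in\mathcal{S}_K$. In the latter case \eqref{eq:randomized-score-comparison}, with $U$ drawn fresh after $a$ is fixed, gives $\mathbb{E}[Z]=|a-v_j|$.

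\emph{Inductive step.} At a nonsingleton $I$, the prover sends child claims $c_0,c_1$ without knowing which child lies on $J$'s path. If a range or additivity check fails, then $Z=1$, and this is at least $|c-v(I)|/|I|$ whenever $c$ is in range. Otherwise $c_0+c_1=c$, and the inductive hypothesis applied to each child (both children are valid intervals of the tree) yields
\[
\mathbb{E}[Z]\geq\sum_{b}\frac{|I_b|}{|I|}\cdot\frac{|c_b-v(I_b)|}{|I_b|}=\frac{1}{|I|}\sum_b|c_b-v(I_b)|\geq\frac{|c-v(I)|}{|I|}
\]
by the triangle inequality. The one point to check carefully is that the weights $|I_b|/|I|$ exactly cancel the normalizations $1/|I_b|$. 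This cancellation is why the descent must choose children with size-proportional probabilities, and why this choice makes $J$ uniform on $[m]$. It is the crux of the argument rather than a genuine obstacle. A second check is that the verifier's per-step choices, drawn from a privately fixed $J$, are equivalent to fresh size-proportional choices at each step from the prover's point of view, since the prover sees only the prefix of the path.

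\emph{Equality for the honest-style prover.} Suppose every split is the true partial sums of $p$. These sums lie in the prescribed ranges, since $p_j\in\mathcal{S}_K$, and they are additive, so no check fails. The leaf claim is then $p_J$, with $J$ uniform on $[m]$. By \eqref{eq:randomized-score-comparison},
\[
\Pr[Z=1]=\frac{1}{m}\sum_j|p_j-v_j|.
\]
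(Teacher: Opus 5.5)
Your proof is correct and is essentially the paper's argument. The paper tracks the signed normalized excess $e(I)/|I|$ as a stopped martingale, relying on the same cancellation of $|I_b|/|I|$ against $1/|I_b|$, and applies $\mathbb{E}|M|\geq|\mathbb{E}M|$ once to the terminal value; you instead apply the triangle inequality at each node in a backward induction. The failure case, the leaf identity from the randomized comparison, and the honest-prover equality are handled the same way in both.
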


\begin{proof}
Fix the prior transcript and the prover's coins used during the descent. We
may therefore analyze a deterministic continuation. For every visited
interval $I$ with claim $c(I)$, define its signed excess by
$e(I):=c(I)-\sum_{j\in I}v_j$. Follow $e(I)/|I|$ until the first
inconsistent split, at which point we stop at the normalized excess of its
parent.

At every consistent split, $e(I)=e(I_0)+e(I_1)$. Since the verifier follows
child $I_b$ with conditional probability $|I_b|/|I|$,
\[
    \mathbb{E}
    \left[
        \left.\frac{e(I_b)}{|I_b|}\,\right| I
    \right]
    =
    \sum_{b\in\{0,1\}}
        \frac{|I_b|}{|I|}\frac{e(I_b)}{|I_b|}
    =
    \frac{e(I)}{|I|}.
\]
Thus the stopped normalized excess is a martingale. Its absolute value is at
most $1$, because every valid claim and the corresponding true partial sum
lie in $[0,|I|]$.

Let $M$ denote its terminal value. Then
$\mathbb{E}[M]=W/m-m^{-1}\sum_jv_j$. If an inconsistent split occurs, then
$Z=1\geq|M|$. Otherwise, the descent reaches a leaf $\{J\}$ with claim $a$,
so $M=a-v_J$, and
Equation~\eqref{eq:randomized-score-comparison} gives
$\mathbb{E}[Z\mid J,a]=|M|$. Hence
\[
    \mathbb{E}[Z]
    \geq \mathbb{E}[|M|]
    \geq |\mathbb{E}[M]|
    =
    \left|
        \frac{W}{m}-\frac{1}{m}\sum_{j=1}^m v_j
    \right|.
\]
Under the honest partial-sum strategy, every split is consistent and $J$ is
uniform over $[m]$. The second claim now follows from
Equation~\eqref{eq:randomized-score-comparison}.
\end{proof}

\subsubsection{Black-box protocol with evaluator disagreement} 
\label{subsec:corrupted-query-protocol}

We now combine the robust descent with concentration over the sampled points.
The verifier first checks that the claimed sample mean lies in the
completeness range. It then repeats the descent enough times to distinguish
the honest disagreement rate, which is close to at most $\kappa$, from the
rate forced by a false root claim, which is close to at least $g-\kappa$.

Fix an error parameter $\delta\in(0,1/3)$ and set
\[
    a:=\frac{\eta}{8},
    \qquad
    m:=\left\lceil\frac{32\ln(12/\delta)}{\eta^2}\right\rceil,
    \qquad
    q:=\left\lceil\frac{24g\ln(12/\delta)}{\eta^2}\right\rceil.
\]

\begin{protofig}
    {Protocol $\Pi_{\mathrm{mis}}^{\mathrm{BB}}$}
    {fig:protocol-corrupted-queries}

\begin{protodesc}
    \item[Common input.]
          A claim $\sigma$, parameters $\epsc,\epsf,\rho,\gamma,K,\delta$,
          and the derived quantities $\kappa=\rho+(1-\rho)\gamma$ and
          $\eta=g-2\kappa>0$.

    \item[Oracle access.]
          The verifier samples from $D$ and queries $f_{\Verifier}$. The
          designated honest prover queries $f_{\Prover}$.
\end{protodesc}

\begin{enumerate}
    \item The verifier draws independent samples $X_1,\ldots,X_m\sim D$ and
          sends the ordered sample $S=(X_1,\ldots,X_m)$ to the prover.

    \item The honest prover sets $p_j:=f_{\Prover}(X_j)$, querying each
          distinct sampled point once, and sends $W:=\sum_{j=1}^m p_j$.

    \item The verifier rejects unless $W\in\mathcal{W}_K([m])$ and
          \[
              \left|\frac{W}{m}-\sigma\right|\leq\epsc+a.
          \]

    \item The parties perform $q$ sequential descents of $\mathcal{T}_m$,
          each rooted at $W$. In every descent, the prover sends both child
          claims before the verifier reveals the child containing its private
          index $J_t$. If a range or additivity check fails, the verifier
          rejects immediately. Otherwise, the descent reaches $\{J_t\}$ with
          claim $A_t$. The verifier queries
          $V_t:=f_{\Verifier}(X_{J_t})$, draws a fresh
          $U_t\leftarrow[K]$ after receiving $A_t$, and sets
          $Z_t:=C_K(A_t,V_t;U_t)$.

    \item If no previous check failed, the verifier accepts if and only if
          \[
              \sum_{t=1}^q Z_t\leq\frac{gq}{2}.
          \]
\end{enumerate}
\end{protofig}

\begin{theorem}[Black-box upper bound with evaluator disagreement] 
\label{thm:corrupted-query-upper-bound}
Let $0\leq\epsc<\epsf\leq1$, write $g=\epsf-\epsc$, and suppose that
$(f_{\Prover},f_{\Verifier})$ has bounded $(\rho,\gamma)$-mismatch with
respect to $D$. Define $\kappa:=\rho+(1-\rho)\gamma$ and
$\eta:=g-2\kappa$. If $\eta>0$, Protocol
$\Pi_{\mathrm{mis}}^{\mathrm{BB}}$ is an
$(\epsc,\epsf)$-interactive proof with respect to
$\Delta_{\mathrm{mis}}$, with completeness and soundness errors at most
$\delta$.

The verifier draws
$m=O(\ln(1/\delta)/\eta^2)$ samples from $D$ and makes
$q=O(g\ln(1/\delta)/\eta^2)$ queries to $f_{\Verifier}$. The honest prover
draws no samples and makes at most $\min\{m,|\mathcal{X}|\}$ distinct queries
to $f_{\Prover}$. If a point of $\mathcal{X}$ is encoded with $\ell$ bits,
the communication is
\[
    O\bigl(m\ell+q\log m\log(mK)\bigr)
\]
bits, and the number of rounds is $O(q\log m)$.
\end{theorem}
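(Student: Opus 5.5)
The proof splits into a sample-concentration part and a descent-concentration part, with errors allocated so that each failure event has probability at most $\delta/6$ or so.

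\textbf{Concentration events.} First fix the sample and apply Hoeffding to the independent $[0,1]$-valued variables $f_{\Prover}(X_j)$, $f_{\Verifier}(X_j)$ and $|f_{\Prover}(X_j)-f_{\Verifier}(X_j)|$. With $m=\lceil 32\ln(12/\delta)/\eta^2\rceil$, each empirical mean deviates from its expectation by more than $a=\eta/8$ with probability at most $2\exp(-2ma^2)\le \delta/6$ (up to the constant chosen). Denote the empirical means by $\hat\mu_{\Prover}$, $\hat\mu_{\Verifier}$ and $\hat d$. By Lemma~\ref{lem:fixed-mismatch-mean}, on the good event we have $\hat d\le\kappa+a$.

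\textbf{Completeness.} The honest $W/m=\hat\mu_{\Prover}$ is within $\epsc+a$ of $\sigma$, so step 3 passes. In each descent the honest prover sends true partial sums, so by the second part of Lemma~\ref{lem:one-robust-descent} the $Z_t$ are i.i.d.\ Bernoulli with mean $\hat d\le\kappa+a$ (conditionally on the sample). The threshold $g/2$ exceeds $\kappa+a$ by $\eta/2-a=3\eta/8$. Rather than additive Hoeffding, which would need $q\sim 1/\eta^2$, I would use a multiplicative Chernoff/Bernstein bound: the mean is at most $g/2$ and the deviation is $\Theta(\eta)$, so the failure probability is about $\exp(-q\eta^2/(c\,g))$. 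With $q=\lceil 24g\ln(12/\delta)/\eta^2\rceil$ this is at most $\delta/6$. This is where the factor $g$ in $q$ comes from, and it requires checking constants, using that the variance is at most the mean $\le g$.

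\textbf{Soundness.} Suppose $|\mu_{\Prover}-\sigma|>\epsf$. Then $|\mu_{\Verifier}-\sigma|>\epsf-\kappa$, so on the good event $|\hat\mu_{\Verifier}-\sigma|>\epsf-\kappa-a$. If step 3 passes, then
\[
|W/m-\hat\mu_{\Verifier}|>\epsf-\kappa-a-\epsc-a=g-\kappa-2a .
\]
Any rejection from an inconsistent split only helps the verifier, so we can analyze the relaxed experiment in which such a split sets $Z_t=1$. By Lemma~\ref{lem:one-robust-descent} applied with $v_j=f_{\Verifier}(X_j)$, each descent has $\mathbb{E}[Z_t\mid \text{past}]\ge g-\kappa-2a$, conditionally on the entire previous transcript, including against an adaptive prover. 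Since $\kappa=(g-\eta)/2$, this mean is $g/2+\eta/2-2a=g/2+\eta/4$. So $\sum_t(Z_t-\mathbb{E}[Z_t\mid\mathcal{F}_{t-1}])$ is a martingale with bounded increments, and a martingale Chernoff/Freedman bound, or a coupling with i.i.d.\ Bernoulli$(g/2+\eta/4)$, gives $\Pr[\sum Z_t\le gq/2]\le\exp(-\Omega(q\eta^2/g))\le\delta/6$. Summing the sample-failure and descent-failure probabilities gives error at most $\delta$.

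\textbf{Main obstacle and resources.} The main obstacle is exactly this adaptivity together with the variance-sensitive bound. The per-descent lower bound must hold conditionally on the past, which Lemma~\ref{lem:one-robust-descent} provides, and the threshold gap is only $\Theta(\eta)$ around means of order $g$. I would use Freedman's inequality with conditional variance at most $1/4$ replaced by the bound $\mathbb{E}Z_t\le 1$ appropriately. Alternatively I would use the standard multiplicative Chernoff bound for supermartingale lower tails, $\Pr[\sum Z_t\le(1-\beta)q p]\le e^{-\beta^2qp/2}$ with $p=g/2+\eta/4$ and $\beta\approx\eta/(2g)$, which indeed needs only $q=O(g\ln(1/\delta)/\eta^2)$.

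The resource bounds are then direct counts. There are $m$ samples and $q$ leaf queries. The honest prover queries each distinct point once. Each descent has $O(\log m)$ levels, and at each level the prover sends two claims of $O(\log(mK))$ bits; the sample costs $m\ell$ bits. This gives $O(m\ell+q\log m\log(mK))$ communication and $O(q\log m)$ rounds.
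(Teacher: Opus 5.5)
Your proposal is correct and follows essentially the same route as the paper. That route is Hoeffding on the empirical prover, verifier and discrepancy means with $a=\eta/8$; Bernstein for the honest i.i.d.\ leaf flags, whose mean is at most $g/2-3\eta/8$; and, for soundness, the relaxed verifier plus Lemma~\ref{lem:one-robust-descent}, giving conditional rejection probability at least $g/2+\eta/4$, followed by stochastic domination of $\operatorname{Bin}(q,g/2+\eta/4)$ and a multiplicative Chernoff bound using $g/2+\eta/4\le 3g/4$, which yields exactly $\exp(-q\eta^2/(24g))$. Keep the coupling/multiplicative-Chernoff version rather than Freedman with the crude conditional-variance bound $1/4$, since the latter would only support $q=O(\ln(1/\delta)/\eta^2)$ instead of $O(g\ln(1/\delta)/\eta^2)$.
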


\begin{proof}
For the sample drawn in the first step, define
\[
    \overline p
    :=\frac{1}{m}\sum_{j=1}^m f_{\Prover}(X_j),
    \qquad
    \overline v
    :=\frac{1}{m}\sum_{j=1}^m f_{\Verifier}(X_j),
    \qquad
    \overline d
    :=\frac{1}{m}\sum_{j=1}^m
       |f_{\Prover}(X_j)-f_{\Verifier}(X_j)|.
\]

\emph{Completeness.}
Suppose that $|\mu_{\Prover}-\sigma|\leq\epsc$ and that the prover follows
the designated strategy. Hoeffding's inequality and
Lemma~\ref{lem:fixed-mismatch-mean} give
\[
    \Pr[|\overline p-\mu_{\Prover}|>a]
    \leq 2e^{-2ma^2}\leq\frac{\delta}{6},
    \qquad
    \Pr[\overline d>\kappa+a]
    \leq e^{-2ma^2}\leq\frac{\delta}{12}.
\]
Condition on the complementary events. Since $W/m=\overline p$, the root
check passes. In addition,
$\overline d\leq\kappa+\eta/8=g/2-3\eta/8$.

The honest prover supplies the true partial sums, so every consistency check
passes. Conditional on the sample,
Lemma~\ref{lem:one-robust-descent} shows that
$Z_1,\ldots,Z_q$ are independent Bernoulli variables with mean
$\overline d$. Bernstein's inequality gives
\[
    \Pr\left[\sum_{t=1}^q Z_t>\frac{gq}{2}\right]
    \leq
    \exp\left(-\frac{9q\eta^2}{64g}\right)
    \leq\frac{\delta}{12},
\]
where we used $2\overline d+\eta/4\leq g$ and the definition of $q$.
The total rejection probability is therefore at most $\delta$.

\emph{Soundness.}
Suppose that $|\mu_{\Prover}-\sigma|>\epsf$, and fix an arbitrary cheating
prover. Hoeffding's inequality gives
\[
    \Pr[|\overline v-\mu_{\Verifier}|>a]
    \leq 2e^{-2ma^2}
    \leq\frac{\delta}{6}.
\]
Fix a sample in the complementary event and a root claim $W$ that passes the
root check. Lemma~\ref{lem:fixed-mismatch-mean} and the triangle inequality
imply
\begin{equation}
\label{eq:false-root-verifier-gap}
    \left|\frac{W}{m}-\overline v\right|
    >
    \epsf-\kappa-a-(\epsc+a)
    =
    \frac{g}{2}+\frac{\eta}{4}.
\end{equation}

For the analysis, consider the relaxed verifier that records $Z_t=1$ and
continues whenever a consistency check fails. This can only increase the
acceptance probability. Condition on the complete transcript before descent
$t$. The root claim remains $W$, so
Lemma~\ref{lem:one-robust-descent} and
Equation~\eqref{eq:false-root-verifier-gap} give
\[
    \Pr[Z_t=1\mid\text{prior transcript}]
    \geq
    p_*:=\frac{g}{2}+\frac{\eta}{4}.
\]
This bound allows the prover to adapt its claims between descents. A sequence
of Bernoulli variables whose conditional success probabilities are at least
$p_*$ stochastically dominates $\operatorname{Bin}(q,p_*)$. Since
$p_*\leq3g/4$, a Chernoff bound yields
\[
    \Pr\left[\sum_{t=1}^qZ_t\leq\frac{gq}{2}\right]
    \leq
    \exp\left(-\frac{q\eta^2}{24g}\right)
    \leq\frac{\delta}{12}.
\]
After accounting for the sampling event, the cheating prover is accepted
with probability at most $\delta$.

The verifier makes at most one query to $f_{\Verifier}$ per descent. The
honest prover evaluates $f_{\Prover}$ once on each distinct sampled point and
reuses the stored values in every descent. Sending the ordered sample costs
$m\ell$ bits. Each descent has depth $O(\log m)$ and exchanges
$O(\log(mK))$ bits per level, which proves the remaining bounds.
\end{proof}

The following regime recovers the query exponents of the protocol with a
common evaluation function.

\begin{corollary}[Mismatch below half the tolerance gap]
\label{cor:corrupted-query-fixed-fraction}
Fix a constant $c<1/2$ and suppose that $\kappa\leq cg$. For every fixed
error probability, Protocol $\Pi_{\mathrm{mis}}^{\mathrm{BB}}$ uses
$O(1/g^2)$ honest-prover evaluations and $O(1/g)$ verifier evaluations. The
hidden constants are proportional to $(1-2c)^{-2}$.
\end{corollary}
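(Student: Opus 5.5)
This is a direct substitution into Theorem~\ref{thm:corrupted-query-upper-bound}. Fix the error probability $\delta$ as a constant, so every $\ln(12/\delta)$ factor is absorbed into the hidden constants. The only real step is to lower-bound the effective gap $\eta=g-2\kappa$ by a constant multiple of $g$, using the hypothesis $\kappa\leq cg$ with $c<1/2$.

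From $\kappa\leq cg$ we get
\[
    \eta = g-2\kappa \geq (1-2c)g > 0 .
\]
This shows that the hypothesis $\eta>0$ of Theorem~\ref{thm:corrupted-query-upper-bound} holds, so the protocol is a valid $(\epsc,\epsf)$-interactive proof with errors at most $\delta$. It also gives $1/\eta^2\leq (1-2c)^{-2}/g^2$.

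We then read off the resource bounds of Theorem~\ref{thm:corrupted-query-upper-bound}:
\begin{itemize}
    \item The honest prover makes at most $\min\{m,|\mathcal{X}|\}\leq m$ evaluations of $f_{\Prover}$. Since $m=\lceil 32\ln(12/\delta)/\eta^2\rceil$, this is at most $32\ln(12/\delta)(1-2c)^{-2}/g^2+1=O\bigl((1-2c)^{-2}/g^2\bigr)$.
    \item The verifier makes $q$ evaluations of $f_{\Verifier}$. Since $q=\lceil 24g\ln(12/\delta)/\eta^2\rceil$, this is at most $24\ln(12/\delta)(1-2c)^{-2}/g+1=O\bigl((1-2c)^{-2}/g\bigr)$.
\end{itemize}
In each bound the dependence on $c$ enters only through the factor $(1-2c)^{-2}$, which gives the stated proportionality of the hidden constants.

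There is no real obstacle. The only detail is the additive $+1$ from the ceilings: it is dominated by the main term because $g\leq1$, so $1/g\geq1$ and $1/g^2\geq1$. Finally, note that the sample complexity $m$ is also $O(1/g^2)$, matching the common-evaluator protocol. The corollary only asks about evaluations, though, so this remark is optional.
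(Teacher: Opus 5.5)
Your proposal is correct and follows the paper's proof: derive $\eta\geq(1-2c)g$ from $\kappa\leq cg$, then substitute it into the bounds on $m$ and $q$ from Theorem~\ref{thm:corrupted-query-upper-bound}. Your explicit handling of the ceilings and of the $(1-2c)^{-2}$ factor only spells out what the paper's one-line proof leaves implicit.
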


\begin{proof}
The assumption gives $\eta\geq(1-2c)g$. Substituting this bound into
Theorem~\ref{thm:corrupted-query-upper-bound} proves the claim.
\end{proof}

\begin{remark}[A banded variant with fewer verifier queries]
\label{rem:banded-variant}
The randomized comparison $C_K$ at the leaves of
Figure~\ref{fig:protocol-corrupted-queries} is triggered with probability $|A_t-V_t|$. Because of that, the honest prover ``pays'' for the in-band disagreement $\gamma$ on every leaf, not only on the exceptional set. This is what causes the threshold $gq/2$ and makes soundness an estimation problem
for an event of rate $\Theta(g)$ at precision $\Theta(\eta)$, hence the factor $g/\eta^2$. In Appendix~\ref{app:banded-protocol}, we describe a simple variant, Protocol
$\Pi_{\mathrm{band}}^{\mathrm{BB}}$ (Figure~\ref{fig:protocol-band}), which replaces the comparison by the deterministic band test: the verifier records $Z_t:=\mathbf{1}[|A_t-V_t|>\gamma]$ and accepts if and only if $\sum_tZ_t\leq(\rho+\eta/2)q_{\mathrm{b}}$. The variant achieves tighter bound: an honest prover now fails the test only at leaves that fall in the exceptional set, at rate at most $\rho$, while a false root claim still forces a failure rate of at least $\rho+\Omega(\eta)$. The verifier
therefore estimates an event of rate $\Theta(\rho+\eta)$ instead of $\Theta(g)$, and
\[
q_{\mathrm{b}}=O \left(\frac{(\rho+\eta)\ln(1/\delta)}{\eta^2}\right)
\]
descents suffice, with the same $m$, the same honest-prover complexity, and the same gray-box implementation (Theorem~\ref{thm:band-upper-bound}).

Since $g=2\rho+2(1-\rho)\gamma+\eta$, the saving is a factor of order $1+(\rho+2\gamma)/(\rho+\eta)$. It is a constant for Boolean scores ($\gamma=0$), where the protocol above is already optimal (see Corollary~\ref{cor:query-landscape}), and it becomes large exactly when the in-band disagreement $\gamma$ dominates $\rho+\eta$, that is, when the two evaluators are close everywhere but not identical
and the claim is tight. For instance, with $\rho=0$, $\gamma=0.1$, and
$\eta=0.01$ (so $g\approx0.21$), the leading factor $g/\eta^2\approx2100$
becomes $(\rho+\eta)/\eta^2=100$: about twenty times fewer verifier
evaluations, up to the constants of the two analyses.

We keep the protocol above in the main body of the paper since it comes with a much simpler and shorter proof: the analysis of the variant is more involved because the band test does not
have the exact unbiasedness of Lemma~\ref{lem:one-robust-descent} and one must account for the in-band shaving available to a cheating prover. We state the variant and its analysis in Appendix~\ref{app:banded-protocol} for completeness. We also provide in Appendix~\ref{app:optimality-disagreement} a lower bound on the resources of the variant and show that it is essentially optimal (Corollary~\ref{cor:query-landscape}).
\end{remark}

\subsubsection{Gray-box protocol with evaluator disagreement}
\label{subsec:corrupted-query-gray-box}

The same protocol applies in the gray-box model of
Definition~\ref{def:gray-box-sampling}. Let
$G\colon\{0,1\}^R\to\mathcal{X}$ satisfy $G(U_R)\sim D$, and define
$z_{\Prover}(r):=f_{\Prover}(G(r))$ and
$z_{\Verifier}(r):=f_{\Verifier}(G(r))$. If $B$ is the exceptional set from
Definition~\ref{def:fixed-score-mismatch}, then its preimage under $G$ has
uniform measure $D(B)\leq\rho$. Hence
$(z_{\Prover},z_{\Verifier})$ satisfies the same bounded mismatch promise on
the random-tape domain.

For constant error, the verifier can describe the virtual sample using a
pairwise-independent family. Let
$s:=\max\{R,\lceil\log m\rceil\}$. The verifier chooses
$\alpha,\beta\leftarrow\mathbb{F}_{2^s}$ uniformly and sends them to the
prover. Fix distinct field elements $j_1,\ldots,j_m$, set
$H(j):=\alpha j+\beta$, and let $r_i$ be the first $R$ coordinates of
$H(j_i)$ under a fixed $\mathbb{F}_2$-linear identification of
$\mathbb{F}_{2^s}$ with $\{0,1\}^s$. The tapes
$r_1,\ldots,r_m$ are uniform and pairwise independent, and the seed
$(\alpha,\beta)$ uses $O(R+\log m)$ bits.

The gray-box protocol $\Pi_{\mathrm{mis}}^{\mathrm{GB}}$ replaces the first
step of Figure~\ref{fig:protocol-corrupted-queries} by this seed generation
and sets $X_j:=G(r_j)$. The verifier invokes $G$ only at the leaves it checks,
while the honest prover invokes $G$ on the entire virtual sample.

\begin{corollary}[Evaluator disagreement in the gray-box model]
\label{cor:corrupted-query-gray-box}
For every fixed error probability, the conclusion of
Theorem~\ref{thm:corrupted-query-upper-bound} holds in the gray-box model
with $m=O(1/\eta^2)$ and $q=O(g/\eta^2)$. The verifier makes at most $q$
chosen-randomness calls to $G$ and $q$ queries to $f_{\Verifier}$, while the
honest prover makes at most $m$ calls to $G$ and $m$ queries to
$f_{\Prover}$. The communication is
\[
    O\bigl(R+\log m+q\log m\log(mK)\bigr)
\]
bits, and the number of rounds is $O(q\log m)$.

In particular, if $\kappa\leq cg$ for a fixed $c<1/2$, the verifier uses
$O(1/g)$ calls and evaluations, while the honest prover uses $O(1/g^2)$
calls and evaluations.
\end{corollary}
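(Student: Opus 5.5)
The plan is to rerun the proof of Theorem~\ref{thm:corrupted-query-upper-bound} almost verbatim on the random-tape domain, after first transferring the instance there. Set $z_{\Prover}(r)=f_{\Prover}(G(r))$ and $z_{\Verifier}(r)=f_{\Verifier}(G(r))$, and give $\{0,1\}^R$ the uniform measure. As observed before the statement, $G^{-1}(B)$ has measure $D(B)\leq\rho$, so the pair $(z_{\Prover},z_{\Verifier})$ has bounded $(\rho,\gamma)$-mismatch. Their means are $\mu_{\Prover}$ and $\mu_{\Verifier}$, exactly as in Lemma~\ref{lem:gray-box-pullback}, and Lemma~\ref{lem:fixed-mismatch-mean} applies unchanged. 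The sample $X_j=G(r_j)$ is now a deterministic function of the seed $(\alpha,\beta)$. The honest prover evaluates all $m$ leaves, at a cost of $m$ calls to $G$ and $m$ queries to $f_{\Prover}$. The verifier evaluates only the $q$ leaves $r_{J_t}$ it reaches. The cost bounds in the statement follow at once.

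\textbf{Where the argument changes.} The one step that relied on full independence is the pair of Hoeffding bounds on $\overline p$, $\overline v$ and $\overline d$. I would replace them with Chebyshev bounds under pairwise independence, as in the proof of Theorem~\ref{thm:dsIPP_hw}. Each of the three empirical means is an average of $m$ pairwise independent $[0,1]$-valued variables, so its variance is at most $1/(4m)$. Deviation by $a=\eta/8$ therefore has probability at most $16/(m\eta^2)$, which is a small constant once $m=C/\eta^2$ for a large constant $C$. This is why the corollary is stated only for fixed error probability. The cost is an error that is polynomial rather than logarithmic in $1/\delta$, and $m=O(1/\eta^2)$ for constant $\delta$ is all the statement claims.

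\textbf{The descent phase.} Conditional on the seed, and hence on the vectors $p$ and $v$, this phase needs no independence across leaves. The verifier's private indices $J_t$ and comparison coins $U_t$ remain fresh and independent. Lemma~\ref{lem:one-robust-descent} then holds conditional on any prior transcript. So the Bernstein bound for completeness and the stochastic-domination Chernoff bound for soundness carry over unchanged with $q=O(g/\eta^2)$. One point must be checked for soundness. The cheating prover sees $(\alpha,\beta)$, but the concentration event for $\overline v$ depends only on the seed and the fixed functions. The functions are fixed before the seed is drawn, which is the standing assumption of this section. Hence the event has the required probability regardless of the prover. On it, Equation~\eqref{eq:false-root-verifier-gap} holds, and the soundness argument goes through.

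\textbf{Communication and the special case.} The seed costs $O(s)=O(R+\log m)$ bits, replacing the $m\ell$ term. The descents cost $O(q\log m\log(mK))$ bits over $O(q\log m)$ rounds, as before. The final claim follows by substituting $\eta\geq(1-2c)g$, as in Corollary~\ref{cor:corrupted-query-fixed-fraction}.

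\textbf{Main obstacle.} I expect the only delicate step to be checking that the pairwise-independent construction really gives uniform marginals on $\{0,1\}^R$ and pairwise independence. Truncating $H(j)=\alpha j+\beta$ to its first $R$ coordinates is an $\mathbb{F}_2$-linear projection. Since $(H(j),H(j'))$ is uniform on $\mathbb{F}_{2^s}^2$ for distinct $j,j'$, the projections are uniform and independent. This requires $m\leq 2^s$, so that distinct $j_i$ exist, and $s=\max\{R,\lceil\log m\rceil\}$ guarantees it.
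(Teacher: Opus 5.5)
Your proposal is correct and follows essentially the same route as the paper. You transfer the mismatch promise to the random-tape domain and replace the three Hoeffding bounds (on $\overline p$, $\overline v$, $\overline d$) by Chebyshev bounds under pairwise independence, which gives $m=O(1/\eta^2)$ for fixed error. You then reuse the descent analysis and the resource accounting of Theorem~\ref{thm:corrupted-query-upper-bound} unchanged. Your checks that the truncated affine family is pairwise independent and that the concentration event is fixed before the prover sees the seed are details the paper leaves implicit.
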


\begin{proof}
For every fixed $h\colon\{0,1\}^R\to[0,1]$, pairwise independence of the
tapes gives variance at most $1/(4m)$ for the empirical mean of
$h(r_1),\ldots,h(r_m)$. Apply Chebyshev's inequality to the prover scores,
the verifier scores, and their pointwise discrepancies. For every fixed
error probability, choosing $m=O(1/\eta^2)$ with a sufficiently large
constant makes the three sample events used in the proof of
Theorem~\ref{thm:corrupted-query-upper-bound} hold with the required
probability. Conditional on those events, the descent analysis is unchanged.
The remaining bounds follow from the seed length and the resource accounting
in Theorem~\ref{thm:corrupted-query-upper-bound}.
\end{proof}

%% file: applications.tex
We now describe the auditing application that motivates the distinction
between black-box and gray-box access. A model owner wishes to certify a
public statistical claim about a fixed model. Producing one evaluated audit record may be expensive: it may require generating a synthetic input, running the model, and obtaining a human or otherwise costly evaluation of the output. The protocols above allow the model owner to perform the larger number of evaluations needed to support the claim, while an auditor checks only a much smaller number of them.

\subsection{Auditing a model on generated data}
\label{sec:generated-audits}

Let
\(
    G\colon\{0,1\}^{R}\longrightarrow\mathcal{X}
\)
be a deterministic generator for audit instances, and let $M\colon\mathcal{X}\to\mathcal{Y}$ be the model being audited. An audit
criterion is specified by a Boolean evaluation rule
\(
    \phi\colon\mathcal{X}\times\mathcal{Y}\longrightarrow\{0,1\}.
\)
The rule may also use labels or other metadata included in the generated
record. Define
\[
    f_{\phi, M}(x):=\phi(x,M(x))
\]
\[
    z_{\phi}(r)
    :=
    f_{\phi}(G(r))
    =
    \phi\bigl(G(r),M(G(r))\bigr),
    \qquad r\in\{0,1\}^{R}.
\]

If $D_G$ is the distribution of $G(U_R)$, then
\(
    \wt(z_{\phi})
    =
    \mathbb{E}_{X\sim D_G}
    \bigl[\phi(X,M(X))\bigr]
    =
    \wt_{D_G}(f_{\phi}).
\)
Consequently, a claim about the average value of $\phi$ is exactly a
distributional weighted Hamming-weight claim. The gray-box protocol of
Theorem~\ref{thm:gray-box-whw} verifies such a claim using $\Otilde(1/g)$ calls to $G$ and $\Otilde(1/g)$ evaluations of $f_{\phi}$ by the auditor. The honest model owner performs $O(1/g^2)$ such evaluations.

\subsection{A reusable public audit set}
\label{sec:public-audit-set}

The honest model owner's evaluations can be reused when several auditors are to verify claims about the same fixed model and audit distribution. The audit set may be selected by a trusted party. Alternatively, the parties may use a \emph{nothing-up-my-sleeve} procedure: they first fix the generator, model, evaluation rules, audit parameters, and the deterministic rule for deriving random tapes. A future public value, such as the outcome of a designated lottery, then determines the random tapes. The resulting audit set is public and reproducible.

The statistical guarantee underlying this reuse is given by Proposition~\ref{prop:reusable-audit-set}. To apply it here, for each evaluation rule $\phi_a$ define \(f_a(x) := \phi_a(x,M(x))\). Then
\(
    \wt_{D_G}(f_a)
    =
    \mathbb{E}_{X\sim D_G}
    \bigl[\phi_a(X,M(X))\bigr].
\)
Thus a single public sample of
\[
    m
    =
    O\left(
        \frac{\log(k/\delta_{\rm samp})}{g^2}
    \right)
\]
random tapes simultaneously approximates the population means of all $k$ fixed evaluation rules, except with probability
$\delta_{\rm samp}$.

The model owner runs $M$ on all $m=O(\log(k/\delta_{\rm samp})/g^2)$ generated inputs and may cache the resulting outputs. Each auditor uses only the verifier queries of the exact-Hamming-weight protocol, namely $\Otilde(1/g)$ evaluations in the logarithmic-round instantiation from
Section~\ref{sec:hamming_weight}. Because every party can reconstruct the
ordered audit set from the public value, the $O(mR)$ bits that would be needed to transmit all random tapes are avoided. The remaining communication is that of the exact-weight subprotocol.

The same model outputs can support several fixed rules
$\phi_1,\ldots,\phi_k$. A rule requiring a new human judgment or another
external score still incurs that additional evaluation cost. Likewise, the soundness error of the interactive proof applies separately to each
execution. If a single guarantee is required across $t$ executions, their
proof errors must be reduced accordingly, for example by setting the error of each execution to at most $\delta_{\rm ipp}/t$.

The public value selects the audit set; it does not replace the fresh
randomness required inside the interactive proof. In particular, the model owner must not learn the auditor's later challenges before sending the messages that those challenges are meant to test.

The formal statement idealizes the selected tapes as independent and
uniform. A concrete public source must supply sufficient unpredictability
and must be fixed in advance together with an unambiguous tape-derivation
rule. Here ``unbiasable'' has an operational and economic meaning, for example an audit participant should have no feasible and worthwhile way to steer the public outcome after the audited objects have been fixed. A lottery is a natural candidate precisely because a party capable of steering its result would normally have a direct financial use for that capability.

\subsection{Black-box and gray-box audits}
\label{sec:audit-access-comparison}

The two access models from Section~\ref{sec:distribution_weighted_hamming_weight} lead to different audit costs. With only black-box access to a real population, Protocol~$\PiWM^{\mathrm{BB}}$ requires \(O(1/g^2)\) fresh population samples.

The verifier evaluates the model or audit rule on only $\Otilde(1/g)$ of these records, but it must obtain the whole sample and send it to the model owner. In the interior regimes of
Theorem~\ref{thm:black-box-sample-lower-bound}, the quadratic number of
population samples is necessary. 
With chosen-randomness access to a generator, the parties can instead address the coordinates $z_{\phi}(r)=\phi(G(r),M(G(r)))$ directly. Theorem~\ref{thm:gray-box-whw} then reduces the auditor's use of the generator from $O(1/g^2)$ black-box samples to $\Otilde(1/g)$ chosen-randomness calls. Proposition~\ref{prop:reusable-audit-set} gives a complementary deployment where the parties publicly select one sample of size $O(1/g^2)$, the model owner evaluates it once, and later auditors verify its claimed statistics with $\Otilde(1/g)$ evaluations each. The first option avoids materializing a shared audit set; the second amortizes the model owner's work across auditors and across fixed audit criteria.

\subsection{Audit criteria as weighted Hamming-weight claims}
\label{sec:audit-criteria}

We next instantiate the Boolean rule $\phi$ for the criteria discussed above. Let an audit record contain an input $X$, any required metadata, and the model output. For an event $E$ determined by this record, write
\[
    p_E:=\Pr[E].
\]
Then $p_E$ is the weighted Hamming weight of the indicator $\mathbf{1}[E]$.

Conditional rates also reduce to weighted Hamming weights. If $p_C>0$,
then
\[
    \Pr[E\mid C]
    =
    \frac{p_{E\cap C}}{p_C}.
\]
Thus a comparison
\(
    \left|
        \Pr[E_0\mid C_0]
        -
        \Pr[E_1\mid C_1]
    \right|
    \leq \tau
\)
can be checked by certifying the four weights $p_{E_0\cap C_0},p_{C_0},p_{E_1\cap C_1},p_{C_1}$ and testing
\[
    \left|
        p_{E_0\cap C_0}p_{C_1}
        -
        p_{E_1\cap C_1}p_{C_0}
    \right|
    \leq
    \tau p_{C_0}p_{C_1}.
\]
The comparison must include slack for the additive tolerances of the four
weight claims. If either conditioning event has small probability, those
tolerances must be correspondingly smaller. Thus conditional auditing of a rare group is more expensive unless $G$ directly generates records from the relevant conditional distribution.

\paragraph{Accuracy.}
Suppose an audit record contains a trustworthy outcome label $Y$, and let
$\widehat{Y}=M(X)$ be the model's prediction. Its error rate is
\[
    \Pr[\widehat{Y}\neq Y]
    =
    \wt_D\bigl(\mathbf{1}[\widehat{Y}\neq Y]\bigr).
\]
Accuracy is the complementary weight. Either claim therefore requires one weighted Hamming-weight verification.

\paragraph{Group fairness.}
Let $A\in\{0,1\}$ be a protected attribute. Statistical parity compares
\[
    \Pr[\widehat{Y}=1\mid A=0]
    \qquad\text{and}\qquad
    \Pr[\widehat{Y}=1\mid A=1].
\]
It reduces to the four weights of the events $A=a$ and $A=a\wedge\widehat{Y}=1$, for $a\in\{0,1\}$.
Equal opportunity compares
\[
    \Pr[\widehat{Y}=1\mid A=0,Y=1]
    \qquad\text{and}\qquad
    \Pr[\widehat{Y}=1\mid A=1,Y=1],
\]
and therefore uses the four weights obtained by replacing the conditioning events with $A=a\wedge Y=1$. Equalized odds makes the analogous comparison for both $Y=1$ and $Y=0$ \cite{3157382.3157469}. It uses the eight weights of
\[
    A=a\wedge Y=y
    \qquad\text{and}\qquad
    A=a\wedge Y=y\wedge\widehat{Y}=1,
    \qquad a,y\in\{0,1\}.
\]

\paragraph{Harmlessness and usefulness.}
Let $h(W)$ indicate that the model output in an audit record $W$ is harmful, and let $u(W)$ indicate that it is useful. The corresponding population rates are $\wt_D(h)$ and $\wt_D(u)$, so each is one weighted
Hamming-weight claim. The evaluation procedure must be fixed before the
audit set is selected. If the model owner and auditor can disagree on these judgments, the resulting two-oracle issue is the subject of Section~\ref{sec:corrupted_queries}.

The same reduction handles bounded finite-precision ratings. Suppose
$s(W)\in\{0,1/K,\ldots,1\}$ and define
\[
    b_s(W,j)
    :=
    \mathbf{1}[j\leq Ks(W)],
    \qquad j\in[K].
\]
Under the product of $D$ and the uniform distribution on $[K]$,
\(
    \wt_{D\times U_{[K]}}(b_s)
    =
    \mathbb{E}_{W\sim D}[s(W)].
\)
Hence an average bounded score, including an average harmlessness or usefulness rating, is a weighted Hamming weight over an enlarged domain.

\paragraph{Calibration.}
Suppose the model reports a score $S$ in a finite set $\mathcal{T}\subseteq[0,1]$. Calibration at $t\in\mathcal{T}$ asks that
\[
    \Pr[Y=1\mid S=t]=t.
\]
Provided $\Pr[S=t]>0$, this is equivalent to
\(
    p_{Y=1\wedge S=t}
    =
    t\,p_{S=t}.
\)
Calibration at one score value therefore reduces to two weighted Hamming-weight claims and a deterministic comparison. For a public score bin $B$, binned calibration compares
\[
    \mathbb{E}\bigl[Y\mathbf{1}[S\in B]\bigr]
    \qquad\text{and}\qquad
    \mathbb{E}\bigl[S\mathbf{1}[S\in B]\bigr].
\]
The first quantity is a Boolean mean, while the second reduces to a Boolean mean through the finite-precision lifting above. Repeating the comparison over a fixed collection of bins certifies binned calibration. For an additive conditional calibration tolerance, the permissible error in these non-normalized means scales with $\Pr[S\in B]$.

\paragraph{Robustness.}
Let the audit generator output a pair $(X,\widetilde{X})$, where $\widetilde{X}$ is a permitted perturbation of $X$. The Boolean rule
\[
    \phi_{\rm rob}(X,\widetilde{X})
    :=
    \mathbf{1}\bigl[M(X)\neq M(\widetilde{X})\bigr]
\]
records a failure of prediction invariance. Its weighted Hamming weight is the average failure probability under the perturbation distribution chosen by the generator. Other fixed Boolean failure rules can be treated in the same way. This application certifies average-case robustness under that distribution; it does not certify robustness against every permitted perturbation.

\subsubsection{Normative Discussion about the Distance Notion}

In Section~\ref{sec:prelims}, we define tolerant $\IPP$s with a general notion of distance. This generalization becomes particularly important when moving beyond the simple weighted Hamming Weight problem to more complex audit criteria commonly used in machine learning. Indeed, some of these criteria admit multiple distance metrics, and these metrics need not coincide with the minimum total variation distance between $D_f$ and the set of distributions satisfying the criterion.

As an example, the distance to statistical parity, i.e., for $R$ the output of the model and $A$ the group attribute $\Pr[R=1\mid A=0]=\Pr[R=1\mid A=1]$, is either measured through the statistical parity difference 
\[
    \Delta_\mathrm{Diff}(D_f, \Pi_\mathrm{StatParity})=|\Pr[R=1\mid A=0]-\Pr[R=1\mid A=1]|
\]
or the statistical parity ratio 
\[
   \Delta_\mathrm{Ratio}(D_f, \Pi_\mathrm{StatParity})= 1-\frac{\min(\Pr[R=1\mid A=0],\Pr[R=1\mid A=1])}{\max(\Pr[R=1\mid A=0],\Pr[R=1\mid A=1])}.
\]
However, considering that the labeling function is $f(X)=(M(X),A_X)$ with $M$ the model and $A_X$ the group attribute of $X$ (i.e., $f(X)=(R,A)$), two distributions $D_f$ and $D'_f$ can be at the same distance according to $\Delta_\mathrm{Diff}$ or $\Delta_\mathrm{Ratio}$ but at a different total variation distance from the closest distribution in $\Pi_\mathrm{StatParity}$\footnote{For example, for $D_f$ and $D'_f$ such that $D_f(0,0)=D_f(1,0)=D_f(0,1)=1/5$, $D_f(1,1)=2/5$, and $D'_f(0,0)=D'_f(1,0)=1/10$, $D'_f(0,1)=4/15$, and $D'_f(1,1)=8/15$ then $\Delta_\mathrm{Diff}(D_f, \Pi_\mathrm{StatParity})=\Delta_\mathrm{Diff}(D'_f, \Pi_\mathrm{StatParity})$ but $\Delta_\mathrm{TV}(D_f, \Pi_\mathrm{StatParity})\neq\Delta_\mathrm{TV}(D'_f, \Pi_\mathrm{StatParity})$.}.

Indeed, choosing which distance measure is appropriate in a given context is ultimately an important normative discussion in machine learning evaluation. This normative discussion about defining a distance to a criterion can be approached both from a theoretical computer science perspective (e.g., how to measure the distance to (multi-)calibrate models \cite{ITCS:DDHLS26, STOC:BGHN23}) and from a philosophical perspective (e.g., what it means to be fair when diverging from the ideal of a distributive justice criterion \cite{Hertweck_Heitz_Loi_2024}). We therefore leave the choice of distance notion unspecified in our framework, as its appropriateness depends on the particular evaluation context.

\subsection{Scope of the certificate}
\label{sec:audit-certificate-scope}

A gray-box certificate concerns the distribution $D_G$ generated by $G$.
It does not, by itself, establish the same claim for a different real
population. If one has an independently justified bound
\[
    \TV(D_{\rm real},D_G)\leq\tau,
\]
then every Boolean evaluation rule satisfies
\[
    \left|
        \wt_{D_{\rm real}}(f_{\phi})
        -
        \wt_{D_G}(f_{\phi})
    \right|
    \leq\tau.
\]
Thus a certificate establishing
$|\wt_{D_G}(f_{\phi})-\sigma|\leq\epsc$ implies
\[
    |\wt_{D_{\rm real}}(f_{\phi})-\sigma|
    \leq \epsc+\tau.
\]
Our protocols do not provide such a bound on
$\TV(D_{\rm real},D_G)$. Establishing that a generated distribution is
close to the real population is a separate statistical problem and, in
general, may itself require substantial sample complexity or additional
structural assumptions. Without an independently justified fidelity
bound, the certificate therefore applies only to the generated audit
distribution.

Conditional criteria are more sensitive to distributional mismatch. Let
$E$ and $C$ be events, and suppose that
\[
    \Pr_{D_G}[C]\ge \beta>\tau.
\]
Then $\Pr_{D_{\rm real}}[C]\ge\beta-\tau$, and
\[
    \left|
        \Pr_{D_{\rm real}}[E\mid C]
        -
        \Pr_{D_G}[E\mid C]
    \right|
    \le
    \frac{2\tau}{\beta-\tau}.
\]
Thus even a small total-variation error may substantially affect a
conditional claim when the conditioning event has small probability.
This qualification is especially relevant for criteria involving small
groups, since synthetic data may represent minority and low-density
regions poorly~\cite{3618408.3619856}.

Finally, the proof certifies the numerical claim defined by $\phi$. Its
interpretation still depends on the audit record and evaluation rule.
Accuracy and group-fairness claims require reliable labels; harmlessness
and usefulness require a fixed judgment procedure; calibration depends
on the chosen score values or bins; and robustness depends on the
perturbation distribution. These choices must be fixed and stated as
part of the audit claim.

The certificate also applies only to the fixed generator, model, and
evaluation rule used in the audit. Updating $G$, changing $M$, or
modifying $\phi$ produces a different audit claim and requires a new
certificate. When the model owner and auditor may disagree on the value
of $\phi$, the guarantee must instead be interpreted through the
evaluator-disagreement model of Section~\ref{sec:corrupted_queries}.

%% file: disclosure.tex
No generative AI tool was involved in finding, stating, or proving any of the results in the main body of this paper. ChatGPT and Claude were used to check for grammar, identify typos, and assess the correctness of proofs or calculations. The banded protocol in Appendix~\ref{app:banded-protocol} was also found by the authors without AI. However, AI tools (Claude Fable 5.1) were used substantively in extending the analysis of Protocol $\Pi_{\mathrm{mis}}^{\mathrm{BB}}$ to the banded variant in Appendix~\ref{app:banded-protocol} and in providing tight and concise proofs for the lower bounds in Appendix~\ref{app:optimality-disagreement}. The lower bounds and their proofs are fairly standard and not central to the contribution; we include them in an appendix for completeness. The authors verified and refined the AI-assisted proofs in the appendix.

%% file: banded_protocol.tex
This appendix presents and analyzes Protocol
$\Pi_{\mathrm{band}}^{\mathrm{BB}}$, the variant of Protocol
$\Pi_{\mathrm{mis}}^{\mathrm{BB}}$ (Figure~\ref{fig:protocol-corrupted-queries})
described in Remark~\ref{rem:banded-variant}. It differs from
$\Pi_{\mathrm{mis}}^{\mathrm{BB}}$ only in the test performed at the
leaves of the descents and in the acceptance threshold, and it reduces the
verifier query complexity from $O(g\ln(1/\delta)/\eta^2)$ to
$O((\rho+\eta)\ln(1/\delta)/\eta^2)$ (this never
exceeds the query complexity of Theorem~\ref{thm:corrupted-query-upper-bound} and it is smaller by a
factor of order $g/(\rho+\eta)$ whenever the in-band tolerance
$(1-\rho)\gamma$ dominates $\rho+\eta$). We first state the banded analogue of Lemma~\ref{lem:one-robust-descent}:

\begin{lemma}[One banded descent]
\label{lem:banded-descent}
Let $\gamma\in[0,1)$. Fix $v=(v_1,\ldots,v_m)\in\mathcal{S}_K^m$ and a
root claim $W\in\mathcal{W}_K([m])$. Run one descent of
$\mathcal{T}_m$ against an arbitrary prover. If a range or additivity
check fails, set $Z=1$ and end the descent. Otherwise, upon reaching a
leaf $\{J\}$ with claim $a$, set
$Z:=\mathbf{1}\bigl[|a-v_J|>\gamma\bigr]$. Conditional on any transcript
fixed before the descent begins,
\[
    \mathbb{E}[Z]
    \geq
    \frac{1}{1-\gamma}
    \left(
        \left|\frac{W}{m}-\frac{1}{m}\sum_{j=1}^mv_j\right|
        -\gamma
    \right)^{\!+}.
\]
If the prover fixes $p\in\mathcal{S}_K^m$, sets $W=\sum_jp_j$, and always
sends the true partial sums of $p$, then
\[
    \Pr[Z=1]
    =
    \frac{1}{m}\,
    \bigl|\{j\in[m]:|p_j-v_j|>\gamma\}\bigr| .
\]
\end{lemma}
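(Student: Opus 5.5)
The plan is to mirror the proof of Lemma~\ref{lem:one-robust-descent}: reuse its martingale argument unchanged and replace only the final leaf step by a pointwise convexity bound for the band indicator. As there, we fix the prior transcript and the prover's coins during the descent, so that the continuation is deterministic. For each visited interval $I$ with claim $c(I)$ we define the signed excess $e(I):=c(I)-\sum_{j\in I}v_j$ and track the normalized excess $e(I)/|I|$ along the descent. At the first inconsistent split, we stop at the normalized excess of the parent. Since the verifier moves to $I_b$ with probability $|I_b|/|I|$ and $e(I)=e(I_0)+e(I_1)$ at consistent splits, this stopped process is a martingale with values in $[-1,1]$. Its terminal value $M$ therefore satisfies $\mathbb{E}[M]=W/m-m^{-1}\sum_jv_j$, and Jensen gives $\mathbb{E}[|M|]\geq|\mathbb{E}[M]|$.

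The new ingredient is a pointwise inequality comparing $Z$ with $|M|$. Define
\[
\psi(t):=\frac{(t-\gamma)^+}{1-\gamma},\qquad t\in[0,1].
\]
This function is convex and nondecreasing, with $\psi(t)\leq 1$ on $[0,1]$ and $\psi(t)=0$ for $t\leq\gamma$; this is where the hypothesis $\gamma<1$ is used. We check $Z\geq\psi(|M|)$ in each outcome:
\begin{itemize}
\item If an inconsistent split occurs, then $Z=1\geq\psi(|M|)$ because $|M|\leq 1$.
\item If the descent reaches a leaf $\{J\}$ with claim $a$, then $M=a-v_J$. When $|M|>\gamma$ we have $Z=1\geq\psi(|M|)$. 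When $|M|\leq\gamma$ we have $\psi(|M|)=0\leq Z$.
\end{itemize}
So $Z\geq\psi(|M|)$ in every case. Applying Jensen to the convex $\psi$, and then monotonicity of $\psi$ together with $\mathbb{E}|M|\geq|\mathbb{E}M|$, gives
\[
\mathbb{E}[Z]\geq\psi\bigl(\mathbb{E}|M|\bigr)\geq\psi\bigl(|\mathbb{E}M|\bigr),
\]
which is exactly the claimed lower bound.

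The second claim is immediate. Under the honest partial-sum strategy every split is consistent, so the descent reaches the leaf $\{J\}$ with claim $p_J$, and $J$ is uniform on $[m]$ because the leaf probabilities telescope to $1/m$. Hence $\Pr[Z=1]=\Pr_J[|p_J-v_J|>\gamma]$, which is the stated fraction.

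The main obstacle is that the band test is not unbiased, so the identity $\mathbb{E}[Z\mid J,a]=|M|$ from the earlier lemma is lost. The fix is to find a convex minorant of the leaf test that vanishes on $[0,\gamma]$ and is at most $1$ on $[0,1]$. The chord $\psi$ is the largest such convex function, which explains the factor $1/(1-\gamma)$ in the statement. Once $\psi$ is chosen, the rest is a direct reuse of the martingale argument.
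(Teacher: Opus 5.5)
Your proof is correct and follows essentially the same route as the paper. Both reuse the stopped normalized-excess martingale $M$ from Lemma~\ref{lem:one-robust-descent} and check $Z\geq(|M|-\gamma)^+/(1-\gamma)$ pointwise in the failure and leaf cases. Both then conclude via Jensen for the convex map $x\mapsto(x-\gamma)^+$, followed by $\mathbb{E}|M|\geq|\mathbb{E}M|$ and monotonicity, and treat the honest-strategy claim identically.
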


\begin{proof}
Let $M$ be the terminal value of the ``stopped normalized excess'' martingale defined in the proof of Lemma~\ref{lem:one-robust-descent}, so that
$|M|\leq1$ and
$\mathbb{E}[M]=W/m-m^{-1}\sum_jv_j$. If the descent is stopped by a failure (that is, a range or additive check fails during the descent), then
$Z=1\geq(|M|-\gamma)^+/(1-\gamma)$. Otherwise $M=a-v_J$, and
$(|M|-\gamma)^+\leq(1-\gamma)Z$ because $|M|\leq1$. In both cases
\[
    \mathbb{E}[Z]
    \geq
    \frac{\mathbb{E}\bigl[(|M|-\gamma)^+\bigr]}{1-\gamma}
    \geq
    \frac{\bigl(\mathbb{E}[|M|]-\gamma\bigr)^{+}}{1-\gamma}
    \geq
    \frac{\bigl(|\mathbb{E}[M]|-\gamma\bigr)^{+}}{1-\gamma},
\]
where the middle step uses Jensen's inequality for the convex function
$x\mapsto(x-\gamma)^+$. Under the honest strategy, every split is
consistent, $J$ is uniform over $[m]$, and the leaf claim is exactly
$p_J$, which gives the second statement.
\end{proof}

Fix an error parameter $\delta\in(0,1/3)$ and set
\[
    a:=\frac{\eta}{8},
    \qquad
    m:=\left\lceil\frac{32\ln(12/\delta)}{\eta^2}\right\rceil,
    \qquad
    q_{\mathrm{b}}
    :=\left\lceil\frac{64(\rho+\eta)\ln(12/\delta)}{\eta^2}\right\rceil.
\]

\begin{protofig}
    {Protocol $\Pi_{\mathrm{band}}^{\mathrm{BB}}$}
    {fig:protocol-band}

\begin{protodesc}
    \item[Common input.]
          A claim $\sigma$, parameters $\epsc,\epsf,\rho,\gamma,K,\delta$,
          and the derived quantities $\kappa=\rho+(1-\rho)\gamma$ and
          $\eta=g-2\kappa>0$.

    \item[Oracle access.]
          The verifier samples from $D$ and queries $f_{\Verifier}$. The
          designated honest prover queries $f_{\Prover}$.
\end{protodesc}

\begin{enumerate}
    \item The verifier draws independent samples $X_1,\ldots,X_m\sim D$
          and sends the ordered sample $S=(X_1,\ldots,X_m)$ to the
          prover.

    \item The honest prover sets $p_j:=f_{\Prover}(X_j)$, querying each
          distinct sampled point once, and sends $W:=\sum_{j=1}^m p_j$.

    \item The verifier rejects unless $W\in\mathcal{W}_K([m])$ and
          $\left|W/m-\sigma\right|\leq\epsc+a$.

    \item The parties perform $q_{\mathrm{b}}$ sequential descents of
          $\mathcal{T}_m$, each rooted at $W$, exactly as in Protocol
          $\Pi_{\mathrm{mis}}^{\mathrm{BB}}$. If a range or additivity
          check fails, the verifier rejects immediately. Otherwise, the
          descent reaches $\{J_t\}$ with claim $A_t$; the verifier
          queries $V_t:=f_{\Verifier}(X_{J_t})$ and sets
          $Z_t:=\mathbf{1}\bigl[|A_t-V_t|>\gamma\bigr]$.

    \item If no previous check failed, the verifier accepts if and only
          if
          \[
              \sum_{t=1}^{q_{\mathrm{b}}} Z_t
              \leq
              \left(\rho+\frac{\eta}{2}\right)q_{\mathrm{b}}.
          \]
\end{enumerate}
\end{protofig}

\begin{theorem}[Banded upper bound]
\label{thm:band-upper-bound}
Under the hypotheses of
Theorem~\ref{thm:corrupted-query-upper-bound} (in particular $\eta>0$),
Protocol $\Pi_{\mathrm{band}}^{\mathrm{BB}}$ is an
$(\epsc,\epsf)$-interactive proof with respect to
$\Delta_{\mathrm{mis}}$, with completeness and soundness errors at most
$\delta$. The verifier draws $m=O(\ln(1/\delta)/\eta^2)$ samples from
$D$ and makes
\[
    q_{\mathrm{b}}=O \left(\frac{(\rho+\eta)\ln(1/\delta)}{\eta^2}\right)
\]
queries to $f_{\Verifier}$. The honest prover draws no samples and makes
at most $\min\{m,|\mathcal{X}|\}$ distinct queries to $f_{\Prover}$. The
communication is $O(m\ell+q_{\mathrm{b}}\log m\log(mK))$ bits and the
number of rounds is $O(q_{\mathrm{b}}\log m)$.
\end{theorem}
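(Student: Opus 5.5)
The proof mirrors that of Theorem~\ref{thm:corrupted-query-upper-bound}: one concentration event over the sample, then a concentration argument over the $q_{\mathrm{b}}$ descents. The only change is that Lemma~\ref{lem:banded-descent} replaces Lemma~\ref{lem:one-robust-descent}. Note first that $\eta>0$ forces $2\kappa<g\leq1$, hence $\gamma<1/2$, so Lemma~\ref{lem:banded-descent} applies. Define $\overline p,\overline v$ as before and, in place of $\overline d$, the empirical exceptional rate
\[
    \overline b:=\frac1m\bigl|\{j\in[m]:|f_{\Prover}(X_j)-f_{\Verifier}(X_j)|>\gamma\}\bigr|.
\]

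\emph{Completeness.} Each indicator $\mathbf{1}[|f_{\Prover}(X_j)-f_{\Verifier}(X_j)|>\gamma]$ vanishes outside $B$, so its mean is at most $D(B)\leq\rho$. Hoeffding with $a=\eta/8$ and the given $m$ then yields, except with probability $\delta/6+\delta/12$, both $|\overline p-\mu_{\Prover}|\leq a$ and $\overline b\leq\rho+\eta/8$. On this event the root check passes, exactly as before. By the second statement of Lemma~\ref{lem:banded-descent}, the $Z_t$ are, conditionally on the sample, i.i.d.\ Bernoulli with mean $\overline b\leq\rho+\eta/8$. The acceptance threshold is $(\rho+\eta/2)q_{\mathrm{b}}$, so the deviation is at least $3\eta/8$. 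The variance per trial is at most $\rho+\eta$. Bernstein's inequality then bounds the rejection probability by
\[
    \exp\bigl(-c\,q_{\mathrm{b}}\eta^2/(\rho+\eta)\bigr)\leq\delta/12
\]
for the stated $q_{\mathrm{b}}$, with the constant $64$ absorbing $c$.

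\emph{Soundness.} As in the proof of Theorem~\ref{thm:corrupted-query-upper-bound}, on the event $|\overline v-\mu_{\Verifier}|\leq a$ every root claim passing the root check satisfies Equation~\eqref{eq:false-root-verifier-gap}:
\[
    |W/m-\overline v|>g/2+\eta/4.
\]
The key computation substitutes $g=2\rho+2(1-\rho)\gamma+\eta$, which gives
\[
    \frac{g}{2}+\frac{\eta}{4}-\gamma=\rho(1-\gamma)+\frac{3\eta}{4}.
\]
Lemma~\ref{lem:banded-descent}, applied to the relaxed verifier and conditioned on the entire prior transcript, therefore gives
\[
    \Pr[Z_t=1\mid\text{prior transcript}]\geq\rho+\frac{3\eta}{4(1-\gamma)}\geq p_0:=\rho+\frac{3\eta}{4}.
\]
Thus $\sum_tZ_t$ stochastically dominates $\operatorname{Bin}(q_{\mathrm{b}},p_0)$, even against an adaptive prover. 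Acceptance requires a lower deviation of $\eta/4$ below a mean $p_0\leq\rho+\eta$. A multiplicative Chernoff bound gives
\[
    \exp\bigl(-q_{\mathrm{b}}(\eta/4)^2/(2p_0)\bigr)\leq\exp\bigl(-q_{\mathrm{b}}\eta^2/(32(\rho+\eta))\bigr)\leq\delta/12.
\]
Adding the sampling failure probability gives total error at most $\delta$.

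\emph{Main obstacle and resources.} The main obstacle is the soundness step. The banded test is not unbiased, and a cheater may shave up to $\gamma$ off each leaf for free. The point is that the $(\cdot-\gamma)^+/(1-\gamma)$ loss in Lemma~\ref{lem:banded-descent} is exactly compensated by the $2(1-\rho)\gamma$ term in $g$, leaving a margin of order $\eta$ above $\rho$. This matters because it lets us apply Chernoff at rate $\rho+\eta$ rather than at rate $g$. I would double-check that identity and the constants; a constant-factor slack can always be absorbed into $q_{\mathrm{b}}$. The resource bounds are identical to those of Theorem~\ref{thm:corrupted-query-upper-bound} with $q$ replaced by $q_{\mathrm{b}}$: one $f_{\Verifier}$ query per descent, $O(\log m)$ levels of $O(\log(mK))$ bits each, $m\ell$ bits for the sample, and the same cached honest-prover evaluations.
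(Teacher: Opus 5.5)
Your proposal is correct and follows essentially the same route as the paper's proof. It uses the same two Hoeffding events, Bernstein at mean $\overline b\le\rho+\eta/8$ for completeness, and for soundness the identity $g/2+\eta/4-\gamma=g-\kappa-\eta/4-\gamma=\rho(1-\gamma)+3\eta/4$ fed into Lemma~\ref{lem:banded-descent}, followed by a Chernoff bound on $\operatorname{Bin}(q_{\mathrm{b}},\rho+3\eta/4)$. The constants you left to double-check already work with the fixed $q_{\mathrm{b}}$ and need no extra slack. Since $\overline b(1-\overline b)+\eta/8\le\rho+\eta$, the Bernstein exponent is at least $9q_{\mathrm{b}}\eta^2/(128(\rho+\eta))\ge\ln(12/\delta)$, and the Chernoff exponent is at least $2\ln(12/\delta)$.
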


\begin{proof}
First note that $\eta>0$ forces $\kappa<g/2\leq1/2$, hence $\rho<1/2$
and
$\gamma\leq\kappa/(1-\rho)<(1-2\rho)/(2(1-\rho))\leq1/2$;
in particular $1-\gamma\geq1/2$, so the band test carries useful information (as $\gamma \rightarrow 1$, the band test degenerates to a trivial test which always accepts) and
the factor $1/(1-\gamma)$ in Lemma~\ref{lem:banded-descent} is at most 2.

Let $V_\gamma:=\{x\in\mathcal{X}:
|f_{\Prover}(x)-f_{\Verifier}(x)|>\gamma\}$. By
Definition~\ref{def:fixed-score-mismatch}, $V_\gamma\subseteq B$, so
$D(V_\gamma)\leq\rho$. Define $\overline p$ and $\overline v$ as in the
proof of Theorem~\ref{thm:corrupted-query-upper-bound}, and let
$\overline b:=\frac1m|\{j\in[m]:X_j\in V_\gamma\}|$.

\paragraph{Completeness.}
Suppose $|\mu_{\Prover}-\sigma|\leq\epsc$ and the prover follows the
honest strategy. Hoeffding's inequality gives
\[
    \Pr[|\overline p-\mu_{\Prover}|>a]\leq2e^{-2ma^2}\leq\frac{\delta}{6},
    \qquad
    \Pr[\overline b>\rho+a]\leq e^{-2ma^2}\leq\frac{\delta}{12},
\]
the latter because $\overline b$ is an average of independent Bernoulli
variables with mean $D(V_\gamma)\leq\rho$. We now condition on the complementary events. Since $W/m=\overline p$, the root check passes.
Every consistency check passes, and by
Lemma~\ref{lem:banded-descent}, conditional on the sample the variables
$Z_1,\ldots,Z_{q_{\mathrm{b}}}$ are independent Bernoulli variables with
mean $\overline b\leq\rho+\eta/8$. Since the acceptance threshold
exceeds this mean by at least $3\eta/8$, Bernstein's inequality gives
\[
    \Pr\left[\sum_{t=1}^{q_{\mathrm{b}}}Z_t
        >\left(\rho+\frac{\eta}{2}\right)q_{\mathrm{b}}\right]
    \leq
    \exp\left(
        -\frac{q_{\mathrm{b}}(3\eta/8)^2}
              {2\bigl(\overline b(1-\overline b)+(3\eta/8)/3\bigr)}
    \right)
    \leq
    \exp\left(-\frac{9\,q_{\mathrm{b}}\,\eta^2}{128(\rho+\eta)}\right)
    \leq\frac{\delta}{12},
\]
using $\overline b(1-\overline b)+\eta/8\leq\rho+\eta/4\leq\rho+\eta$
and the definition of $q_{\mathrm{b}}$. The total rejection probability
is at most $\delta/6+\delta/12+\delta/12\leq\delta$.

\paragraph{Soundness.}
Suppose $|\mu_{\Prover}-\sigma|>\epsf$, and fix an arbitrary cheating
prover. Hoeffding's inequality gives
$\Pr[|\overline v-\mu_{\Verifier}|>a]\leq2e^{-2ma^2}\leq\delta/6$. Fix a
sample conditioned on the complementary (good) event ($|\overline v-\mu_{\Verifier}|\leq a$) and a root claim $W$ that passes the
root check. Lemma~\ref{lem:fixed-mismatch-mean} and the triangle
inequality imply
\[
    \left|\frac{W}{m}-\overline v\right|
    >\epsf-\kappa-a-(\epsc+a)
    =g-\kappa-\frac{\eta}{4}.
\]
Consider now a relaxed verifier that records $Z_t=1$ and continues whenever a consistency check fails (note that this only increases the acceptance probability). We consider the sampling of a continuation conditioned on the complete transcript before descent $t$ (the sample, the root claim $W$, and the transcript of the descents $1$ to $t-1$): Lemma~\ref{lem:banded-descent} gives
\[
    \Pr[Z_t=1\mid\text{prior transcript}]
    \geq
    \frac{g-\kappa-\eta/4-\gamma}{1-\gamma}
    =
    \frac{\rho(1-\gamma)+3\eta/4}{1-\gamma}
    \geq
    p_*:=\rho+\frac{3\eta}{4},
\]
where we used $g-\kappa-\gamma=\kappa+\eta-\gamma=\rho(1-\gamma)+\eta$.
Note $p_*\leq\rho+\eta\leq g\leq1$. A sequence of Bernoulli variables
whose conditional success probabilities are at least $p_*$
stochastically dominates
$\operatorname{Bin}(q_{\mathrm{b}},p_*)$, and the acceptance threshold
equals $(p_*-\eta/4)q_{\mathrm{b}}$. A Chernoff bound then yields
\[
    \Pr\left[\sum_{t=1}^{q_{\mathrm{b}}}Z_t
        \leq\left(\rho+\frac{\eta}{2}\right)q_{\mathrm{b}}\right]
    \leq
    \exp\left(-\frac{q_{\mathrm{b}}(\eta/4)^2}{2p_*}\right)
    \leq
    \exp\left(-\frac{q_{\mathrm{b}}\,\eta^2}{32(\rho+\eta)}\right)
    \leq\frac{\delta}{12}.
\]
After accounting for the sampling event, by a straightforward union bound, the cheating prover is
accepted with probability at most $\delta/6+\delta/12\leq\delta$. Lastly, the costs (depth and bits exchanged) are identical to
Theorem~\ref{thm:corrupted-query-upper-bound}, with $q_{\mathrm{b}}$ descents instead of $q$.
\end{proof}

%% file: optimality_disagreement.tex
This appendix proves lower bounds (partially) matching
Theorem~\ref{thm:band-upper-bound} on three quantities: (range of supported parameters) the separation
$\eta=g-2\kappa$, which must be positive; (sample complexity) the number
$m=O(\ln(1/\delta)/\eta^2)$ of verifier samples; and (query complexity) the number
$q_{\mathrm{b}}=O((\rho+\eta)\ln(1/\delta)/\eta^2)$ of verifier queries.
We use the notation of Section~\ref{subsec:corrupted-query-model}. All
bounds hold for arbitrary protocols in the black-box model with bounded $(\rho,\gamma)$-mismatch. A recurring quantity is
\[
    \kappa^\star:=\rho+(1-2\rho)\gamma=\kappa-\rho\gamma,
    \qquad
    \eta^\star:=g-2\kappa^\star=\eta+2\rho\gamma.
\]
It coincides with $\kappa$ when $\rho=0$ or $\gamma=0$, in particular for
Boolean evaluations ($K=1$). For simplicity, we assume
that $\gamma$ is a multiple of $1/K$.

\subsection{Necessity of the separation condition}
\label{app:mismatch-threshold}

Protocol $\Pi_{\mathrm{band}}^{\mathrm{BB}}$ (like Protocol
$\Pi_{\mathrm{mis}}^{\mathrm{BB}}$) requires $\eta=g-2\kappa>0$. We show
that below the nearby threshold $2\kappa^\star$ the problem is
unsolvable (with any amount of resources) because $2\kappa^\star$ is the diameter of
the set of prover-side means consistent with a fixed verifier view.

\begin{lemma}[Reachable prover-side means]
\label{lem:reachable-means}
Let $\rho\leq1/2$ and $\gamma\leq1/2$. Fix $D$ and
$f_{\Verifier}\colon\mathcal{X}\to\mathcal{S}_K$. If
$(f^+,f_{\Verifier})$ and $(f^-,f_{\Verifier})$ both have bounded
$(\rho,\gamma)$-mismatch with respect to $D$, then
$\bigl|\mathbb{E}_D[f^+]-\mathbb{E}_D[f^-]\bigr|
\leq2\rho+2(1-2\rho)\gamma=2\kappa^\star$.
\end{lemma}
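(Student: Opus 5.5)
The plan is a direct pointwise comparison of $f^+$ and $f^-$, averaged over $D$, in the same spirit as the proof of Lemma~\ref{lem:fixed-mismatch-mean}. Let $B^+$ and $B^-$ be the exceptional sets given by Definition~\ref{def:fixed-score-mismatch} for the pairs $(f^+,f_{\Verifier})$ and $(f^-,f_{\Verifier})$. Then $D(B^+)\leq\rho$ and $D(B^-)\leq\rho$, and I set $U:=B^+\cup B^-$, so that $u:=D(U)\leq 2\rho\leq 1$.

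\textbf{Pointwise bounds.} I will bound $|f^+(x)-f^-(x)|$ in two regions.
\begin{itemize}
    \item For $x\notin U$, both functions are within $\gamma$ of $f_{\Verifier}(x)$, so the triangle inequality gives $|f^+(x)-f^-(x)|\leq 2\gamma$.
    \item For $x\in U$, I use only the trivial bound $|f^+(x)-f^-(x)|\leq 1$, since all scores lie in $\mathcal{S}_K\subseteq[0,1]$.
\end{itemize}
No finer case split inside $U$ is needed. For instance, on $B^+\setminus B^-$ one could get $1+\gamma$ by the triangle inequality, but the range constraint already caps the difference at $1$.

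\textbf{Integration.} Using $|\mathbb{E}[Z]|\leq\mathbb{E}[|Z|]$ with $Z=f^+(X)-f^-(X)$, I get
\[
    \bigl|\mathbb{E}_D[f^+]-\mathbb{E}_D[f^-]\bigr|
    \leq u+(1-u)\,2\gamma
    = 2\gamma+u(1-2\gamma).
\]

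\textbf{Monotonicity.} The only point needing care is that this bound must be maximized at $u=2\rho$. Since $\gamma\leq 1/2$, the coefficient $1-2\gamma$ is nonnegative, so the right-hand side is nondecreasing in $u$. Substituting $u\leq 2\rho$ gives
\[
    2\gamma+2\rho(1-2\gamma)=2\rho+2(1-2\rho)\gamma=2\kappa^\star,
\]
which is the claimed bound. The hypothesis $\rho\leq 1/2$ is what makes $u\leq 2\rho$ a valid measure bound ($2\rho\leq 1$), and $\gamma\leq1/2$ guarantees the monotonicity used in this last step.
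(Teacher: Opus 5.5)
Your proof is correct and is essentially the paper's argument: take the union of the two exceptional sets (mass at most $2\rho$), bound $|f^+-f^-|$ by $2\gamma$ off it and by $1$ on it, and use $2\gamma\leq1$ to see that $D(B)+(1-D(B))\,2\gamma$ is nondecreasing in $D(B)$. One small quibble: the union bound $u\leq2\rho$ and the monotone substitution both hold without $\rho\leq1/2$, so that hypothesis plays no real role in this step.
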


\begin{proof}
Let $B$ be the union of the two exceptional sets, so $D(B)\leq2\rho$.
Off $B$, the triangle inequality gives $|f^+-f^-|\leq2\gamma$, and
everywhere $|f^+-f^-|\leq1$. Hence
$\bigl|\mathbb{E}[f^+]-\mathbb{E}[f^-]\bigr|
\leq D(B)+(1-D(B))\,2\gamma\leq2\rho+(1-2\rho)\,2\gamma$, the last
step because $2\gamma\leq1$ makes the middle expression nondecreasing in
$D(B)$.
\end{proof}

\begin{theorem}[Impossibility below the threshold]
\label{thm:mismatch-impossibility}
Let $K\geq1$, $\rho\in[0,1/2]$, and $\gamma\in[0,1/2]$ with
$\gamma K\in\mathbb{N}$. Let $0\leq\epsc<\epsf\leq1$ satisfy
$\epsf<\epsc+2\kappa^\star$ and $\epsc+2\kappa^\star\leq1$. Then there
is no $(\epsc,\epsf)$-interactive proof with respect to
$\Delta_{\mathrm{mis}}$ for instances with bounded
$(\rho,\gamma)$-mismatch, regardless of the resources of both parties.
\end{theorem}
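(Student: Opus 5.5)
The plan is an indistinguishability argument. We build two instances that share the same distribution $D$ and the same verifier-side function $f_{\Verifier}$, but have different prover-side functions $f^+$ and $f^-$. The first instance is a completeness instance and the second is a soundness instance. In the black-box model the verifier touches only $D$ (through samples) and $f_{\Verifier}$ (through queries). Its view is therefore a function of its own coins, its samples, its $f_{\Verifier}$-answers and the prover's messages.

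On the soundness instance we let the cheating prover $\Prover^*$ run the designated honest prover with oracle access to $f^+$. A cheating prover is unrestricted and may know the instance, so this is legal. Coupling the verifier coins, the samples and the prover coins across the two executions then makes the verifier's views identically distributed. Its acceptance probability is thus the same in both executions, contradicting completeness $\geq 2/3$ together with soundness $\leq 1/3$. This works for any resources, since nothing counts queries.

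The construction takes $\mathcal{X}=\{x_0,x_1\}$ with $D(x_0)=2\rho$ and $D(x_1)=1-2\rho$; this uses $\rho\leq 1/2$. We set $f_{\Verifier}(x_0)=1/2$ in spirit. More concretely, we can split $x_0$ into two halves of mass $\rho$ each, one serving as the exceptional set of $f^+$ and the other as that of $f^-$. To keep scores in $\mathcal{S}_K$, it is cleanest to choose $f_{\Verifier}(x_1)=c$, $f^+(x_1)=c+\gamma$, $f^-(x_1)=c-\gamma$; on the $\rho$-mass point $y_+$ put $f^+(y_+)=1$, $f^-(y_+)=f_{\Verifier}(y_+)$ with $f_{\Verifier}(y_+)=0$; symmetrically on $y_-$ put $f^-(y_-)=0$ and $f^+(y_-)=f_{\Verifier}(y_-)=1$.

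We then check the mismatch promises. For $(f^+,f_{\Verifier})$ the exceptional set is $\{y_+\}$ of mass $\rho$, and the discrepancy is $\gamma$ on $x_1$ and $0$ on $y_-$. The pair $(f^-,f_{\Verifier})$ is symmetric. The mean difference is $\mathbb{E}[f^+]-\mathbb{E}[f^-]=\rho\cdot1+\rho\cdot1+(1-2\rho)2\gamma=2\kappa^\star$, which attains the bound of Lemma~\ref{lem:reachable-means}.

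Finally we need $f^{\pm}(x_1)=c\pm\gamma\in[0,1]$ with $c\in\mathcal{S}_K$. This holds since $\gamma K\in\mathbb{N}$ and $\gamma\le1/2$, for instance with $c=\gamma$ (rounded appropriately). The whole configuration can then be shifted so that the two means are placed where we need them relative to $\sigma$.

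Next comes the choice of $\sigma$. Write $\mu^-=\mathbb{E}[f^-]$ and $\mu^+=\mu^-+2\kappa^\star$. We need $|\mu^--\sigma|\leq\epsc$ and $|\mu^+-\sigma|>\epsf$; for example $\sigma=\mu^--\epsc$ gives $|\mu^+-\sigma|=\epsc+2\kappa^\star>\epsf$. The main obstacle is realizing arbitrary target means while keeping every score in $[0,1]\cap\mathcal{S}_K$, since $\sigma$ must be a valid claim in $[0,1]$. My plan is to treat $\sigma$ as free and pick $\sigma:=\mu^--\epsc$ if this is $\geq0$, and otherwise use the reflected configuration, swapping the roles of $\pm$ and taking $\sigma=\mu^++\epsc$. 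The hypothesis $\epsc+2\kappa^\star\le1$ should guarantee that one of the two choices lies in $[0,1]$; if needed we adjust $c$ or add a mass-neutral padding point to slide both means. Once the instances are fixed, the coupling argument above closes the proof.
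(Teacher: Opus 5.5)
Your proposal is correct and is essentially the paper's proof: with $c=\gamma$ your three-point instance is exactly the paper's $f_{\Verifier}=(0,1,\gamma)$, $f^+=(1,1,2\gamma)$, $f^-=(0,0,0)$ on masses $(\rho,\rho,1-2\rho)$. The indistinguishability argument, in which the cheating prover simulates the honest prover on $f^+$, is also the same. One clean-up: here $\mu^-=0$, so your first choice $\sigma=-\epsc$ is invalid whenever $\epsc>0$. The reflected choice ($f^+$ completeness, $f^-$ soundness, $\sigma=\epsc+2\kappa^\star$) is the one the paper uses. It lies in $[0,1]$ directly by hypothesis and matches the roles in your first paragraph, so no adjustment of $c$ and no padding point is needed.
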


\begin{proof}
Suppose such a protocol exists, with designated honest prover
$\Prover_h$. Let $\mathcal{X}=\{b_0,b_1,u\}$ with $D(b_0)=D(b_1)=\rho$
and $D(u)=1-2\rho$, and define, listing values on $(b_0,b_1,u)$,
\[
    f_{\Verifier}:=(0,\;1,\;\gamma),
    \qquad
    f^+:=(1,\;1,\;2\gamma),
    \qquad
    f^-:=(0,\;0,\;0),
\]
all in $\mathcal{S}_K$ since $\gamma K\in\mathbb{N}$ and $2\gamma\leq1$.
The pair $(f^+,f_{\Verifier})$ has bounded $(\rho,\gamma)$-mismatch with
exceptional set $\{b_0\}$ (the discrepancies off it are $0$ on $b_1$ and
$\gamma$ on $u$), and $(f^-,f_{\Verifier})$ with exceptional set
$\{b_1\}$. The prover-side means are
$\mu^+=2\rho+(1-2\rho)\,2\gamma=2\kappa^\star$ and $\mu^-=0$. With
$\sigma:=\epsc+2\kappa^\star\in[0,1]$, the instance
$I^+:=(D,f^+,f_{\Verifier})$ is a completeness instance
($\Delta_{\mathrm{mis}}(I^+,\sigma)=\epsc$) and
$I^-:=(D,f^-,f_{\Verifier})$ is a soundness instance
($\Delta_{\mathrm{mis}}(I^-,\sigma)=\epsc+2\kappa^\star>\epsf$). On $I^-$, let the cheating prover run $\Prover_h$, answering its
$f_{\Prover}$-queries according to the explicit function $f^+$ (this is
legal since a cheating prover may know both instances). The two instances
share $D$ and $f_{\Verifier}$, which are all the verifier can access, so
the two interactions have identical transcript distributions and hence
equal acceptance probabilities. Completeness on $I^+$ makes this
probability at least $2/3$ and soundness on $I^-$ at most $1/3$, a
contradiction.
\end{proof}

Since Theorem~\ref{thm:band-upper-bound} assumes
$g>2\kappa=2\kappa^\star+2\rho\gamma$, its separation condition is exactly necessary when $\rho\gamma=0$, in particular for Boolean evaluations, and necessary up to the lower-order term $2\rho\gamma$ in general. We leave open the question of whether the remaining range $2\kappa^\star<g\leq2\kappa$ admits a sublinear protocol.

\subsection{Sample complexity lower bound}
\label{app:mismatch-sample-lb}

Our proof follows closely the proof of Theorem~\ref{thm:black-box-sample-lower-bound}: two
instances whose visible oracles differ only in the distribution, at
relative entropy $O({\eta^\star}^2)$ per sample (recall $\eta^\star =g-2\kappa^\star=\eta+2\rho\gamma$).

\begin{theorem}[Sample lower bound]
\label{thm:mismatch-sample-lb}
Let $K\geq1$, $\rho\in[0,1/2)$, and $\gamma\in[0,1/2)$ with
$\gamma K\in\mathbb{N}$, and suppose $\eta^\star>0$. Fix
$\delta\in(0,1/2)$, $\alpha\in(0,1/2)$, and a claim $\sigma\in[0,1]$ such
that
\[
    p_{\mathrm{Y}}:=\frac{\sigma+\epsc}{(1-2\rho)(1-2\gamma)}
    \qquad\text{and}\qquad
    p_{\mathrm{N}}:=p_{\mathrm{Y}}+\frac{2\eta^\star}{(1-2\rho)(1-2\gamma)}
\]
both lie in $[\alpha,1-\alpha]$. Every $(\epsc,\epsf)$-interactive proof
with respect to $\Delta_{\mathrm{mis}}$ for instances with bounded
$(\rho,\gamma)$-mismatch, with completeness and soundness errors at most
$\delta$, requires the verifier to draw at least
\[
    s\geq\frac{\alpha(1-\alpha)(1-2\rho)(1-2\gamma)^2(1-2\delta)^2}
              {2\,{\eta^\star}^2}
\]
samples from $D$ in the worst case, regardless of the number of rounds,
the communication, and the number of queries to $f_{\Verifier}$ and
$f_{\Prover}$. (Replacing every score $f$ by $1-f$ and $\sigma$ by
$1-\sigma$ covers claims near the other end of $[0,1]$.)
\end{theorem}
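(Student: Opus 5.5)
The plan is to reuse the two-point argument of Theorem~\ref{thm:black-box-sample-lower-bound}, now combined with the mismatch trick from Theorem~\ref{thm:mismatch-impossibility}. We build two instances with the same verifier-side function $f_{\Verifier}$ and the same exceptional atoms. Their distributions $D$ differ only in how the bulk mass splits between two points. The prover-side functions are pushed in opposite directions, as far as the mismatch promise allows. This shrinks the distributional difference the verifier must detect from order $g$ to order $\eta^\star$.

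\textbf{Construction.} Take $\mathcal{X}=\{b_0,b_1,u_0,u_1\}$. Set $D_p(b_0)=D_p(b_1)=\rho$, $D_p(u_1)=(1-2\rho)p$ and $D_p(u_0)=(1-2\rho)(1-p)$. Listing values on $(b_0,b_1,u_0,u_1)$, define
\[
f_{\Verifier}:=(0,1,\gamma,1-\gamma),\qquad f^{\mathrm{Y}}:=(0,0,0,1-2\gamma),\qquad f^{\mathrm{N}}:=(1,1,2\gamma,1).
\]
All values lie in $\mathcal{S}_K$ because $\gamma K\in\mathbb{N}$ and $2\gamma<1$.

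The pair $(f^{\mathrm{Y}},f_{\Verifier})$ has exceptional set $\{b_1\}$, and $(f^{\mathrm{N}},f_{\Verifier})$ has exceptional set $\{b_0\}$. Off these sets the discrepancies are $0$ or $\gamma$, so both pairs satisfy the bounded $(\rho,\gamma)$-mismatch promise.

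\textbf{Means.} A direct computation gives
\[
\mathbb{E}_{D_{p_{\mathrm{Y}}}}[f^{\mathrm{Y}}]=(1-2\rho)(1-2\gamma)p_{\mathrm{Y}}=\sigma+\epsc,
\]
so $I^{\mathrm{Y}}=(D_{p_{\mathrm{Y}}},f^{\mathrm{Y}},f_{\Verifier})$ is a completeness instance. Similarly,
\[
\mathbb{E}_{D_{p_{\mathrm{N}}}}[f^{\mathrm{N}}]=2\rho+(1-2\rho)\bigl(2\gamma+(1-2\gamma)p_{\mathrm{N}}\bigr)=2\kappa^\star+\sigma+\epsc+2\eta^\star .
\]
Its distance to $\sigma$ is therefore $\epsc+2g-2\kappa^\star=\epsf+\eta^\star>\epsf$. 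Hence $I^{\mathrm{N}}=(D_{p_{\mathrm{N}}},f^{\mathrm{N}},f_{\Verifier})$ is a soundness instance. I expect this step, choosing the functions so that both means land correctly with exactly the stated normalization of $p_{\mathrm{Y}}$ and $p_{\mathrm{N}}$, to be the main obstacle; the rest is routine.

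\textbf{Simulation and coupling.} On $I^{\mathrm{N}}$, the cheating prover runs the designated honest prover of $I^{\mathrm{Y}}$. It answers that prover's $f_{\Prover}$-queries with the explicit function $f^{\mathrm{Y}}$, and feeds it fresh samples from $D_{p_{\mathrm{Y}}}$ if the honest prover has sample access. As in Theorem~\ref{thm:black-box-sample-lower-bound}, we pre-draw the verifier's $s$ samples and couple all other randomness. The verifier's oracle $f_{\Verifier}$ is identical in the two instances. Its view is therefore the same randomized post-processing of $D_{p_{\mathrm{Y}}}^{\otimes s}$ or $D_{p_{\mathrm{N}}}^{\otimes s}$, and data processing bounds the view distance by $\TV(D_{p_{\mathrm{Y}}}^{\otimes s},D_{p_{\mathrm{N}}}^{\otimes s})$.

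\textbf{Divergence bound.} The two distributions agree on $b_0,b_1$ and differ only in the Bernoulli split of the bulk. Hence $\KL(D_{p_{\mathrm{Y}}}\|D_{p_{\mathrm{N}}})=(1-2\rho)\KL(\Ber(p_{\mathrm{Y}})\|\Ber(p_{\mathrm{N}}))$. Using the $\chi^2$ bound with $p_{\mathrm{N}}\in[\alpha,1-\alpha]$, this is at most
\[
\frac{(1-2\rho)(p_{\mathrm{N}}-p_{\mathrm{Y}})^2}{\alpha(1-\alpha)}=\frac{4{\eta^\star}^2}{\alpha(1-\alpha)(1-2\rho)(1-2\gamma)^2}.
\]
Tensorization and Pinsker then bound the view distance by
\[
\eta^\star\sqrt{\frac{2s}{\alpha(1-\alpha)(1-2\rho)(1-2\gamma)^2}}.
\]
Completeness and soundness force this distance to be at least $1-2\delta$, and rearranging gives the stated bound on $s$. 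The complementary case is obtained by the symmetry $f\mapsto 1-f$, $\sigma\mapsto 1-\sigma$.
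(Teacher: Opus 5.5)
Your proposal is correct and follows the paper's proof essentially line by line. It uses the same four-point domain and the same $f_{\Verifier}$, $f^{\mathrm{Y}}$, $f^{\mathrm{N}}$ with exceptional sets $\{b_1\}$ and $\{b_0\}$, the same simulating cheating prover with the coupling and data-processing step, and the same $(1-2\rho)\KL(\Ber(p_{\mathrm{Y}})\|\Ber(p_{\mathrm{N}}))$ bound followed by Pinsker. The only difference is cosmetic: you also feed hypothetical honest-prover samples from $D_{p_{\mathrm{Y}}}$, which the paper does not need because the honest prover has no sampling access in this model.
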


\begin{proof}
Let $\mathcal{X}=\{b_0,b_1,x_0,x_1\}$ and, for $p\in[0,1]$, let $D_p$
have $D_p(b_0)=D_p(b_1)=\rho$, $D_p(x_0)=(1-2\rho)(1-p)$, and
$D_p(x_1)=(1-2\rho)p$. Define the score functions
\[
    f_{\Verifier}:=(0,\;1,\;\gamma,\;1-\gamma),
    \qquad
    f^{\mathrm{Y}}:=(0,\;0,\;0,\;1-2\gamma),
    \qquad
    f^{\mathrm{N}}:=(1,\;1,\;2\gamma,\;1),
\]
all in $\mathcal{S}_K$, where we index the values on $(b_0,b_1,x_0,x_1)$. With respect to every $D_p$, the pair
$(f^{\mathrm{Y}},f_{\Verifier})$ has bounded $(\rho,\gamma)$-mismatch
with exceptional set $\{b_1\}$, and $(f^{\mathrm{N}},f_{\Verifier})$
with exceptional set $\{b_0\}$. Writing
$\mu_{\Verifier}(p):=\rho+(1-2\rho)\bigl(\gamma+p(1-2\gamma)\bigr)$ for
the verifier-side mean, each function shifts it by $\rho$ on its
exceptional set and by $\gamma$ on the remaining mass $1-2\rho$, so
$\mathbb{E}_{D_p}[f^{\mathrm{Y}}]=\mu_{\Verifier}(p)-\kappa^\star$ and
$\mathbb{E}_{D_p}[f^{\mathrm{N}}]=\mu_{\Verifier}(p)+\kappa^\star$.

Let $I_{\mathrm{Y}}:=(D_{p_{\mathrm{Y}}},f^{\mathrm{Y}},f_{\Verifier})$
and $I_{\mathrm{N}}:=(D_{p_{\mathrm{N}}},f^{\mathrm{N}},f_{\Verifier})$.
The definition of $p_{\mathrm{Y}}$ gives
$\mathbb{E}_{D_{p_{\mathrm{Y}}}}[f^{\mathrm{Y}}]=\sigma+\epsc$, so
$I_{\mathrm{Y}}$ is a valid completeness instance; the definition of
$p_{\mathrm{N}}$ and $g=2\kappa^\star+\eta^\star$ give
$\mathbb{E}_{D_{p_{\mathrm{N}}}}[f^{\mathrm{N}}]
=\sigma+\epsc+2\eta^\star+2\kappa^\star=\sigma+\epsf+\eta^\star$, so
$I_{\mathrm{N}}$ is a valid soundness instance. The two instances share
$f_{\Verifier}$ and differ only in the distribution.

On $I_{\mathrm{N}}$, let the cheating prover run the designated honest
prover with fresh coins, answering its $f_{\Prover}$-queries according
to $f^{\mathrm{Y}}$; since the honest prover has no sampling access,
this reproduces its behavior on $I_{\mathrm{Y}}$ exactly. The coupling
and data-processing argument of
Theorem~\ref{thm:black-box-sample-lower-bound} now applies verbatim: if
the verifier draws at most $s$ samples, its views in the two executions
are post-processings of $D_{p_{\mathrm{Y}}}^{\otimes s}$ and
$D_{p_{\mathrm{N}}}^{\otimes s}$, so their total variation distance,
which is at least $1-2\delta$, is at most
$\sqrt{s\,\KL(D_{p_{\mathrm{Y}}}\|D_{p_{\mathrm{N}}})/2}$. Since the two
distributions agree on $b_0$ and $b_1$,
\[
    \KL\bigl(D_{p_{\mathrm{Y}}}\,\big\|\,D_{p_{\mathrm{N}}}\bigr)
    =(1-2\rho)\,\KL\bigl(\Ber(p_{\mathrm{Y}})\,\big\|\,\Ber(p_{\mathrm{N}})\bigr)
    \leq(1-2\rho)\frac{(p_{\mathrm{N}}-p_{\mathrm{Y}})^2}
                      {p_{\mathrm{N}}(1-p_{\mathrm{N}})}
    \leq\frac{4{\eta^\star}^2}{(1-2\rho)(1-2\gamma)^2\alpha(1-\alpha)},
\]
and rearranging gives the claim.
\end{proof}

For fixed $\delta$ and $\alpha$ this is $s=\Omega(1/{\eta^\star}^2)$,
against $m=O(\ln(1/\delta)/\eta^2)$ in
Theorem~\ref{thm:band-upper-bound}: the sample complexity is
$\Theta(1/\eta^2)$ when $\rho\gamma=0$, and matches up to the
replacement of $\eta$ by $\eta^\star$ in general. As in
Remark~\ref{rem:black-box-error-dependence}, the Bretagnolle--Huber
inequality shows that the $\ln(1/\delta)$ dependence is necessary as
well. For $\rho=\gamma=0$ the statement reduces to
Theorem~\ref{thm:black-box-sample-lower-bound}.

\subsection{Verifier query lower bounds}
\label{app:mismatch-query-lb}

There are two orthogonal reasons that force verifier queries in
Theorem~\ref{thm:band-upper-bound}. First, the
query lower bounds of Section~\ref{sec:hamming_weight} apply, since the evaluator-disagreement model contains the exact model as a special case (set $f_{\Prover}=f_{\Verifier}$). Second, when the exceptional mass $\rho$ is comparable to $g$, the verifier's task boils down to estimating a band-violation rate of order $\rho$ to additive precision of order $\eta$, which costs $\Omega(\rho/\eta^2)$ queries.

\begin{proposition}[Transfer from the exact model]
\label{prop:mismatch-query-reduction}
Let $K\geq1$, $\rho,\gamma\in[0,1]$, $0\leq\epsc<\epsf\leq1/4$, and
$1/g^2\leq n$. Every $(\epsc,\epsf)$-interactive proof with respect to
$\Delta_{\mathrm{mis}}$ for instances with bounded
$(\rho,\gamma)$-mismatch over domains of size $n$ satisfies
$Q_\Verifier=\Omega(1/g)$ and $Q_\Verifier+Q_\Prover=\Omega(1/g^2)$,
where $Q_\Verifier$ counts queries to $f_{\Verifier}$ and $Q_\Prover$
the honest prover's queries to $f_{\Prover}$.
\end{proposition}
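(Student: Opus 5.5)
The plan is a reduction: any protocol for the mismatch problem also solves exact Hamming-weight verification, so Proposition~\ref{prop:hw-query-lower-bounds} transfers. Suppose $(\Verifier,\Prover_h)$ is an $(\epsc,\epsf)$-interactive proof with respect to $\Delta_{\mathrm{mis}}$ for instances with bounded $(\rho,\gamma)$-mismatch over domains of size $n$. Given a string $x\in\{0,1\}^n$ and a claimed relative weight $\sigma$, we build the instance
\[
(D,f_{\Prover},f_{\Verifier}) := (U_{[n]},x,x),
\]
where $x$ is viewed as a Boolean function $[n]\to\{0,1\}\subseteq\mathcal{S}_K$. Since $f_{\Prover}=f_{\Verifier}$, this pair has bounded $(0,0)$-mismatch, hence bounded $(\rho,\gamma)$-mismatch for every $\rho,\gamma\in[0,1]$. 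Moreover $\mu_{\Prover}=\wt(x)$, so $\Delta_{\mathrm{mis}}((D,x,x),\sigma)=|\wt(x)-\sigma|$. This is exactly the Hamming-weight distance used in Proposition~\ref{prop:hw-query-lower-bounds}. Completeness and soundness instances therefore correspond one-to-one.

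Next we define a Hamming-weight IPP on $x$. The new verifier runs $\Verifier$ and answers each $f_{\Verifier}$-query at $i$ by querying $x_i$. The new honest prover runs $\Prover_h$ and answers each $f_{\Prover}$-query the same way. The remaining issue is the sampling oracle for $D=U_{[n]}$. Sampling the uniform distribution needs no oracle queries to $x$, so the simulating party draws a uniform index from $[n]$ with its own coins. Query counts are preserved exactly: $Q_\Verifier$ becomes the number of queries to $x$ by the new verifier, and $Q_\Prover$ the number by the new prover.

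The step that needs care is soundness. A cheating prover $\Prover^*$ against the simulated Hamming-weight protocol on $x$ is also a cheating prover against the original protocol on $(U_{[n]},x,x)$, which is a legal instance. Indeed, soundness there quantifies over all prover strategies, including ones that know the instance completely. The verifier's view is identically distributed in the two executions, because its sample answers are uniform indices in both. Hence soundness transfers. For completeness, the honest prover's behavior is reproduced exactly, so completeness transfers as well.

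Finally, we apply Proposition~\ref{prop:hw-query-lower-bounds}, using the hypotheses $\epsf\le 1/4$ and $1/g^2\le n$, to the resulting Hamming-weight IPP. This yields $Q_\Verifier=\Omega(1/g)$ and $Q_\Verifier+Q_\Prover=\Omega(1/g^2)$. There is one subtlety to check. The lower-bound proof of Proposition~\ref{prop:hw-query-lower-bounds} counts only oracle queries to $x$, so the verifier's extra internal randomness used to simulate samples is harmless. Its indistinguishability arguments (the coupling with $x^I$, and the KL bound for the tester) go through unchanged when both parties have free access to uniform indices.
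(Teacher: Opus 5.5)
Your proposal is correct and follows the paper's intended argument. The paper omits this proof, remarking only that the evaluator-disagreement model contains the exact model as the special case $f_{\Prover}=f_{\Verifier}$, so the bounds of Proposition~\ref{prop:hw-query-lower-bounds} transfer. Your embedding $(U_{[n]},x,x)$ makes this explicit: the uniform samples are simulated with the parties' own coins, and oracle queries map one-to-one, so the query counts carry over unchanged.
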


We omit the straightforward proof. The second bound shows that when $\rho$ is a constant fraction of $g$,
so that $\eta\ll g$, the $\rho/\eta^2$ term in
Theorem~\ref{thm:band-upper-bound} is unavoidable. The bound uses Boolean instances, hence it is valid for every $\gamma$ and $K$. We write
$\eta_0:=g-2\rho$, which equals $\eta$ when $\gamma=0$ and satisfies
$\eta\leq\eta^\star\leq\eta_0$ in general.

\begin{theorem}[Disagreement-estimation lower bound]
\label{thm:mismatch-disagreement-lb}
There is a universal constant $c>0$ such that the following holds. Let
$K\geq1$, $\gamma\in[0,1]$, $0<\rho<g/2$, $0\leq\epsc<\epsf\leq1/16$,
$\sigma\in[1/2,3/4]$, and $n\geq1/\eta_0^2$. Every
$(\epsc,\epsf)$-interactive proof with respect to
$\Delta_{\mathrm{mis}}$ for instances with bounded
$(\rho,\gamma)$-mismatch over domains of size $n$, with completeness and
soundness errors at most $1/3$, satisfies
$Q_\Verifier\geq c\,\rho/\eta_0^2$.
\end{theorem}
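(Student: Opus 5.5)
We construct two instance families that share the distribution $D$ (uniform over $[n]$, so sampling gives the verifier nothing) and the prover-side function, but whose verifier-side functions differ on a hidden set of roughly $\rho n$ points. The cheating prover can then make the two families look identical, except through the verifier's own queries to $f_{\Verifier}$. Distinguishing them amounts to estimating the density of a planted set of density about $\rho$ to additive precision about $\eta_0$. That costs $\Omega(\rho/\eta_0^2)$ queries, by the usual Bernoulli KL computation $\KL(\Ber(\rho)\,\|\,\Ber(\rho+\eta_0))=\Theta(\eta_0^2/\rho)$ in the regime $\eta_0\lesssim\rho$. When $\eta_0\gtrsim\rho$, the claimed bound $c\rho/\eta_0^2\leq c/\eta_0\leq c/g$ up to constants, so that case follows from Proposition~\ref{prop:mismatch-query-reduction}. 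I would split into these two cases.

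\textbf{Instances (main case $\eta_0\leq\rho$, all Boolean).} Take $\sigma$ as given and $\mathcal{X}=[n]$ with $D$ uniform. In the completeness family, $f^{\mathrm{Y}}_{\Prover}$ has mean $\sigma+\epsc$, and $f_{\Verifier}$ equals $f^{\mathrm{Y}}_{\Prover}$ except that a uniformly random set $B_{\mathrm{Y}}$ of density $\rho_{\mathrm{Y}}:=\rho-\eta_0/4$ has its bits flipped from $1$ to $0$. In the soundness family, $f^{\mathrm{N}}_{\Prover}$ has mean $\sigma+\epsc+\rho_{\mathrm{Y}}+\rho$, and $f_{\Verifier}$ agrees with it except on a random set of density $\rho$ flipped from $1$ to $0$.

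I would then rebalance the counts so that the marginal distribution of $f_{\Verifier}$ alone is the same in both families, a uniformly random string of a fixed weight $w$. In the no family I would additionally plant $\rho$-mass of points where the prover's value is $0$ and the verifier's is $1$. The intent is that the parameters do the following:
\begin{itemize}
\item in the yes family, $\mu_{\Prover}-\mu_{\Verifier}=+\rho_{\mathrm{Y}}$;
\item in the no family, $\mu_{\Prover}-\mu_{\Verifier}=-\rho$;
\item the resulting means of $f_{\Prover}$ differ by $\rho+\rho_{\mathrm{Y}}$, and the soundness instance has $|\mu_{\Prover}-\sigma|\geq\epsc+2\rho-\eta_0/4+\dots>\epsf=\epsc+2\rho+\eta_0$.
\end{itemize}
Getting this last inequality requires choosing the sign pattern so that the shifts add. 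I expect the cleanest realization to be the three-region construction of Theorem~\ref{thm:mismatch-impossibility}, lifted to $[n]$ with random placement. I would then perturb the exceptional-set mass from $\rho$ (no case) to $\rho-\Theta(\eta_0)$ (yes case), so that the yes instance lands at distance $\epsc$ and the no instance just beyond $\epsf$. The constraints $\epsf\leq 1/16$ and $\sigma\in[1/2,3/4]$ keep all densities inside $[0,1]$ with room to spare.

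\textbf{Indistinguishability.} The honest prover on the yes family queries only $f_{\Prover}$. The cheating prover on the no family simulates it using a freshly sampled yes-type $f_{\Prover}$ that is \emph{consistent with the pair structure}. Here lies the main obstacle: the verifier's queries to $f_{\Verifier}$ are correlated with the prover's messages through the hidden sets. The device I would try first is to randomize the instance jointly: sample the set of positions and the labels for $(f_{\Prover},f_{\Verifier})$ pairs so that, conditioned on $f_{\Prover}$ (the cheating prover can choose the same $f_{\Prover}$ in both families), $f_{\Verifier}$ is a random relabeling of a known pattern. The verifier's query answers are then, in the yes family, i.i.d.-like $\Ber(\rho_{\mathrm{Y}})$ flags on the $1$-region of $f_{\Prover}$, and in the no family $\Ber(\rho)$ flags. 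The prover's messages become a function of $f_{\Prover}$ and coins only, identical in both worlds.

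The chain rule for KL over adaptive queries without replacement then applies as in Proposition~\ref{prop:hw-query-lower-bounds}, using $n\geq 1/\eta_0^2$ and $q\ll n$ to keep the conditional probabilities within constant factors of $\rho_{\mathrm{Y}}$ and $\rho$. This gives total KL $O(q\eta_0^2/\rho)$, and Pinsker plus the $1/3$ acceptance gap yield $q\geq c\rho/\eta_0^2$. The delicate point, which I would check carefully, is that making $f_{\Prover}$ identical across families forces the two means of $f_{\Prover}$ to be equal. That contradicts the requirement that they differ by more than $g$. So the design must instead keep the \emph{visible-to-prover-simulation} object fixed while $D$ also shifts, or accept that the cheating prover simulates the yes-world $f_{\Prover}$ even though the true no-world $f_{\Prover}$ differs, which is legal because the cheater need not query the true oracle. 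I would adopt the latter: the cheating prover answers its simulated queries with the yes-world $f^{\mathrm{Y}}_{\Prover}$, and only the verifier's oracle $f_{\Verifier}$ and $D$ must be matched, up to the $\rho$-versus-$\rho_{\mathrm{Y}}$ planted density.
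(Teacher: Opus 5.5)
Your overall architecture (uniform $D$, Boolean instances, a cheater running the honest prover on a fabricated oracle, KL chain rule over adaptive queries without replacement, Pinsker) is the right one. However, there is a genuine gap exactly where you write ``$\dots$'': none of your parameter choices yields a valid soundness instance.

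For Boolean instances whose disagreement is confined to a set of mass at most $\rho$ (what you use, and what must work when $\gamma=0$), $|\mu_{\Prover}-\mu_{\Verifier}|\leq\rho$ in each world. So if you rebalance so that $f_{\Verifier}$ has the same weight in both families, the prover-side means differ by at most $2\rho<g=2\rho+\eta_0$. Your own count $\rho+\rho_{\mathrm{Y}}=2\rho-\eta_0/4$ shows this.

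Your fallback fails the same way. Suppose the cheater's simulated oracle is yes-distributed and the no-world $f_{\Verifier}$ is obtained from it by zeroing a planted set of density $\rho$. Then $\mu^{\mathrm{N}}_{\Verifier}=\mu^{\mathrm{Y}}_{\Prover}-\rho$. The true no-world $f_{\Prover}$, being $\rho$-close to $f_{\Verifier}$, therefore has mean in $[\mu^{\mathrm{Y}}_{\Prover}-2\rho,\mu^{\mathrm{Y}}_{\Prover}]$.

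Lowering the yes-world exceptional mass to $\rho-\Theta(\eta_0)$, as you propose via the three-region construction of Theorem~\ref{thm:mismatch-impossibility}, only shrinks the separation further. To clear $g$, the verifier-side function must shift by $\Theta(\eta_0)$ in mean between the worlds. The proof must then show that this shift is visible only at rate $O(\eta_0^2/\rho)$ per query.

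The missing idea is that the set where the cheater's \emph{simulated} oracle disagrees with $f_{\Verifier}$ is not the exceptional set of any instance, so it is not capped at $\rho$. The paper's construction is as follows.
\begin{itemize}
\item Yes world, at the bottom of the completeness window: $f_{\Prover}=t$ of weight $\tau=\sigma-\epsc$, and $f_{\Verifier}$ is $t$ with a random $\rho n$-subset of its support zeroed.
\item No world: $f_{\Verifier}$ has weight $\sigma-\epsc-\rho-2\eta_0$, and the true $f_{\Prover}$ lies \emph{below} it (a random $\rho n$-subset of $\operatorname{supp}(f_{\Verifier})$ zeroed, giving mean $\sigma-\epsf-\eta_0$).
\item Cheater: runs the honest prover on $\widehat g=f_{\Verifier}\vee\mathbf{1}_{\widehat S}$, with $\widehat S$ a random set of $(\rho+2\eta_0)n$ points outside $\operatorname{supp}(f_{\Verifier})$.
\end{itemize}
By permutation invariance, in both worlds the prover's oracle $w$ is uniform of weight $\tau n$. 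In both worlds $f_{\Verifier}$ is $w$ with a uniform subset of its support zeroed, of size $\rho n$ versus $(\rho+2\eta_0)n$. The planted densities thus go the opposite way from your $\rho_{\mathrm{Y}}<\rho$, and differ by $2\eta_0$.

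The informative answers have zero-probabilities $u\approx\rho/\tau$ and $v\approx(\rho+2\eta_0)/\tau$, with $v-u=O(\eta_0)$ and $v(1-v)=\Omega(\rho)$; this is where $\epsf\leq1/16$ and $\sigma\in[1/2,3/4]$ enter. The chi-square bound then gives $O(\eta_0^2/\rho)$ per query in every regime, so your case split on $\eta_0$ versus $\rho$ is unnecessary.

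Note also a problem with your orientation (yes world at $\sigma+\epsc$ with $f_{\Prover}\geq f_{\Verifier}$, cheater's oracle yes-distributed). There the no-world prover mean cannot rise above $\sigma+\epsc+\rho<\sigma+\epsf$, so it must pass below $\sigma-\epsf$. That is a gap of $\epsc+\epsf$ rather than $g$, and the argument would only yield a bound of order $\rho/(\epsc+\eta_0)^2$.
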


\begin{proof}
Set $\theta:=\sigma-\rho-\epsc$, $\tau:=\theta+\rho=\sigma-\epsc$, and
$\theta':=\theta-2\eta_0$. The hypotheses give $\rho<1/32$ and
$\epsc,\eta_0\leq1/16$, hence
\begin{equation}
\label{eq:disagreement-lb-ranges}
    \tfrac{13}{32}\leq\theta\leq\tau\leq\tfrac{25}{32},
    \qquad
    \theta'\geq\tfrac{9}{32},
    \qquad
    \rho+2\eta_0\leq\tfrac{5}{32}.
\end{equation}
Let $D$ be uniform over $[n]$ (note that the verifier's samples are then uniform indices that carry no information about the instance).

\paragraph{Completeness ensemble.} Draw $t\in\{0,1\}^n$ uniformly among the
strings of weight $\tau n$ and a uniformly random
$S\subseteq\operatorname{supp}(t)$ with $|S|=\rho n$; set
$f_{\Prover}:=t$ and $f_{\Verifier}:=t\wedge\lnot\mathbf{1}_S$. The two
functions differ exactly on $S$, with $D(S)=\rho$, so the instance is
valid, and $\mu_{\Prover}=\tau=\sigma-\epsc$, so it is a completeness
instance. The prover is the designated honest prover with oracle $t$.

\paragraph{Soundness ensemble.} Draw $f_{\Verifier}$ uniformly among the
strings of weight $\theta'n$ and a uniformly random
$S'\subseteq\operatorname{supp}(f_{\Verifier})$ with $|S'|=\rho n$; set
$f_{\Prover}:=f_{\Verifier}\wedge\lnot\mathbf{1}_{S'}$. The instance is
valid, and
$\mu_{\Prover}=\theta'-\rho=\sigma-\epsc-2\eta_0-2\rho
=\sigma-\epsf-\eta_0$, so it is a soundness instance. The cheating
prover $\Prover^*$ draws a uniformly random
$\widehat S\subseteq[n]\setminus\operatorname{supp}(f_{\Verifier})$ with
$|\widehat S|=(\rho+2\eta_0)n$, which is feasible
by~\eqref{eq:disagreement-lb-ranges}, sets
$\widehat g:=f_{\Verifier}\vee\mathbf{1}_{\widehat S}$, and runs the
designated honest prover algorithm with oracle $\widehat g$, ignoring
$f_{\Prover}$.

\paragraph{Common representation.} In both ensembles, the string given to the
prover algorithm ($t$, respectively $\widehat g$) is uniformly
distributed among the strings of weight $\tau n$, since
$\theta'+\rho+2\eta_0=\tau$, and, conditionally on it, $f_{\Verifier}$ is
obtained by choosing a uniformly random subset $R$ of its support and
setting those coordinates to $0$, with $|R|=\rho n$ in the completeness
ensemble and $|R|=(\rho+2\eta_0)n$ in the soundness ensemble. Indeed,
both constructions are invariant under permutations of the positions,
and both produce a pair $(w,w')$ with $w'\leq w$ coordinatewise and
prescribed weights; since any two such pairs are related by a
permutation, each construction yields the uniform distribution over
these pairs, and the two descriptions agree. The true prover-side
function of the soundness instance affects neither the prover's messages
nor the query answers. Hence, in either world, the verifier's view is
generated by drawing $w$ uniformly among the strings of weight $\tau n$
and $R$ as above, running the interaction with the prover algorithm on
oracle $w$, and answering each query to $f_{\Verifier}$ at $i$ by
$w_i\cdot\mathbf{1}[i\notin R]$.

\paragraph{Divergence.} Reveal $w$ and all random tapes to the distinguisher;
their joint law is identical in the two worlds. The transcript is then a
deterministic function of the successive answers to the verifier's
$f_{\Verifier}$-queries, and only the first query at each position of
$\operatorname{supp}(w)$ carries information, since queries outside the
support return $0$ in both worlds. Suppose $Q_\Verifier<\rho n/2$. For
an informative query made after $j$ earlier ones of which $r$ returned
$0$, the probability that the answer is $0$ is
$u=(\rho n-r)/(\tau n-j)$ in the completeness world and
$v=((\rho+2\eta_0)n-r)/(\tau n-j)$ in the soundness world, conditionally
on any past and however the position was chosen. Using
$j,r<\rho n/2\leq n/64$ and~\eqref{eq:disagreement-lb-ranges},
\[
    \tau n-j\geq\tfrac38n,
    \qquad
    \tfrac{\rho}{2}\leq u,
    \qquad
    \tfrac{\rho}{2}\leq v\leq\tfrac83(\rho+2\eta_0)\leq\tfrac12,
    \qquad
    v-u=\frac{2\eta_0n}{\tau n-j}\leq\tfrac{16}{3}\eta_0,
\]
so $v(1-v)\geq\rho/4$ and
$\KL(\Ber(u)\|\Ber(v))\leq(u-v)^2/(v(1-v))\leq114\,\eta_0^2/\rho$. By
the chain rule for relative entropy, the divergence between the two
complete views is at most $114\,Q_\Verifier\,\eta_0^2/\rho$.

\paragraph{Conclusion.} Every completeness instance is accepted with
probability at least $2/3$ and every soundness instance, against
$\Prover^*$, with probability at most $1/3$, so the two views are at
total variation distance at least $1/3$, and Pinsker's inequality gives
$114\,Q_\Verifier\,\eta_0^2/\rho\geq2/9$, that is,
$Q_\Verifier\geq\rho/(513\,\eta_0^2)$. If instead
$Q_\Verifier\geq\rho n/2$, then $n\geq1/\eta_0^2$ gives
$Q_\Verifier\geq\rho/(2\eta_0^2)$. The theorem holds with $c=1/513$.
\end{proof}

\subsection{Bottom line}
\label{app:mismatch-bottom-line}

\begin{corollary}[Optimality of Protocol $\Pi_{\mathrm{band}}^{\mathrm{BB}}$]
\label{cor:query-landscape}
Fix a constant error probability and restrict to the interior parameter
regimes of Theorems~\ref{thm:mismatch-sample-lb}
and~\ref{thm:mismatch-disagreement-lb} and
Proposition~\ref{prop:mismatch-query-reduction}. Write
$\eta_0=g-2\rho=\eta+2(1-\rho)\gamma$ and
$\eta^\star=\eta+2\rho\gamma$. For Protocol
$\Pi_{\mathrm{band}}^{\mathrm{BB}}$ of Theorem~\ref{thm:band-upper-bound}:
\begin{enumerate}
    \item the separation condition $\eta>0$ is necessary up to the
          additive term $2\rho\gamma$, and exactly necessary when
          $\rho\gamma=0$;
    \item the verifier sample complexity $\Theta(1/\eta^2)$ is optimal
          up to constants when $\rho\gamma=0$, and up to the replacement
          of $\eta$ by $\eta^\star$ in general;
    \item if $\gamma=0$, in particular for Boolean evaluations, the
          verifier query complexity
          $\Theta\bigl((\rho+\eta)/\eta^2\bigr)=\Theta(g/\eta^2)$ is
          optimal up to constants, and so is that of Protocol
          $\Pi_{\mathrm{mis}}^{\mathrm{BB}}$, which coincides with it in
          this case;
    \item in general,
          $\Omega(\rho/\eta_0^2+1/g)\leq Q_\Verifier
          \leq O\bigl(\rho/\eta^2+1/\eta\bigr)$.
\end{enumerate}
Thus, for Boolean evaluations, the protocol is optimal in all three
quantities. For $\gamma>0$, the threshold and the sample complexity are
optimal up to the lower-order term $2\rho\gamma$, while the query lower
bounds may be loose by factors polynomial in $\gamma/\eta$; we do not
pursue this here.
\end{corollary}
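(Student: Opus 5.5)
The corollary is an assembly of results already proved, so the plan is to verify its four items one at a time against Theorem~\ref{thm:band-upper-bound}, Theorem~\ref{thm:mismatch-impossibility}, Theorem~\ref{thm:mismatch-sample-lb}, Proposition~\ref{prop:mismatch-query-reduction} and Theorem~\ref{thm:mismatch-disagreement-lb}. Throughout, the error probability is a fixed constant, so every $\ln(1/\delta)$ factor becomes a constant.

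\textbf{Items 1 and 2.} Theorem~\ref{thm:mismatch-impossibility} rules out any protocol when $g<2\kappa^\star=2\kappa-2\rho\gamma$. Hence the upper-bound condition $g>2\kappa$ is matched up to the additive gap $2\rho\gamma$, and exactly when $\rho\gamma=0$, since then $\kappa^\star=\kappa$. For samples, Theorem~\ref{thm:mismatch-sample-lb} gives $\Omega(1/{\eta^\star}^2)$. Its constants $(1-2\rho)(1-2\gamma)^2$ are bounded below in the interior regime, because $\eta>0$ forces $\rho,\gamma<1/2$ as shown at the start of the proof of Theorem~\ref{thm:band-upper-bound}; I will assume the regime keeps them bounded away from $1/2$. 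Since $\eta^\star=\eta$ when $\rho\gamma=0$, this matches the $O(1/\eta^2)$ upper bound.

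\textbf{Item 3 ($\gamma=0$).} In this case $\kappa=\rho$, so $g=2\rho+\eta$ and $\rho+\eta=\Theta(g)$. Thus $q_{\mathrm b}=\Theta((\rho+\eta)/\eta^2)=\Theta(g/\eta^2)$, and $q$ of Theorem~\ref{thm:corrupted-query-upper-bound} is $\Theta(g/\eta^2)$ as well, so the two protocols coincide in complexity. For the lower bound I split into two cases.
\begin{itemize}
\item If $\rho\geq\eta$, then $\eta_0=\eta$, and Theorem~\ref{thm:mismatch-disagreement-lb} gives $\Omega(\rho/\eta^2)=\Omega(g/\eta^2)$, using $g=2\rho+\eta\leq 3\rho$.
\item If $\rho<\eta$, then $g=\Theta(\eta)$, so $g/\eta^2=\Theta(1/g)$, which is the $\Omega(1/g)$ bound of Proposition~\ref{prop:mismatch-query-reduction}.
\end{itemize}
The hypotheses of Theorem~\ref{thm:mismatch-disagreement-lb} include $\rho>0$; in the case $\rho=0$ only the second case is needed.

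\textbf{Item 4 (general $\gamma$).} For the lower bound, combine Theorem~\ref{thm:mismatch-disagreement-lb} with Proposition~\ref{prop:mismatch-query-reduction} via $\max\{a,b\}\geq(a+b)/2$, giving $\Omega(\rho/\eta_0^2+1/g)$. For the upper bound, $q_{\mathrm b}=O((\rho+\eta)/\eta^2)=O(\rho/\eta^2+1/\eta)$ directly.

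\textbf{Final summary.} The closing statements follow from items 1–3 with $\gamma=0$ for Boolean evaluations ($K=1$). The only delicate point is checking that the interior-regime hypotheses of the three lower bounds ($\epsf\le 1/16$, $\sigma\in[1/2,3/4]$, $n$ large, $\alpha$-interior means) can be satisfied simultaneously. That is exactly what the phrase ``restrict to the interior parameter regimes'' assumes, so I will state it as a standing hypothesis rather than prove it.
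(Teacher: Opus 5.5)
Your proposal is correct and follows the paper's argument. Items 1, 2 and 4 are direct assemblies of Theorems~\ref{thm:mismatch-impossibility}, \ref{thm:mismatch-sample-lb}, \ref{thm:mismatch-disagreement-lb} and Proposition~\ref{prop:mismatch-query-reduction}. For item 3 the paper uses the same two-case split as you (at $\rho=\eta/2$ rather than $\rho=\eta$) to show $g/\eta^2\leq 4\max\{1/g,\rho/\eta^2\}$. In item 2 you explicitly assume that $(1-2\rho)(1-2\gamma)$ stays bounded away from $0$; the paper leaves this implicit in ``interior regime'', and your assumption is the right way to make ``up to constants'' precise.
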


Item 3 stems from the fact that $\gamma=0$ gives $g=2\rho+\eta$, hence
\[
    \frac{\rho+\eta}{\eta^2}
    \leq\frac{g}{\eta^2}
    =\frac{2\rho}{\eta^2}+\frac{1}{\eta}
    \leq4\max\left\{\frac1g,\frac{\rho}{\eta^2}\right\},
\]
where the last inequality uses that either $\rho\leq\eta/2$, so that
$g\leq2\eta$ and $1/\eta\leq2/g$, or $\rho>\eta/2$, so that
$1/\eta<2\rho/\eta^2$.